% This is samplepaper.tex, a sample chapter demonstrating the
% LLNCS macro package for Springer Computer Science proceedings;
% Version 2.20 of 2017/10/04
%
\documentclass[runningheads]{llncs}
%

% Custom Libraries %%%%%%%%%%%%%%%%%%%%%%%%%%%%%%%%%%%%%%%%%%%%%%%%
\usepackage{amssymb}
\usepackage{mathrsfs,amsmath}
\usepackage{amsmath}
\usepackage{bbm}
\usepackage{dsfont}
\usepackage{listing}
\usepackage{hyperref}
\usepackage{color}
\usepackage[ruled, vlined,linesnumbered, noend]{algorithm2e}
\usepackage[acronym]{glossaries}
\usepackage[htt]{hyphenat} % enable texttt to hyphenate and not run off the margins. 
\usepackage{subcaption} % for narrow side-by-side algorithms... maybe? 

\newtheorem{observation}{Observation}
\newtheorem{assumption}{Assumption}
\newtheorem{fact}{Fact}

\newacronym{offline}{\texttt{$k$-Max-Matching}}{}
\newacronym{offline:min}{\texttt{$k$-Min-Matching}}{}
\newacronym{online:max}{\texttt{Online-$(k,d)$-Max-Matching}}{} %\texttt{$(k,d)$-Ridesharing}
\newacronym{online:min}{\texttt{Online-$(k,d)$-Min-Matching}}{}

\newcommand{\offlinemax}{\texttt{$k$-Max-Matching}}
\newcommand{\offlinemin}{\texttt{$k$-Min-Matching}}
\newcommand{\onlinemax}{\texttt{Online-$(k,d)$-Max-Matching}}
\newcommand{\onlinemin}{\texttt{Online-$(k,d)$-Min-Matching}}

\newif \ifmargincomments % margin comments
\margincommentstrue % enable margin comments

\ifmargincomments

\else

\fi

\newif \ifappendix
\appendixtrue
%%%%%%%%%%%%%%%%%%%%%%%%%%%%%%%%%%%%%%%%%%%%%%%%%%%%%%%%%%%%

\usepackage{graphicx}
% Used for displaying a sample figure. If possible, figure files should
% be included in EPS format.
%
% If you use the hyperref package, please uncomment the following line
% to display URLs in blue roman font according to Springer's eBook style:
% \renewcommand\UrlFont{\color{blue}\rmfamily}

\graphicspath{ } % folder address containing all figures.
% macros

% parentheses, brackets, tables 
\newcommand{\abs}[1]{\left| #1 \right|}
\newcommand{\bigpar}[1]{\left( #1 \right)}
\newcommand{\bigbra}[1]{\left[ #1 \right]}
\newcommand{\bigbrace}[1]{\left\{ #1 \right\}}

\newcommand{\casewise}[1]{\left\{ #1 \right.}
 
\newcommand{\norm}[1]{\left| \left| #1 \right| \right|}

% shorthands for relations

% highlighting 

% shorthands for mathcal 
\newcommand{\cA}{\mathcal{A}}

\newcommand{\cE}{\mathcal{E}}

\newcommand{\cG}{\mathcal{G}}

\newcommand{\cM}{\mathcal{M}}

\newcommand{\cS}{\mathcal{S}}

\newcommand{\cU}{\mathcal{U}}

% shorthands for mathbb
 % natural numbers
 % real numbers
 % semidefinite cone, sphere.
 % rational numbers
 % integers 

 % file for latex macros. 

\begin{document}
\title{Online Hypergraph Matching with Delays
%\thanks{Supported by organization x.}
}
% alternatively: High Capacity Online Matching with Delays and Applications to Ridesharing
%
%\titlerunning{Abbreviated paper title}
% If the paper title is too long for the running head, you can set
% an abbreviated paper title here
%
\author{
Marco Pavone\inst{1} \and
Amin Saberi\inst{1} \and 
Maximilian Schiffer\inst{2} \and 
Matthew Tsao\inst{1}} 
\authorrunning{M. Pavone et al.}
% First names are abbreviated in the running head.
% If there are more than two authors, 'et al.' is used.
%
\institute{Stanford University, Stanford CA, 94305 USA \\ \texttt{\{pavone,saberi,mwtsao\}@stanford.edu} \and
Technical University of Munich, Munich 80333, Germany \\ \texttt{schiffer@tum.de}} 

\maketitle              % typeset the header of the contribution
\begin{abstract}
We study an online hypergraph  matching problem with delays, motivated by ridesharing applications. In this model, users enter a marketplace sequentially, and are willing to wait up to $d$ timesteps to be matched, after which they will leave the system in favor of an outside option. A platform can match groups of up to $k$ users together, indicating that they will share a ride. Each group of users yields a match value depending on how compatible they are with one another. As an example, in ridesharing, $k$ is the capacity of the service vehicles, and $d$ is the amount of time a user is willing to wait for a driver to be matched to them. 

We present results for both the utility maximization and cost minimization variants of the problem. In the utility maximization setting, the optimal competitive ratio is $\frac{1}{d}$ whenever $k \geq 3$, and is achievable in polynomial-time for any fixed $k$. In the cost minimization variation, when $k = 2$, the optimal competitive ratio for deterministic algorithms is $\frac{3}{2}$ and is achieved by a polynomial-time thresholding algorithm. When  $k>2$, we show that a polynomial-time randomized batching algorithm is $(2 - \frac{1}{d}) \log k$-competitive, and  it is NP-hard to achieve a competitive ratio better than $\log k - O \bigpar{\log \log k}$.  
\keywords{Online algorithms  \and competitive analysis \and ridesharing.}
\end{abstract}
\section{Introduction}

%Many modern marketplaces operate in the setting of online decision making. Users enter a market over time and leave once they are matched to another user or an item of their liking. In centralized marketplaces, operators experience a trade-off between match quality and waiting time. The marketplace benefits from a thicker market, since this enables more efficient matches, but obtaining thicker markets requires customers to wait. Therefore at any given time an operator must decide whether to match users to existing items or wait for more compatible users and/or items to arrive. 

%This trade-off is demonstrated in modern transportation systems, as inefficiencies caused by congestion continue to be a significant challenge. 
The United States loses hundreds of billions of dollars annually through time spent waiting in traffic. Ridehailing services such as Uber and Lyft are significant contributors to metropolitan congestion, since the convenience and privacy offered by their services has led to increased travel demand \cite{Erhardt19}. 
At the same time, they also have the potential to reduce congestion by the way of \textit{shared rides}, whereby travelers can share rides instead of each having their own vehicle on the road.  A recent study \cite{Ostrovsky19} suggests that ridesharing can reduce congestion when deployed together with congestion pricing and road tolls. 
%\footnote{\href{https://www.cnbc.com/2019/02/11/americas-87-billion-traffic-jam-ranks-boston-and-dc-as-worst-in-us.html}{https://www.cnbc.com/2019/02/11/americas-87-billion-traffic-jam-ranks-boston-and-dc-as-worst-in-us.html}}

%Ridesharing, however, places inconveniences on travelers in the form of detours, loss of privacy, and longer waiting times. Designing efficient ridesharing platforms thus requires striking a balance between the efficiency gained by the platform and the inconveniences placed on the customers. 

With this as motivation, we study the ridesharing problem through the lens of online hypergraph matching. In our  model, the problem instance is represented as a hypergraph. Vertices represent users who arrive to a marketplace over time and hyperedges joining multiple users signify that those users can participate in a shared service. To model disutility for waiting, users will wait in the system for at most $d$ timesteps, after which they will leave the system in favor of an outside option.

This model captures several characteristics of modern ridesharing. First, hyperedges represent groups of potentially more than $2$ people sharing a ride, i.e. a service vehicle can carry up to $k$ customers at a time. This is a generalization of previous work \cite{AshlagiBDJSS19} where at most $2$ people were allowed to share a ride. Similar to recent works \cite{AshlagiBDJSS19,Alonso-MoraSWFR17}, the spatial nature of ridesharing is captured by weights of the hyperedges, where the weight of a hyperedge measures the match efficiency of the associated requests. The online aspect of our model captures the temporal uncertainty faced in ridehailing systems where decisions must be made without knowledge of future demand. 

%a hyperedge joining users with similar origins and destinations \mtmargin{will have a larger weight/utility, signifying that this group of users can be efficiently served by a single vehicle}{address costmin too}

We study both the utility maximization and cost minimization variants of the  problem. In the utility maximization setting, edge weights represent the utility gained by the associated users sharing a service, so the goal is to find a matching in the hypergraph with large total weight. In the cost minimization setting, edge weights correspond to the cost required to serve all associated vertices together. Vertices also have weights, representing the cost of serving the vertex individually. The goal in this setting is to find a matching which minimizes the total edge cost plus the total cost of unmatched vertices. 

\subsection{Contributions}\label{subsec:contribution}

We study \onlinemax{} and \onlinemin{}, which are the utility maximization and cost minimization variants of the problem respectively. 
For \onlinemax{} we show that whenever $k \geq 3$, the optimal competitive ratio is $\frac{1}{d}$, and we present a polynomial-time randomized batching algorithm which is $\frac{1}{d}$-competitive. 

The cost minimization version of the problem will help us incorporate additional structures. We will assume the costs are monotone and subadditive.  That will allow us to prove the following set of results. For $k=2$, the optimal competitive ratio is $\frac{3}{2}$ for deterministic algorithms. We present a polynomial-time deterministic thresholding algorithm which is $\frac{3}{2}$-competitive. For $k>2$, there exists a randomized batching algorithm which is $2 - \frac{1}{d}$ competitive. This algorithm, however, is not polynomial-time. Using a reduction from set cover, we show that it is NP-hard to have a competitive ratio better than $\log k - O(\log \log k)$. Leveraging the  reduction, we also construct a randomized greedy batching procedure and show that it is $(2 - \frac{1}{d}) \log k$-competitive, establishing the optimal polynomial-time competitive ratio up to a factor of $2 - \frac{1}{d}$.

\subsection{Organization}
The remainder of this paper is structured as follows. In Section \ref{sec:model} we formally define our online hypergraph matching model with deadlines. We discuss related work in Section \ref{sec:literature}. In Section \ref{sec:utilmax} we study the utility maximization variant of the model wherein we characterize the optimal competitive ratio. We study the cost minimization variant of the model in Section \ref{sec:costmin}. Section \ref{sec:conclusion} presents conclusions and directions for future work. 

\section{Model}\label{sec:model}

A total of $n$ requests will arrive to the platform sequentially, one per timestep. Each request is represented by a vertex, and the vertices are ordered corresponding to the order in which requests arrive in the system. A request must be matched within $d$ timesteps of its arrival, i.e. a request arriving at timestep $i$ must be matched by timestep $i+d-1$ at the latest. A vertex is \textbf{critical} if it will leave the system in the next timestep. Concretely, a vertex arriving at timestep $i$ will become critical at timestep $i+d-1$.  

The system can serve the requests in groups of size at most $k$, where $k$ is a constant independent of $n,d$. For example, in ridesharing, $k$ would be the capacity of a single service vehicle. The edge set $E$ represents compatible matches. Namely, $e \in E$ implies that the set $e$ of users can be served together.

%The problem instance can be represented by a hypergraph $H = (V,E,w)$ since the above properties can all be represented in the following family of hypergraphs.\footnote{A shareability hypergraph is a generalization of shareability graphs studied in \cite{Alonso-MoraSWFR17,AshlagiBDJSS19}.}
%
The problem instance can thus be represented by a hypergraph $H = (V,E,w)$. The following definition is a generalization of shareability graphs from \cite{Alonso-MoraSWFR17,AshlagiBDJSS19}. 

\begin{definition}[Shareability Hypergraph]\label{def:shareability}
A hypergraph $H = (V,E,w)$ is a Shareability Hypergraph with parameters $(n,d,k)$ if and only if 1) The set of vertices is ordered $V = [n]$, and vertex $i$ arrives at timestep $i$, 2) $\text{diam}(e) < d$ for all $e \in E$, where $\text{diam}(e) := \max_{i,j \in e} \abs{i-j}$ 3) The rank of $H$ is at most $k$, i.e. $\abs{e} \leq k \text{ for all } e \in E$ and 4) $w : E \rightarrow \mathbb{R}_+$, i.e. all hyperedge weights $w(e)$ are non-negative.
\end{definition}

In the utility maximization setting, the weight $w(e)$ of an edge
%\footnote{We will use edge and hyperedge interchangeably when discussing hypergraphs.} 
$e$ is the utility gained by having the vertices in $e$ be served together. An optimal sharing strategy in a utility maximization problem corresponds to a maximum weight matching in $H$. In the cost minimization setting, the weight $w(e)$ of an edge $e$ represents the cost incurred when the vertices in $e$ are served together. An optimal sharing strategy in a cost minimization problem thus corresponds to a minimum weight matching in $H$, where the system also pays a cost for each unmatched vertex.
In this model we  assume that all edge weights are non-negative. 

% (see Definition \ref{def:edgediam})

%\subsection{The \onlinemax{} Problem}

 We now formally define the utility maximization version of online hypergraph matching with deadlines, which we will call the \onlinemax{} problem. 

\begin{definition}[\onlinemax{}]\label{def:online_util_max} The vertices of a shareability hypergraph $H = (V,E,w)$ with parameters $(n,d,k)$ are revealed sequentially, one per timestep. The vertex that arrives at timestep $t$ will disappear at timestep $t+d$. A hyperedge (along with its weight) $e \in E$ is revealed once all of its vertices have arrived. The \onlinemax{} problem is to maximize over matchings of $H$ the weight of the matching, subject to the constraint that a hyperedge can only be included after it is revealed but before any of its vertices disappear, and hyperedge inclusion is irrevocable. 
\end{definition}

The offline version of the problem is as follows: 

\begin{definition}[\offlinemax{}]\label{def:offline_util_max}
We use \offlinemax{} to refer to the offline version of \onlinemax{} where $d>n$ so the whole hypergraph can be observed before computing a matching.
\end{definition}

%\subsection{The \onlinemin{} Problem}

The cost minimization version, which we call the \onlinemin{} problem can be defined similarly. In this setting, to ensure there is a feasible solution, we assume each customer can be served individually, i.e. the shareability hypergraph $H$ satisfies $\bigbrace{i} \in E$ for every $i \in V$. In other words, the platform must decide by time $i+d-1$ whether to serve request $i$ individually or have it share a service with other vertices.

\begin{definition}[\onlinemin{}]\label{def:online_cost_min}
The vertices of a shareability hypergraph $H = (V,E,w)$ with parameters $(n,d,k)$ are revealed sequentially, one per timestep. Each vertex has a weight, which represents the cost of serving this vertex individually. The vertex that arrives at timestep $t$ will disappear at timestep $t+d$. A hyperedge (along with its weight) $e \in E$ is revealed once all of its vertices have arrived. The \onlinemin{} problem is to minimize over matchings of $H$ the weight of the matching plus the weight of all unmatched vertices, subject to the constraint that a hyperedge can only be included after it is revealed but before any of its vertices disappear, and hyperedge inclusion is irrevocable. 
\end{definition}

\begin{definition}[\offlinemin{}]\label{def:offline_cost_min}
We use \offlinemin{} to refer to the offline version of \onlinemin{} where $d>n$ so the whole hypergraph can be observed before computing a matching.
\end{definition}

\section{Related Work}\label{sec:literature}
This paper is related to online resource allocation and online matching problems with delays. In the following, we survey some recent results in these fields and discuss how they relate to this work. 

Online resource allocation has applications in advertising \cite{KorulaP09}, network routing problems \cite{Ma2018ACA}, and ridesharing \cite{AshlagiBDJSS19}. In general, obtaining polynomial-time constant factor approximations for these problems is NP-hard, so \cite{KorulaP09,Ma2018ACA} study special cases where users can request for at most $k$ different resource types. This assumption is realistic in most settings as a single user will not need more than a constant number of resources. \cite{KorulaP09} presents a sample-and-price algorithm and shows that it is $\frac{1}{k^2}$-competitive if the arrival order of requests is uniformly random. If the full distribution over user requests is known to the network operator, \cite{Ma2018ACA} show via approximate dynamic programming that a competitive ratio of $\frac{1}{k+1}$ is possible.

%These problems are intimately related to the weighted maximum independent set problem, which is NP-hard to approximate within a constant factor. For this reason, \cite{KorulaP09,Ma2018ACA} study special cases of online advertising and network routing where users can request for at most $k$ different resource types. This assumption is realistic in most settings as a single user will not need more than a constant number of resources. \cite{KorulaP09} presents a sample-and-price algorithm and shows that it is $\frac{1}{k^2}$-competitive if the arrival order of requests is uniformly random. If the distribution over user requests is known to the network operator, \cite{Ma2018ACA} show via approximate dynamic programming that a competitive ratio of $\frac{1}{k+1}$ is possible. 

The above line of work does not model the trade-off between waiting costs and market thickness. The option of waiting for a thicker market is studied in \cite{AshlagiACCGKMWW17,EmekKW16,AkbarpourEtAl20,Aouad19,AshlagiBDJSS19} in different contexts. In these works, requests arrive sequentially in time, and the platform can match pairs of requests together. The objective in \cite{AshlagiACCGKMWW17,EmekKW16,AkbarpourEtAl20,Aouad19} is cost minimzation, and the objective in \cite{AshlagiBDJSS19} is utility maximization. 

In \cite{AshlagiACCGKMWW17,EmekKW16}, the platform can defer matching decisions but must pay a cost proportional to the total time requests wait before being matched. In \cite{AkbarpourEtAl20,Aouad19} requests arrive to the system according to some known distribution. Requests will stay in the system for a random and unkown amount of time, so the platform operator risks users leaving the system if it chooses to wait for a thicker market. Online matching with deadlines was studied in \cite{AshlagiBDJSS19}. Requests enter the system sequentially and are willing to wait up to $d$ timesteps to be matched, after which they will leave the system. Here $d$ is known to the platform. The authors present a $\frac{1}{4}$-competitive algorithm and also show that no algorithm can have a competitive ratio better than $\frac{1}{2}$. 

This paper contributes generalizations to these aforementioned works in two ways. With respect to the online matching literature, we consider a model where $k>2$ users can be matched together. This is motivated by ridesharing applications where service vehicles can hold more than $2$ customers at any given time. With respect to the online resource allocation literature, we present results that do not assume any distributional information on the input instances. Our model also allows for waiting so that the platform has $d$ units of time after a request appears to determine whether to accept it or reject it. This is in contrast to the works of \cite{KorulaP09,Ma2018ACA} where the platform must immediately decide between accepting or rejecting upon a request's arrival.

\section{Utility Maximization: \onlinemax{}}\label{sec:utilmax}

In this section, we characterize the optimal competitive ratio for \onlinemax{} by first proving an upper bound on achievable competitive ratios, and then presenting a polynomial-time algorithm which achieves this upper bound. 

\subsection{Upper Bounds}\label{sec:maxcost:algs:online:UB}
%In this section, we prove an upper bound on the competitive ratio of all \onlinemax{} algorithms. This result then serves as a benchmark for the algorithms we will discuss in Section~\ref{sec:maxcost:algs}. 
In this section we show that if the matching capacity $k$ is at least $3$, then no online algorithm can have a better competitive ratio than $\frac{1}{d}$, with $d$ being the number of timesteps a customer stays in the system. 

\begin{theorem}[Upper bound on the competitive ratio]\label{thm:online_UB}
If $k \geq 3$, then no online algorithm can have a competitive ratio better than $\frac{1}{d}$ for \onlinemax{}. 
\end{theorem}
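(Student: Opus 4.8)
The plan is to prove the bound by exhibiting a hard distribution over instances and invoking Yao's minimax principle. Since the matching positive result (the $\frac{1}{d}$-competitive batching algorithm) is randomized, I want the lower bound to hold against randomized online algorithms, so a single adaptive adversary will not suffice; instead I construct a family $\bigbrace{I_1, \ldots, I_d}$ of shareability hypergraphs together with the uniform distribution over it. The instances will be pairwise indistinguishable to any online algorithm up to a ``revelation time,'' so that no algorithm can tell which instance it faces until it is too late to react.

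The gadget, and the reason $k \geq 3$ is essential, is as follows. Each $I_\ell$ plants a single high-value ``jackpot'' hyperedge of weight $W \gg 1$ at stage $\ell$, realized as a triple $\{b,c,c'\}$ whose latest vertex arrives only \emph{after} stage $\ell$'s earlier forced decision. Before the jackpot is revealed, the deadline makes some vertex $a$ critical, and in order to match $a$ the algorithm must consume the vertex $b$ that the jackpot needs; taking $a$'s cheap edge thus irrevocably destroys the jackpot, whereas the offline optimum simply lets $a$ expire and collects the jackpot, so $\mathrm{OPT}(I_\ell) \geq W$. This conflict genuinely requires triples: with $k = 2$ a single forced commitment blocks only a constant amount of offline value, which is why pairs yield the weaker bound $\frac{1}{2}$ of \cite{AshlagiBDJSS19}. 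The role of the deadline $d$ is that the threatened vertex lives for exactly $d$ timesteps, which lets me chain $d$ such stages, indexed $\ell = 1, \ldots, d$, all sharing a common ``look'' until their jackpot is revealed, while respecting $\mathrm{diam}(e) < d$ and one arrival per timestep.

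The accounting is then a short averaging argument. Fix any deterministic online algorithm and let $\sigma$ be the stage at which it first commits the threatened vertex; by indistinguishability $\sigma$ is well defined independently of $\ell$. In $I_\ell$ the algorithm can collect the weight-$W$ jackpot only if $\sigma = \ell$, so its expected jackpot value under the uniform distribution is at most $W \sum_\ell \frac{1}{d}\,\mathbf{1}[\sigma = \ell] \leq \frac{W}{d}$, while $\mathrm{OPT}(I_\ell) \geq W$ for every $\ell$. Letting $W \to \infty$ drives the cheap-edge contributions to lower order and gives expected competitive ratio at most $\frac{1}{d}$; Yao's principle then upgrades this to the claimed bound for every randomized online algorithm.

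The main obstacle is the explicit gadget construction: I must lay out the arrival order and the overlapping triples so that (i) at each stage the deadline genuinely forces the algorithm to spend the jackpot vertex before the jackpot edge is revealed, (ii) $\mathrm{OPT}(I_\ell) = W + O(1)$ for every $\ell$, and (iii) the prefixes of $I_1, \ldots, I_d$ seen before stage $\ell$ are identical so that $\sigma$ is stage-independent, all simultaneously and all while obeying $\mathrm{diam}(e) < d$ and exactly one arrival per timestep. Verifying the critical-time bookkeeping (a vertex arriving at $i$ becomes critical at $i + d - 1$) across all $d$ stages is the delicate part; everything after the gadget is the averaging computation above.
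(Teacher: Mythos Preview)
Your high-level plan---Yao's principle over a hard distribution of instances sharing a common ``blocking'' vertex---is the right instinct, but the specific distribution you describe does not give the $\frac{1}{d}$ bound, and the indistinguishability claim you rely on is false as stated.

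Concretely, you want instances $I_1,\dots,I_d$ in which $I_\ell$ has a single jackpot edge of weight $W$ at stage $\ell$ and cheap weights elsewhere, and you assert that a deterministic algorithm's first commitment time $\sigma$ is independent of $\ell$. But in this model an edge is revealed \emph{together with its weight} once its last vertex arrives, so the algorithm sees, stage by stage, whether the current edge is cheap or is the jackpot. On the family you describe, the algorithm ``wait until you see weight $W$, then take it'' achieves ratio $1$ on every $I_\ell$; hence $\sigma$ very much depends on $\ell$ and the averaging step collapses. Your attempt to sidestep this by having the jackpot's last vertex arrive \emph{after} the stage-$\ell$ decision does not help either: if the algorithm simply declines the cheap edge and keeps $b$ free, then when the jackpot is revealed it can still take it. Nothing forces the commitment. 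Worse, packing $d$ stages that each require a forced decision \emph{and} a subsequent revelation into the $d$-timestep lifetime of a single shared vertex $b$ is infeasible under one-arrival-per-timestep.

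The paper takes a cleaner route. It builds a single hypergraph on $2d-1$ vertices with $d$ triples $e_0,\dots,e_{d-1}$ that all contain vertex $d-1$ and each have diameter exactly $d-1$; consequently each $e_t$ is choosable at precisely one timestep, the moment its last vertex arrives. This reduces \onlinemax{} on this family to the adversarial secretary problem with $d$ applicants (see a weight, decide immediately, at most one pick). The $\frac{1}{d}$ lower bound for that problem is then proved by a short induction on $d$, directly against randomized algorithms, rather than by Yao over a jackpot distribution. If you want to push a Yao argument through, the working hard distribution is not ``one jackpot among small weights'' but geometrically increasing weights with a uniformly random truncation point (e.g.\ $w_j=B^{j}$ for $j\le L$ and $0$ afterward, $L$ uniform): then any fixed stopping stage matches $L$ with probability $\tfrac{1}{d}$ and otherwise loses a $B$-factor, which gives the bound as $B\to\infty$.
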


\begin{proof}
Consider a family $\cG$ of shareability hypergraphs where $G \in \cG$ is of the form $G = (V,E,w)$ with $V = \bigbrace{0,1,...,2d-2}$, $w \in \mathbb{R}^{\abs{E}}_+$, and $E = \bigbrace{e_t}_{t=0}^{d-1}$ with $e_0 := \bigbrace{0,d-2,d-1}, e_{d-1} := \bigbrace{d-1,d,2d-2}$ and 
\begin{align*}
e_{t} := \bigbrace{t,d-1,t+d-1} \text{ for all } 0 < t < d-1.
\end{align*}

First, note that any matching can have at most one hyperedge since $e_t \cap e_{t'} \neq \emptyset$ for any $t,t'$ since in particular, all edges contain the vertex $d-1$. Second, every edge $e_t$ can only be chosen in the timestep in which its weight is revealed. This is because the earliest vertex in $e_t$ is $t$ so $e_t$  expires at time $t+d$ and the latest vertex in $e_t$ is $t+d-1$, so the earliest time $e_t$ can be chosen is $t+d-1$. So \onlinemax{} restricted to instances in $\cG$ is equivalent to the Adversarial Secretary Problem with $d$ applicants, which is defined below. See Appendix~\ref{dASP_vis} for a visualization of graphs in $\cG$ and the relationship to $d$-\texttt{ASP}. 
\begin{definition}[Adversarial Secretary Problem with $d$ applicants (\texttt{$d$-ASP})]\label{def:ASP}
Adversarial Secretary Problem with $d$ applicants, which we denote by \texttt{$d$-ASP}, has the following setting: There are $d$ applicants for a secretary position, with corresponding non-negative aptitude scores $\bigbrace{w_i}_{i=1}^d$.  At most $1$ applicant can be hired and decisions are irrevocable. Time is discretized into $d$ timesteps. The aptitude $w_i$ of the $i$th applicant is revealed at time $i$, and the $i$th applicant can only be hired at time $i$. The goal is to maximize the aptitude of the hired secretary. 
\end{definition}
Contrary to the canonical secretary problem where the arrival order of candidates is uniformly random, \texttt{$d$-ASP} makes no distributional assumption on the ordering of the candidates. Next we establish a hardness result for $d$-\texttt{ASP}. 
\begin{lemma}[Adversarial Secretary Problem]\label{lem:advsecretary}
The best possible competitive ratio for the adversarial secretary problem with $d$ applicants is $\frac{1}{d}$. 
\end{lemma}
\noindent See Appendix~\ref{pf:lem:advsecretary} for a proof of Lemma \ref{lem:advsecretary}. Since $d$-\texttt{ASP} is a special case of \onlinemax{}, this proves that the best possible competitive ratio of the latter is at most $\frac{1}{d}$. \qed 
\end{proof}

%%--------------------------------
\subsection{Algorithms}\label{sec:maxcost:algs}

In this section we present a polynomial-time $\frac{1}{d}$-competitive algorithm for \onlinemax{} by adapting techniques from \offlinemax{}, the offline version of the problem. Specifically, we analyze the \texttt{Randomized-Batching} algorithm described in Algorithm \ref{alg:randombatch} which runs a \offlinemax{} algorithm $\cA$ every $d$ timesteps, effectively treating the online problem as $n/d$ separate offline problems. We refer to sets of $d$ consecutive vertices on which $\cA$ is applied as batches. Figure~\ref{fig:batchmatch} illustrates a batching procedure. The start and end times for the batches are determined by a random shift $Z \in \bigbrace{0,1,...,d-1}$.

The random shift $Z$ in \texttt{Randomized-Batching} is essential to the competitiveness of batching. Figure~\ref{fig:batchmatch2} illustrates why batching without randomization can do arbitrarily worse than the optimal, i.e. it has a competitive ratio of zero.

\begin{algorithm}
\caption{\texttt{Randomized-Batching}$(H;\cA)$}\label{alg:randombatch}
\textbf{Input:} Shareability Graph $H = (V,E,w)$, Offline matching algorithm $\cA$\;
\textbf{Output:} Matching $M$\;
$M \leftarrow \emptyset$\;
Draw $Z$ uniformly at random from $\bigbrace{0,1,2,...,d-1}$\;
\For{$1 \leq i \leq n$}{
Vertex $i$ is revealed\;
\If{$(i+1 \equiv Z \mod d) \text{ or } (i = n)$}{
Use $\cA$ to compute a matching $M_0$ for nodes appearing at times in $\bigbrace{i-d+t}_{t=1}^d$\;
$M \leftarrow M \cup M_0$\;
}	
}
\textbf{Return} $M$
\end{algorithm}

\begin{figure}[h]
	\centerline{
		\includegraphics[width=0.7\textwidth]{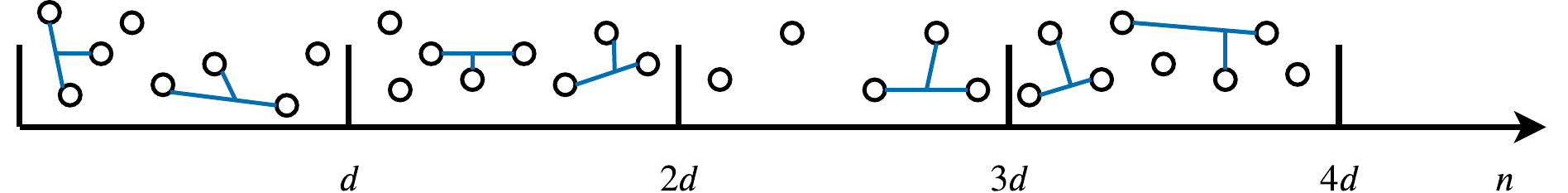}
	}
	\caption{An example of batching for the online matching problem. The time horizon is partitioned into batches of length $d$, and an offline algorithm is used to compute a matching (blue hyperedges) for each batch.}\label{fig:batchmatch}
\end{figure}

\begin{figure}[h]
	\centerline{
		\includegraphics[width=0.7\textwidth]{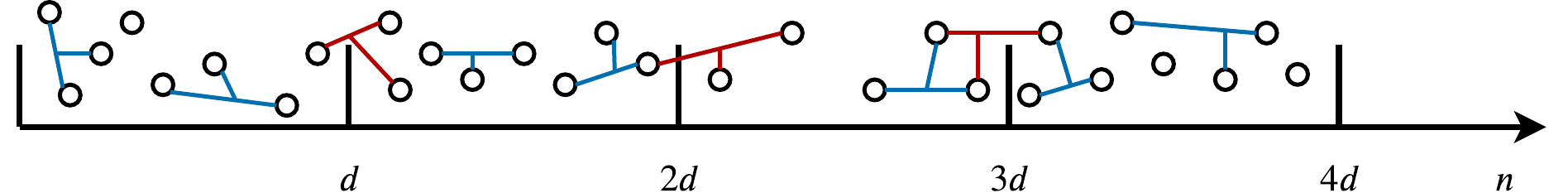}	
	}
	
	\caption{Consider a deterministic batching algorithm where the $i$th batch consists of vertices $\bigbrace{id + t}_{t=0}^{d-1}$. The batching algorithm is unable to select the red edges, since they are not fully contained in any batch. If the red edges each have weight $1$ and all other edges have weight $\epsilon$, then  the weight of the matching obtained by batching is at most $\epsilon \abs{E}$, whereas the optimal solution has weight at least $3$. Hence the competitive ratio of batching is at most $\frac{\epsilon \abs{E}}{3}$, which converges to zero as $\epsilon \rightarrow 0$.}\label{fig:batchmatch2}
	
\end{figure}

The following result provides a performance guarantee for \texttt{Randomized-Batching}.

\begin{theorem}[Performance of \texttt{Randomized-Batching}]\label{thm:raw_LB}
Let $\cA$ be an algorithm for \offlinemax{} with competitive ratio $\rho(\cA)$. For any matching $\widetilde{M}$ in $H$, the matching $M$ returned by Algorithm \ref{alg:randombatch} using $\cA$ satisfies
\begin{align*}
\mathbb{E} \bigbra{ w\bigpar{M} } \geq \frac{\rho \bigpar{ \cA }}{d} \sum_{e \in \widetilde{M}} w(e) \bigpar{ d - \text{diam}(e) }.
\end{align*}
\end{theorem}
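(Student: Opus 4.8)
The plan is to exploit the fact that, for a fixed value of the random shift $Z$, the batches produced by Algorithm \ref{alg:randombatch} are exactly the blocks of $d$ consecutive arrival times $\bigbrace{a, a+1, \ldots, a+d-1}$ whose starting index satisfies $a \equiv Z \pmod d$. For each fixed $Z$ these blocks partition the timeline, so every vertex lies in exactly one batch and every hyperedge lies in at most one batch. On the block $B_j$ the algorithm applies $\cA$ to the induced sub-hypergraph $H_j$, and since $\cA$ has competitive ratio $\rho(\cA)$ for \offlinemax{}, the returned matching $M_j$ satisfies $w(M_j) \geq \rho(\cA)\,\mathrm{OPT}(H_j)$, where $\mathrm{OPT}(H_j)$ is the weight of a maximum matching using only hyperedges fully contained in $B_j$. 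First I would record these two structural facts.

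Next I would bring in the comparison matching $\widetilde{M}$. For fixed $Z$, let $\widetilde{M}_j := \bigbrace{e \in \widetilde{M} : e \subseteq B_j}$ be its restriction to block $B_j$. Each $\widetilde{M}_j$ is a feasible matching in $H_j$, so $\mathrm{OPT}(H_j) \geq w(\widetilde{M}_j)$. Summing the per-block guarantee over all batches and using that the blocks partition the vertices gives, pointwise in $Z$,
\begin{align*}
w(M) = \sum_j w(M_j) \geq \rho(\cA) \sum_j w(\widetilde{M}_j) = \rho(\cA) \sum_{e \in \widetilde{M}} w(e)\, \mathds{1}\bigbra{e \subseteq B_{j(e)}},
\end{align*}
where $j(e)$ is the block containing the earliest vertex of $e$ and the indicator records whether all of $e$ fits inside that block.

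The heart of the argument, and the step I expect to be the main obstacle, is evaluating $\Pr_Z\bigbra{e \subseteq B_{j(e)}}$ for each fixed $e \in \widetilde{M}$. Writing $p$ and $q$ for the earliest and latest vertices of $e$, so $q - p = \text{diam}(e)$, the edge is contained in a single block exactly when some block-start $a \equiv Z \pmod d$ lands in the interval $\bigbra{q-d+1,\, p}$. This interval consists of $p - (q-d+1) + 1 = d - \text{diam}(e)$ consecutive integers, which occupy $d - \text{diam}(e)$ distinct residues modulo $d$ (here I use $\text{diam}(e) < d$ from Definition \ref{def:shareability}, so the count lies strictly between $1$ and $d$). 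Since $Z$ is uniform on $\bigbrace{0,1,\ldots,d-1}$, exactly $d - \text{diam}(e)$ of the $d$ shifts realize the event, whence $\Pr_Z\bigbra{e \subseteq B_{j(e)}} = \frac{d - \text{diam}(e)}{d}$.

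Finally I would take the expectation over $Z$ of the pointwise inequality and apply linearity of expectation together with the probability just computed:
\begin{align*}
\mathbb{E}\bigbra{w(M)} \geq \rho(\cA) \sum_{e \in \widetilde{M}} w(e)\, \Pr_Z\bigbra{e \subseteq B_{j(e)}} = \frac{\rho(\cA)}{d} \sum_{e \in \widetilde{M}} w(e)\bigpar{d - \text{diam}(e)},
\end{align*}
which is the claimed bound. The only remaining delicacy is the treatment of the partial blocks at the ends of the horizon and the batch forced by the condition $i = n$; these cover the leftover vertices and can only add hyperedges to $M$, so restricting attention to the clean partition into length-$d$ blocks already yields a valid lower bound.
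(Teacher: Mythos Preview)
Your proposal is correct and follows essentially the same approach as the paper's proof: both fix the shift, use the per-batch competitive ratio of $\cA$ against the restriction of $\widetilde{M}$ to each batch, and then count (equivalently, compute the probability of) the shifts for which a given edge $e$ is fully contained in a batch, obtaining $d-\text{diam}(e)$ favorable shifts. The paper packages the counting step as a separate observation (your interval argument is the same computation) and sums over $z$ rather than taking an expectation, but the two presentations are equivalent.
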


\begin{proof} %[Theorem~\ref{thm:raw_LB}]
Let $\cA$ be an offline matching algorithm for \offlinemax{} with competitive ratio $\rho(\cA)$ and consider an instance $H = (V,E,w)$ of \onlinemax{}. For each $z \in \bigbrace{0,1,...,d-1}$ we define the batch $W_{z,i} = (V_{z,i}, E_{z,i},w)$ where
\begin{enumerate}
    \item $V_{z,i} := \bigbrace{z + id + t}_{t=0}^{d-1}$, i.e. all vertices arriving between timestesp $z+id$ and $z+(i+1)d-1$.
    \item $E_{z,i} := \bigbrace{e \in E : e \subset V_{z,i}}$, i.e. all edges whose vertices are all in $V_{z,i}$. 
\end{enumerate}
We will also define $E_z := \cup_{i} E_{z,i}$. With this notation, \texttt{Randomized-Batching} returns a matching $M := \cup_{i} M_{z,i}$, where $M_{z,i}$ is obtained by applying $\cA$ to $W_{z,i}$. We will also use $M_{z,i}^*$ to denote a maximum weight matching in $W_{z,i}$. We will make use of the following observation. 

\begin{observation}\label{obs:indicatorsum}
For any $e \in E$ we have $\sum_{z=0}^{d-1} \mathds{1}_{[e \in E_z]} \geq d - \text{diam}(e)$. 
\end{observation}

\begin{proof} %{Proof of Observation~\ref{obs:indicatorsum}}
Fix any $e \in E$ and let $t$ be the smallest arrival time of any vertex in $e$. Without loss of generality (by a linear change of coordinates) we assume that $t = 0$. By definition of $\text{diam}(e)$, all vertices of $e$ appear between time $0$ and $\text{diam}(e)$. Any edge that is a subset of $\bigbrace{z,z+1,...,z+d-1}$ will belong to $E_z$. Next, note that for any $z$ satisfying $\text{diam}(e)-d+1 \leq z \leq 0$, we have $z \leq 0$ and $z+d-1 \geq \text{diam}(e)$, and hence $e \subset \bigbrace{z,z+1,...,z+d-1}$. Therefore, $e \in E_z$ for every $z \in \bigbrace{\text{diam}(e)-d+1,...,-1,0}\mod d$. In particular, $e \in E_z$ for at least $d-\text{diam}(e)$ values of $z$. \qed 
\end{proof}

\noindent Let $\widetilde{M}$ be a matching for $H$, and let $M_z^*$ be an optimal matching for $H_z := (V,E_z,w)$. Given $Z = z$, we have
\begin{align}\label{eqn:rand_batch_Z=z}
    w(M) = \sum_{i} w(M_{z,i}) &\overset{(a)}{\geq} \sum_{i} \rho(\cA) w(M_{z,i}^*) = \rho(\cA) w(M_{z}^*). 
\end{align}

Where $(a)$ is because $\cA$ is $\rho(\cA)$-competitive, and $M_{z,i}$ is obtained by applying $\cA$ to $W_{z,i}$. Because $Z$ is uniformly distributed over $\bigbrace{0,1,...,d-1}$, we have
\begin{align*}
\mathbb{E}_Z \bigbra{ w(M) } &= \sum_{z=0}^{d-1} \mathbb{P} \bigbra{ Z = z } \mathbb{E} \bigbra{ w(M) | Z = z} \overset{(a)}{\geq} \frac{1}{d} \sum_{z=0}^{d-1} \rho(\cA) \sum_{e^* \in M^*_z} w(e^*) \overset{(b)}{\geq} \frac{\rho(\cA)}{d} \sum_{z=0}^{d-1} \sum_{e^* \in \widetilde{M} \cap E_z} w(e^*) \\
&= \frac{\rho(\cA)}{d} \sum_{z=0}^{d-1} \sum_{e^* \in \widetilde{M}} w(e^*) \mathds{1}_{[e \in E_z]} = \frac{\rho(\cA)}{d} \sum_{e^* \in \widetilde{M}} w(e^*) \sum_{z=0}^{d-1} \mathds{1}_{[e \in E_z]} \overset{(c)}{\geq} \frac{\rho(\cA)}{d}  \sum_{e^* \in \widetilde{M}} w(e^*) \bigpar{d - \text{diam}(e)}.
\end{align*}
Where $(a)$ is due to \eqref{eqn:rand_batch_Z=z}, $(b)$ is because and $\widetilde{M} \cap E_z$ is a matching in $H_z$ and $M^*_z$ is a maximum weight matching in $H_z$. Finally, $(c)$ is due to Observation \ref{obs:indicatorsum}.  \qed 
\end{proof}

\noindent The following is a straightforward consequence of Theorem \ref{thm:raw_LB} by choosing $\widetilde{M}$ to be an optimal matching, and noting that $d - \text{diam}(e) \geq 1$ for all $e \in E$ since $H$ is a shareability hypergraph. 

\begin{lemma}[Competitive Ratio of \texttt{Randomized-Batching}]\label{lem:online}
Let $\cA$ be an offline matching algorithm for \offlinemax{} with competitive ratio $\rho(\cA)$. Algorithm \ref{alg:randombatch} using $\cA$ is $\frac{\rho(\cA)}{d}$-competitive for \onlinemax{}.
\end{lemma}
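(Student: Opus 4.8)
The plan is to invoke Theorem~\ref{thm:raw_LB} directly, instantiating the free matching $\widetilde{M}$ with an optimal offline matching and then discarding the only factor in the bound that depends on the geometry of the edges. Since Theorem~\ref{thm:raw_LB} already holds for \emph{any} matching $\widetilde{M}$ in $H$, no fresh probabilistic argument is needed: all the work is in choosing $\widetilde{M}$ well and lower-bounding the factor $d - \text{diam}(e)$.

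First I would let $M^{\mathrm{OPT}}$ denote a maximum-weight matching of $H$, which is exactly the value of the optimal offline solution to \onlinemax{} (offline, the whole hypergraph is visible, so the benchmark is the max-weight matching in $H$). Setting $\widetilde{M} = M^{\mathrm{OPT}}$ in Theorem~\ref{thm:raw_LB} gives
\begin{align*}
\mathbb{E}\bigbra{w(M)} \geq \frac{\rho(\cA)}{d} \sum_{e \in M^{\mathrm{OPT}}} w(e)\bigpar{d - \text{diam}(e)}.
\end{align*}

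Next I would lower-bound each summand. Because $H$ is a shareability hypergraph, Definition~\ref{def:shareability} guarantees $\text{diam}(e) < d$ for every $e \in E$; as the vertex indices are integers, $\text{diam}(e)$ is itself an integer, so in fact $\text{diam}(e) \leq d-1$ and hence $d - \text{diam}(e) \geq 1$. Combining this with the nonnegativity $w(e) \geq 0$ yields $w(e)\bigpar{d - \text{diam}(e)} \geq w(e)$ termwise, whence
\begin{align*}
\mathbb{E}\bigbra{w(M)} \geq \frac{\rho(\cA)}{d} \sum_{e \in M^{\mathrm{OPT}}} w(e) = \frac{\rho(\cA)}{d}\, w\bigpar{M^{\mathrm{OPT}}}.
\end{align*}
Dividing through by $w\bigpar{M^{\mathrm{OPT}}}$ yields the claimed competitive ratio of $\frac{\rho(\cA)}{d}$.

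The argument is genuinely routine, so I do not expect a real obstacle. The one point that deserves explicit mention rather than being swept under the rug is the passage from the strict bound $\text{diam}(e) < d$ to the integer bound $d - \text{diam}(e) \geq 1$: this silently relies on all vertex positions (and hence diameters) being integers. Without integrality the lower bound of $1$ could fail, since $\text{diam}(e)$ could approach $d$ arbitrarily closely and the per-edge discount $d - \text{diam}(e)$ would then be negligible. Everything else is a direct substitution into an already-proved result.
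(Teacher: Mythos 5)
Your proposal is correct and matches the paper's own proof exactly: the paper likewise derives Lemma~\ref{lem:online} from Theorem~\ref{thm:raw_LB} by taking $\widetilde{M}$ to be an optimal matching and noting that $d - \text{diam}(e) \geq 1$ for every edge of a shareability hypergraph. Your extra remark on the integrality of vertex indices (turning $\text{diam}(e) < d$ into $d - \text{diam}(e) \geq 1$) is a point the paper leaves implicit, but it is the same argument.
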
 

\noindent We now briefly explore the implications of Lemma \ref{lem:online} for different choices of $\cA$. The best known polynomial-time algorithms for \offlinemax{} are $O(\frac{1}{k})$-competitive \cite{Berman00,ChandraH01a}, and in particular \texttt{Greedy} is $\frac{1}{k}$-competitive.\footnote{See Appendix~\ref{sec:prelims:hypergraph} for further details and existing results for \offlinemax{}.}  

\begin{corollary}\label{cor:1/kd}
Choosing $\cA$ to be \texttt{Greedy} when running Algorithm \ref{alg:randombatch} gives a polynomial-time algorithm with competitive ratio of $\frac{1}{kd}$ for \onlinemax{}. 
\end{corollary}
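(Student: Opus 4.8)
The plan is to derive this statement directly from Lemma~\ref{lem:online}, so the only work is to pin down the competitive ratio of \texttt{Greedy} as an \offlinemax{} subroutine and to verify that the resulting procedure runs in polynomial time. Lemma~\ref{lem:online} states that Algorithm~\ref{alg:randombatch} instantiated with an offline algorithm $\cA$ of competitive ratio $\rho(\cA)$ is $\frac{\rho(\cA)}{d}$-competitive for \onlinemax{}, so once I establish $\rho(\texttt{Greedy}) = \frac{1}{k}$ the claimed ratio $\frac{1}{kd}$ follows by substitution.

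For the first ingredient, I would invoke the known guarantee that \texttt{Greedy} --- which repeatedly picks the heaviest remaining hyperedge and deletes every hyperedge sharing a vertex with it --- is $\frac{1}{k}$-competitive for weighted matching in a hypergraph of rank at most $k$. The one-line justification I would record is the standard charging argument: each hyperedge $e$ chosen by \texttt{Greedy} intersects at most $k$ edges of a fixed optimal matching (at most one through each of the $\abs{e} \leq k$ vertices of $e$), and every such edge has weight at most $w(e)$ because $e$ was available and maximal when selected; summing over the greedy edges gives $w(\text{OPT}) \leq k \cdot w(\texttt{Greedy})$. Since the shareability hypergraph passed to each batch has rank at most $k$ by Definition~\ref{def:shareability}, this bound applies verbatim, giving $\rho(\texttt{Greedy}) = \frac{1}{k}$.

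For the polynomial-time claim, I would note that a single \texttt{Greedy} call amounts to sorting the hyperedges by weight and scanning them once with a vertex-occupancy check, which is polynomial in the input size, and that Algorithm~\ref{alg:randombatch} triggers a call to $\cA$ at most once per timestep and hence at most $n$ times overall. A polynomial number of polynomial-time calls yields a polynomial overall running time, completing the argument.

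The step I expect to require the most care --- though it is still minor --- is confirming that the $\frac{1}{k}$ guarantee for \texttt{Greedy} is stated for the weighted, rank-$k$ hypergraph setting rather than only the graph case $k=2$, and that restricting \texttt{Greedy} to the edges lying inside a batch does not degrade the ratio. The latter is immediate because \texttt{Greedy} and its analysis depend only on the hyperedge set it is handed, so the per-batch bound in step $(a)$ of the proof of Theorem~\ref{thm:raw_LB} already supplies exactly what is needed.
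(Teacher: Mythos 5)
Your proposal takes exactly the paper's route: the corollary is read off from Lemma~\ref{lem:online} by substituting $\rho(\texttt{Greedy}) = \frac{1}{k}$, and the polynomial-time claim follows because \texttt{Greedy} (Algorithm~\ref{alg:offlinegreedy}) is a sort-and-scan procedure that Algorithm~\ref{alg:randombatch} invokes at most once per timestep. The paper itself does no more than this; it records the $\frac{1}{k}$ ratio and the $O\bigpar{\abs{E}\log\abs{E}}$ runtime as Lemma~\ref{lem:greedyratio} in the appendix and cites it as standard rather than reproving it.

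One caution about the charging argument you sketch for $\rho(\texttt{Greedy}) = \frac{1}{k}$: as literally written it contains a false step. It is not true that every optimal edge intersecting a greedy edge $e$ has weight at most $w(e)$. An optimal edge $f$ with $w(f) > w(e)$ can still intersect $e$, provided $f$ was already blocked earlier in the greedy scan by some heavier greedy edge $e'$ (for instance, with $k=2$, edges $e'=\{1,2\}$ of weight $10$, $f=\{2,3\}$ of weight $9$, $g=\{1,5\}$ of weight $8$, $e=\{3,4\}$ of weight $3$: greedy selects $e'$ and $e$, the optimum is $\{f,g\}$, and $f$ intersects $e$ with $w(f)=9>3=w(e)$). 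Consequently the per-edge bound $\sum_{f \in \mathrm{OPT}: f \cap e \neq \emptyset} w(f) \leq k\, w(e)$ that your summation relies on can fail. The standard repair is to charge each optimal edge $f$ not to every greedy edge it meets but to the \emph{first} greedy edge (in selection order) that intersects it; that edge has weight at least $w(f)$ because $f$ was still available when it was chosen, and since optimal edges are disjoint and $\abs{e} \leq k$, each greedy edge receives at most $k$ charges, giving $w(\mathrm{OPT}) \leq k \cdot w(\texttt{Greedy})$. With this fix your argument is complete and the rest of the proposal stands as is.
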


\noindent We can also consider an exact procedure $\cA^*$ for \offlinemax{} which satisfies $\rho(\cA^*) = 1$. Applying Lemma \ref{lem:online} with such an $\cA^*$ establishes $\frac{1}{d}$ as the optimal competitive ratio for \onlinemax{}.  

\begin{corollary}\label{cor:1/d}
If $\cA^*$ is an exact procedure for \offlinemax{}, i.e., $\rho(\cA^*) = 1$, then Algorithm \ref{alg:randombatch} using $\cA^*$ as a subroutine achieves a competitive ratio of $\frac{1}{d}$ for \onlinemax{}. %\\
\end{corollary}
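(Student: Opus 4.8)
The plan is to obtain the result as an immediate specialization of Lemma \ref{lem:online}. That lemma already establishes that for any offline algorithm $\cA$ for \offlinemax{} with competitive ratio $\rho(\cA)$, the matching returned by Algorithm \ref{alg:randombatch} is $\frac{\rho(\cA)}{d}$-competitive for \onlinemax{}. Thus the entire argument reduces to substituting $\cA = \cA^*$ and invoking the hypothesis $\rho(\cA^*) = 1$, which yields a competitive ratio of $\frac{\rho(\cA^*)}{d} = \frac{1}{d}$. There is essentially no additional content beyond this substitution.

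If I wanted a self-contained derivation rather than citing Lemma \ref{lem:online}, I would instead start from Theorem \ref{thm:raw_LB} and make two choices. First, I would take the reference matching $\widetilde{M}$ in Theorem \ref{thm:raw_LB} to be an optimal (maximum-weight) matching $M^*$ of $H$, so that $\sum_{e \in \widetilde{M}} w(e)$ equals the offline optimum. Second, I would use that $H$ is a shareability hypergraph, so every $e \in E$ satisfies $\text{diam}(e) < d$, i.e. $\text{diam}(e) \leq d - 1$ and therefore $d - \text{diam}(e) \geq 1$. Substituting $\rho(\cA^*) = 1$ into the bound of Theorem \ref{thm:raw_LB} then gives
\[
\mathbb{E}\bigbra{ w(M) } \geq \frac{1}{d} \sum_{e \in M^*} w(e)\bigpar{ d - \text{diam}(e) } \geq \frac{1}{d} \sum_{e \in M^*} w(e) = \frac{1}{d}\,\mathrm{OPT},
\]
which is exactly the claimed $\frac{1}{d}$-competitiveness.

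The only step requiring any care is the inequality $d - \text{diam}(e) \geq 1$, which is where the diameter constraint in the definition of a shareability hypergraph is used; without it the per-edge factor $d - \text{diam}(e)$ could vanish and the guarantee would degrade. Since this follows directly from Definition \ref{def:shareability}, there is no genuine obstacle: all the substantive work has already been absorbed into Theorem \ref{thm:raw_LB} and Lemma \ref{lem:online}. I would close by noting that combining this achievability bound with the matching impossibility result of Theorem \ref{thm:online_UB} (valid for $k \geq 3$) identifies $\frac{1}{d}$ as the exact optimal competitive ratio for \onlinemax{}. I would also remark that the corollary is phrased conditionally on the existence of an exact offline procedure $\cA^*$, so exhibiting such a procedure (and analyzing its running time for fixed $k$) is a separate matter that is not needed for the implication itself.
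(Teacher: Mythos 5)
Your proposal is correct and follows exactly the paper's own route: the corollary is obtained by specializing Lemma \ref{lem:online} to $\rho(\cA^*)=1$, and your self-contained fallback (taking $\widetilde{M}$ to be a maximum-weight matching in Theorem \ref{thm:raw_LB} and using $d - \text{diam}(e) \geq 1$ from the shareability property) is precisely how the paper itself derives Lemma \ref{lem:online}. No gaps; nothing further is needed.
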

\noindent Corollary \ref{cor:1/d} in conjunction with Theorem~\ref{thm:online_UB} shows that the optimal competitive ratio of $\frac{1}{d}$ is achievable in finite time. However, it is known from \cite{Hazan06} that achieving $\rho(\cA) = \Omega \bigpar{ \frac{\log k}{k} }$ for \offlinemax{} is NP-hard, and thus no exact procedure $\cA^*$ can run in polynomial-time unless $\text{P = NP}$. For this reason, Corollary \ref{cor:1/d} does not show that a competitive ratio of $\frac{1}{d}$ is achievable in \textit{polynomial-time}. 

To obtain a polynomial-time $\frac{1}{d}$-competitive algorithm, we will replace \texttt{Greedy} by \texttt{Depth-$k$ Greedy} as the choice for $\cA$. \texttt{Depth-$k$ Greedy} is described in Algorithm \ref{alg:dk_greedy}. Making this replacement leads to the following result. 

\begin{theorem}\label{thm:online_polyLB}
\texttt{Randomized-Batching} using $\cA =$\texttt{Depth-$k$ Greedy} as a subroutine is $\frac{1}{d}$-competitive for \onlinemax{}. Furthermore, its running time is $O \bigpar{ \abs{E}^{k+1} }$ which is polynomial in the size of the shareability hypergraph $H$ for any fixed $k$. 
\end{theorem}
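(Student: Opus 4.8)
The plan is to leverage the batching framework already in place rather than to reanalyze the online problem from scratch. By Lemma~\ref{lem:online}, \texttt{Randomized-Batching} run with any offline subroutine $\cA$ is $\frac{\rho(\cA)}{d}$-competitive for \onlinemax{}, so it suffices to show that $\cA = $\texttt{Depth-$k$ Greedy} (Algorithm~\ref{alg:dk_greedy}) behaves like an \emph{exact} \offlinemax{} solver on the instances it is actually fed, namely the individual batches $W_{z,i}$, and then to bound the total running time. Concretely, I would prove that \texttt{Depth-$k$ Greedy} returns a maximum-weight matching of each batch; feeding such a subroutine into Theorem~\ref{thm:raw_LB} with $\widetilde{M}$ equal to an optimal matching of $H$ and using $d - \text{diam}(e) \geq 1$ then reproduces the $\frac{1}{d}$ guarantee exactly as in Corollary~\ref{cor:1/d}, but now with a polynomial-time subroutine in place of the (NP-hard) exact procedure $\cA^*$.

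The heart of the argument is therefore to establish that, \emph{restricted to a batch}, \texttt{Depth-$k$ Greedy} computes an optimal matching in time $O(\abs{E}^{k+1})$. Here I would exploit the defining feature of a shareability hypergraph that is absent from the general \offlinemax{} problem: the vertices of a batch are $d$ consecutive arrivals and every hyperedge satisfies $\text{diam}(e) < d$, so the edges respect the linear arrival order. I would process the batch along this order and maintain a dynamic program whose state records the occupancy pattern that previously committed edges impose on the current window of active vertices; the rank bound $\abs{e} \leq k$ limits how each newly considered edge can extend a partial matching, which is what controls the number of admissible transitions and yields the $O(\abs{E}^{k+1})$ factor. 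Optimality per batch would follow from a standard exchange/DP argument once the state is shown to summarize all information relevant to future decisions. Since the batches produced by \texttt{Randomized-Batching} partition the timeline, their edge sets have total size at most $\abs{E}$, and summing the per-batch cost gives the claimed overall running time.

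The step requiring the most care — and the main obstacle — is reconciling per-batch exactness in polynomial time with the NP-hardness of \offlinemax{} cited from \cite{Hazan06}. The point I would need to make precise is that the hardness holds in the regime $d > n$ where hyperedges may have unbounded diameter, whereas it is exactly the bounded-diameter / interval locality of a batch that makes the dynamic program correct while keeping its state space polynomial for fixed $k$; in particular I would have to verify rigorously that the window-occupancy state is genuinely sufficient for optimality and that its size (and the induced number of transitions) is governed by $k$ rather than by $d$. Once per-batch optimality and the $O(\abs{E}^{k+1})$ running time are in hand, combining them with Lemma~\ref{lem:online} (equivalently, the $\widetilde{M}$-optimal case of Theorem~\ref{thm:raw_LB}) and the matching upper bound of Theorem~\ref{thm:online_UB} shows that \texttt{Randomized-Batching} with \texttt{Depth-$k$ Greedy} is an optimal, polynomial-time, $\frac{1}{d}$-competitive algorithm, completing the proof.
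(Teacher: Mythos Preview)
Your plan rests on the claim that \texttt{Depth-$k$ Greedy} solves each batch \emph{exactly}, and then invokes Lemma~\ref{lem:online} with $\rho(\cA)=1$. That claim is false, and the interval/DP route you sketch cannot rescue it. Inside a single batch $W_{z,i}$ the vertex set is $d$ consecutive arrivals, so the constraint $\text{diam}(e)<d$ is satisfied by \emph{every} subset of $V_{z,i}$ and therefore imposes no structure whatsoever: the subproblem on a batch is a completely general instance of \offlinemax{} on $d$ vertices, which for $k\ge 3$ is NP-hard as $d$ grows. Hence no window-occupancy DP with polynomial state can be exact, and indeed Algorithm~\ref{alg:dk_greedy} is not exact either: after fixing a candidate matching $M_i\in\cM_{L,R}$ it only runs the $\tfrac{1}{k}$-competitive \texttt{Greedy} on the remainder, so on an instance where all optimal edges have small diameter (and hence $M_\ell^*\cap E_{z,i}=\emptyset$), \texttt{Depth-$k$ Greedy} degenerates to plain \texttt{Greedy} and can be a factor $k$ off on that batch.

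The paper's argument does \emph{not} go through Lemma~\ref{lem:online}; it bypasses per-batch optimality by decomposing $M^*=M_s^*\sqcup M_\ell^*$ into short edges ($\text{diam}(e)\le d-k$) and long edges ($\text{diam}(e)>d-k$). The key structural fact is that any long edge contained in a batch must hit both the first $k$ and the last $k$ vertices of that batch, so $M_\ell^*\cap E_{z,i}\in\cM_{L,R}(W_{z,i})$ and $\abs{M_\ell^*\cap E_{z,i}}\le k$; this is precisely the set \texttt{Depth-$k$ Greedy} enumerates, which is why it captures $w(M_\ell^*\cap E_{z,i})$ with coefficient $1$. Short edges are only captured with coefficient $\tfrac{1}{k}$ via \texttt{Greedy}, but by Observation~\ref{obs:indicatorsum} each short edge lies in at least $d-(d-k)=k$ of the shifted batches, and this multiplicity cancels the $\tfrac{1}{k}$. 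Summing over the random shift then yields the $\tfrac{1}{d}$ bound. In short, the missing idea in your proposal is exactly this short/long split together with the $L,R$-incidence observation; without it the reduction to an exact batch solver is blocked by the very hardness result you flag.
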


\begin{algorithm}
\caption{\texttt{Depth-$k$ Greedy}$(H)$}\label{alg:dk_greedy}
\textbf{Input:} Hypergraph $H = (V,E,w)$ where the vertices are ordered\;
\textbf{Output:} A matching $M$ for $H$ \;
Define $L$ to be the first $k$ vertices\;
Define $R$ to be the last $k$ vertices\;
Define $\cM_{L,R}(H) := \bigbrace{M \text{ is a matching}: e \cap L \neq \emptyset, e \cap R \neq \emptyset \; \forall e \in M}$\;
Enumerate all matchings in $\cM_{L,R}(H)$ as $\bigbrace{M_i}_{i=1}^{\abs{\cM_{L,R}(H)}}$\;
%Enumerate all possible $L,R$ restricted matchings for $H$ as $\bigbrace{M_i}_{i=1}^{\abs{\cM_{L,R}(H)}}$ (see Definition \ref{def:restricted_matchings}) \;
\For{$1 \leq i \leq \abs{\cM_{L,R}(H)}$}{
$V_i \leftarrow \bigbrace{i \in V : i \text{ is matched in }M_i}$\;
$\widetilde{M}_i \leftarrow \hyperref[alg:offlinegreedy]{\texttt{Greedy}}((V \setminus V_i, E))$ 
}
$i^* \leftarrow \arg\max_i w(M_i \cup \widetilde{M}_i)$\;
$M \leftarrow M_{i^*} \cup \widetilde{M}_{i^*}$ \;
\textbf{Return} $M$
\end{algorithm}
In the remainder of this section, we sketch the proof of Theorem~\ref{thm:online_polyLB}. See Appendix~\ref{pf:thm:online_polyLB} for a full proof of Theorem~\ref{thm:online_polyLB}.

To understand how to construct a polynomial-time algorithm for \onlinemax{} with the optimal competitive ratio of $\frac{1}{d}$, we first need to understand why Corollary \ref{cor:1/kd} falls short: let $M^*$ be a maximum weight matching for a shareability hypergraph $H = (V,E,w)$. We can partition $M^*$ into two sets: \textit{short hyperedges} $M_s^*$ and \textit{long hyperedges} $M_\ell^*$ as follows:
\begin{align*}
M_{s}^* &:= \bigbrace{e \in M^* : \text{diam}(e) \leq d-k} \text{ and } M_{\ell}^* := \bigbrace{e \in M^* : \text{diam}(e) > d-k}.
\end{align*}
As the names suggest, $M_s^*$ contains all hyperedges of $M^*$ with diameter at most $d-k$, and $M_\ell^*$ contains all hyperedges whose diameters are larger than $d-k$. By Theorem~\ref{thm:raw_LB}, running \texttt{Randomized-Batching} with the \texttt{Greedy} subroutine results in a matching whose weight satisfies
\begin{align*}
    \mathbb{E} \bigbra{ w\bigpar{M} } &\geq \frac{\rho \bigpar{ \cA }}{d} \sum_{e \in M^*} w(e) \bigpar{ d - \text{diam}(e) } = \frac{1}{kd} \sum_{e \in M_s^*} w(e) \bigpar{ d - \text{diam}(e) } + \frac{1}{kd} \sum_{e \in M_\ell^*} w(e) \bigpar{ d - \text{diam}(e) } \\
    &\geq \frac{1}{kd} \sum_{e \in M_s^*} w(e) k + \frac{1}{kd} \sum_{e \in M_\ell^*} w(e) = \frac{1}{d} w(M_s^*) + \frac{1}{kd} w(M_\ell^*).
\end{align*}

\noindent From this, we see that the approximation factor for $M_s^*$ and $M_\ell^*$ are different. 

\begin{observation}\label{obs:weaklink}
Since $M_s^*$ is already contributing a $\frac{1}{d}$ fraction of its weight to $M$, we just need to improve the contribution of $M_\ell^*$ from $\frac{1}{kd}$ to $\frac{1}{d}$ in order to achieve a competitive ratio of $\frac{1}{d}$. 
\end{observation}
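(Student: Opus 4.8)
The plan is to treat this observation as a direct consequence of the additive decomposition $M^* = M_s^* \cup M_\ell^*$ combined with the weight bound already derived for \texttt{Randomized-Batching} with the \texttt{Greedy} subroutine. First I would record that $M_s^*$ and $M_\ell^*$ partition $M^*$ by definition, so by additivity of edge weights $w(M^*) = w(M_s^*) + w(M_\ell^*)$. Since a $\frac{1}{d}$-competitive algorithm is one that outputs a matching $M$ with $\mathbb{E}\bigbra{w(M)} \geq \frac{1}{d} w(M^*)$, the target decomposes term-by-term as $\frac{1}{d} w(M_s^*) + \frac{1}{d} w(M_\ell^*)$.

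Next I would invoke the displayed bound immediately preceding the observation, namely $\mathbb{E}\bigbra{w(M)} \geq \frac{1}{d} w(M_s^*) + \frac{1}{kd} w(M_\ell^*)$, which follows from Theorem~\ref{thm:raw_LB} applied with $\widetilde{M} = M^*$ and $\rho(\cA) = \frac{1}{k}$, after using $d - \text{diam}(e) \geq k$ for short edges (since $\text{diam}(e) \leq d-k$ there) and $d - \text{diam}(e) \geq 1$ for long edges (since $H$ is a shareability hypergraph). Comparing this bound against the target, the short-edge contribution $\frac{1}{d} w(M_s^*)$ already equals what is required, so no further improvement is needed on $M_s^*$. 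The sole deficiency lies in the long-edge term, where the guarantee $\frac{1}{kd} w(M_\ell^*)$ is short of the desired $\frac{1}{d} w(M_\ell^*)$ by exactly the factor $k$.

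Finally I would conclude the sufficiency claim: if one designs a subroutine or refined analysis that raises the long-edge contribution to $\frac{1}{d} w(M_\ell^*)$ while preserving at least $\frac{1}{d} w(M_s^*)$ on short edges, then summing yields $\mathbb{E}\bigbra{w(M)} \geq \frac{1}{d}\bigpar{w(M_s^*) + w(M_\ell^*)} = \frac{1}{d} w(M^*)$, the claimed ratio. The point worth stating precisely is that improving the long-edge factor is both \emph{necessary} to close the remaining gap and \emph{sufficient} once combined with the already-adequate short-edge bound; this is what licenses focusing all subsequent effort on $M_\ell^*$. There is no genuine mathematical obstacle within the observation itself, as it amounts to a bookkeeping identity — the real difficulty, namely actually attaining the $\frac{1}{d}$ factor on long edges, is what the \texttt{Depth-$k$ Greedy} construction and its analysis are designed to resolve.
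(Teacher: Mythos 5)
Your proposal is correct and follows essentially the same route as the paper: Observation~\ref{obs:weaklink} has no separate formal proof there, being exactly the bookkeeping consequence of applying Theorem~\ref{thm:raw_LB} with $\widetilde{M} = M^*$ and $\rho(\cA) = \frac{1}{k}$, then using $d - \text{diam}(e) \geq k$ on $M_s^*$ and $d - \text{diam}(e) \geq 1$ on $M_\ell^*$, which is precisely your argument. One small caveat: your claim that improving the long-edge factor is \emph{necessary} is only necessity relative to this particular decomposition-based bound (some other analysis could conceivably reach $\frac{1}{d}$ differently), but since the observation asserts only sufficiency this does not affect correctness.
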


\noindent Since all edges in $M_\ell^*$ have large diameter, we can show that $\abs{M_\ell^*}$ cannot be too large. Specifically, for any batch (i.e. subgraph on $d$ consecutive vertices) $(V',E')$, we can show that $\abs{M_\ell^* \cap E'} \leq k$. We can then show that there are at most $O \bigpar{ {d \choose k}^k }$ possibilities for $M_\ell^* \cap E'$. Since $k$ is a constant, this is polynomial in the batch size. Finally, we extend all of these possibilities for $M_\ell^* \cap E'$ to maximal matchings in $(V',E')$ via \texttt{Greedy} and return the maximal matching with the highest total weight. This can be done in polynomial-time, and improves the contribution of $w(M_\ell^*)$ from $\frac{1}{kd}$ to $\frac{1}{d}$. This motivates \texttt{Depth-$k$ Greedy} described in Algorithm \ref{alg:dk_greedy} as an alternative to \texttt{Greedy}. 

\section{Cost Minimization: \onlinemin{}}\label{sec:costmin}

In this section we study \onlinemin{} specialized to the ridesharing problem. We first present additional assumptions that we make to utilize the structure of ridesharing problems. For $k=2$ we prove that the optimal competitive ratio for deterministic algorithms is $\frac{3}{2}$ and is achieved by a thresholding algorithm. For $k>2$, we characterize both the optimal competitive ratio and the optimal competitive ratio attainable with polynomial-time algorithms up to a factor of $2 - \frac{1}{d}$. 

%The organization of this section is as follows. First, we present additional assumptions that we make to utilize the structure of ridesharing problems in Section \ref{sec:costmin:rh_assumption}. In Section \ref{sec:costmin:k=2} we study the special case where at most $k=2$ users can be matched together, i.e. when the shareability hypergraph reduces to a shareability graph. Herein, we show that the optimal competitive ratio for deterministic algorithms is $\frac{3}{2}$ and is achievable in polynomial-time. We address the general case of $k > 2$ in Section \ref{sec:costmin:k>2}. For this case we first establish a connection between our problem and the set cover problem in Section \ref{sec:costmin:k>2:setcov_equiv}. Using this connection, we show that a competitive ratio of $2 - \frac{1}{d}$ is achievable in finite time, and $(2 - \frac{1}{d})(1 + \log k)$ is achievable in polynomial-time. We then establish that the latter result is optimal up to constant factors by showing that no polynomial-time algorithm can do better than $\log k - O (\log \log k)$ provided that $n = \text{poly}(d)$. 

\subsection{Ridesharing Assumptions}\label{sec:costmin:rh_assumption}

\noindent Let $H = (V,E,w)$ be the shareability hypergraph for an instance of \onlinemin{}. For the application to ridesharing, we leverage the following two assumptions.

\begin{assumption}[Monotonicity]\label{assumption:min:split}
%For any $e \in E$, and any partition $e = \sqcup_{j=1}^\ell e_j$, we have $e_j \in E$ for all $1 \leq j \leq \ell$ and $w(e) \geq \max_{1 \leq j \leq \ell} w(e_j)$. 
For any $e \in E$, and any $e' \subseteq e$, we have $e' \in E$ and $w(e') \leq w(e)$. 
\end{assumption}

%{\bf AMIN:} I think it is better to write the above assumption as monotonicity and call it that. I.e. say if $e1 \subseteq e2$ then $w(e1) <= w(e2)$

Assumption \ref{assumption:min:split} is realistic in ridesharing problems for the following reason. The cost $w(e)$ is the minimum cost required to service the customers in $e$, i.e. taking them from their pickup point to their dropoff point. Let $p$ be such a minimum cost path. Note that $p$ also serves all requests in $e'$, and therefore, the minimum cost required to serve $e'$ can only be smaller than $w(e)$. Thus we see that $w(e') \leq w(e)$. 

%{\bf AMIN:} I think the next assumption is just subadditivity. I am not sure about the sqcup. Is that a common notation for partition? 

\begin{assumption}[Subadditivity]\label{assumption:min:sublinear}
For any $e \in E$, and any partition $e = \sqcup_{j=1}^\ell e_j$, we have $w(e) \leq \sum_{j=1}^\ell w(e_j)$. 
\end{assumption}

The above assumption is also quite reasonable. If there exists $e \in E$ which does not satisfy Assumption \ref{assumption:min:sublinear}, then any matching that includes $e$ can be converted to a matching with lower cost by replacing $e$ by $\bigbrace{e_j}_{j=1}^\ell$. Therefore $e$ will never be in an optimal matching, and so we can remove $e$ from $E$. Finally, we will consider the single occupancy matching $M_{so}$ as a benchmark. %The competitive ratio of $M_{so}$ will serve as a baseline for algorithm performance.

\begin{definition}[Single Occupancy Matching]
The single occupancy matching serves each customer individually, i.e. $M_{so} =\emptyset $. It is clear from Assumption~\ref{assumption:min:sublinear} that the competitive ratio of $M_{so}$ is $k$. 
\end{definition}
\iffalse 
\begin{lemma}[Competitive Ratio of Single Occupancy]\label{lem:mincost:so}
The competitive ratio of the single occupancy matching for \onlinemin{} is $k$. 
\end{lemma}
\noindent See Section \ref{pf:lem:mincost:so} for a proof of Lemma \ref{lem:mincost:so}. 
\fi 
\subsection{The $k=2$ Case}\label{sec:costmin:k=2}

We first study the $k=2$ case where the shareability hypergraph is in fact a shareability graph. In this case, single occupancy is a $2$-competitive algorithm. We will thus limit our search to algorithms whose competitive ratio is smaller than $2$. We will use the following notation: 

\begin{itemize}
    \item For $i \in V$, we use $w_i \in \mathbb{R}_+$ to denote the cost of serving $i$ individually.
    \item For $(i,j) \in E$, we use $w(i,j) \in \mathbb{R}_+$ to denote the cost of serving $i$ and $j$ together. 
    \item Given a matching $M$, a vertex $i$ is unmatched under $M$ if it is not an endpoint of any edge in $M$.
    \item We use $w(M)$ to represent the cost of a matching $M$, defined via $w(M) := \sum_{(i,j) \in M} w(i,j) + \sum_{i \in U} w_i$ where $U$ is the set of unmatched vertices in $M$. 
\end{itemize}

The following lemma gives a lower bound on the competitive ratio of all deterministic and randomized algorithms and is proved in Appendix~\ref{pf:lem:mincost:detLB}.

\begin{lemma}[Lower Bound on the Competitive Ratio]\label{lem:mincost:detLB}
When $k=2$, no deterministic algorithm can have a competitive ratio smaller than $\frac{3}{2}$ and no randomized algorithm can have a competitive ratio smaller than $\frac{5}{4}$. 
\end{lemma}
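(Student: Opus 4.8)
The plan is to establish both bounds with a single small gadget, invoking an \emph{adaptive} adversary for the deterministic bound and Yao's minimax principle for the randomized bound. Throughout I normalize every vertex weight to $1$ and use only ``perfect'' edges of weight $1$; this is consistent with Assumptions~\ref{assumption:min:split} and \ref{assumption:min:sublinear}, since $\max(w_i,w_j)=1\le 1\le 2 = w_i+w_j$. With these weights, matching a pair saves exactly one unit of cost, so a matching of $m$ pairs on a set of $N$ vertices has cost $N-m$, and minimizing cost is the same as maximizing the number of matched pairs.

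For the deterministic $\frac{3}{2}$ bound I would take $d=3$ and reveal vertices $1,2,3$ with edges $(1,2)$ and $(1,3)$ of weight $1$ and no edge $(2,3)$. At time $t=3$ vertex $1$ becomes critical, both of its only possible partners ($2$ and $3$) have already arrived, and the algorithm must irrevocably decide vertex $1$'s fate; the adversary then reacts. If the algorithm matches $(1,2)$, the adversary reveals at $t=4$ a vertex $4$ with the single edge $(2,4)$ (weight $1$, $\text{diam}=2<d$); then OPT plays $(1,3)$ and $(2,4)$ for cost $2$, while the algorithm is stuck with $(1,2)$ and two singletons $3,4$ for cost $3$. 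The case ``match $(1,3)$'' is symmetric, with the adversary instead attaching vertex $4$ to $3$. If the algorithm leaves vertex $1$ unmatched, the adversary reveals nothing further: OPT matches $(1,2)$ and serves $3$ alone for cost $2$, while the algorithm pays $3$ for three singletons. In every branch $\mathrm{ALG}=3$ and $\mathrm{OPT}=2$, giving ratio $\tfrac32$; since an online algorithm's competitive ratio is the supremum over all instances, this single family already forbids any ratio below $\tfrac32$.

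The step I expect to be the main obstacle is ruling out that a clever deterministic rule dodges the trap by exploiting the \emph{asymmetric deadlines} of vertices $2$ and $3$ (vertex $2$ expires strictly before vertex $3$). The resolution, which is the crux of the argument, is that the only information distinguishing the two dangerous continuations---whether the late vertex $4$ attaches to $2$ or to $3$---is revealed at $t=4$, strictly after vertex $1$'s irrevocable decision at $t=3$. Hence no deterministic policy can correlate its $t=3$ choice with the adversary's later move, and the adversary can always choose the continuation that punishes whichever partner was consumed.

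For the randomized $\frac54$ bound I would make this indistinguishability quantitative via Yao's principle. Define two instances that are identical through time $t=3$ (vertices $1,2,3$ with edges $(1,2),(1,3)$) and differ only at $t=4$: instance $I_2$ reveals edge $(2,4)$ and instance $I_3$ reveals edge $(3,4)$. Both satisfy $\mathrm{OPT}\equiv 2$. Because any deterministic algorithm behaves identically on $I_2$ and $I_3$ up to $t=3$, it commits to the same fate for vertex $1$ before learning which instance it faces. A short case check over the three possible $t=3$ decisions (match $(1,2)$, match $(1,3)$, or leave $1$ alone), always letting the algorithm play optimally thereafter, shows the cost pairs $(\text{on }I_2,\text{on }I_3)$ are $(3,2)$, $(2,3)$, and $(3,3)$ respectively, so under the uniform distribution every deterministic algorithm has expected cost at least $\tfrac52$. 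With $\mathbb{E}[\mathrm{OPT}]=2$, Yao's minimax principle yields a lower bound of $\tfrac{5/2}{2}=\tfrac54$ on the competitive ratio of any randomized algorithm, completing the proof.
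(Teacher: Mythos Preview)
Your proposal is correct and follows essentially the same route as the paper's proof: the same four-vertex gadget with unit vertex and edge weights, vertex $1$ (the paper's $A$) becoming critical before vertex $4$ (the paper's $D$) arrives, the adversary attaching $4$ to whichever of $2,3$ the algorithm consumed, and Yao's principle over the two resulting instances for the randomized bound. The only cosmetic differences are that you phrase the deterministic case with an explicitly adaptive adversary (stopping after three vertices when $1$ is left unmatched, whereas the paper always reveals $D$) and that you apply Yao with the uniform distribution directly, while the paper routes the same computation through the pigeonhole observation $p_a+p_c\ge\tfrac12$ or $p_b+p_c\ge\tfrac12$.
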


Before presenting our algorithm, let us discuss a motivating example to highlight one of the key challenges in \onlinemin{}. Consider the following shareability graph where $w_A = 1$, $w_B = x > 1$, and $w(A,B) = x$. There is also a third vertex $C$ with $w_C \in \bigbrace{0,x}$ and $w(B,C) = x$. \\

\centerline{
\includegraphics[width=0.5\textwidth]{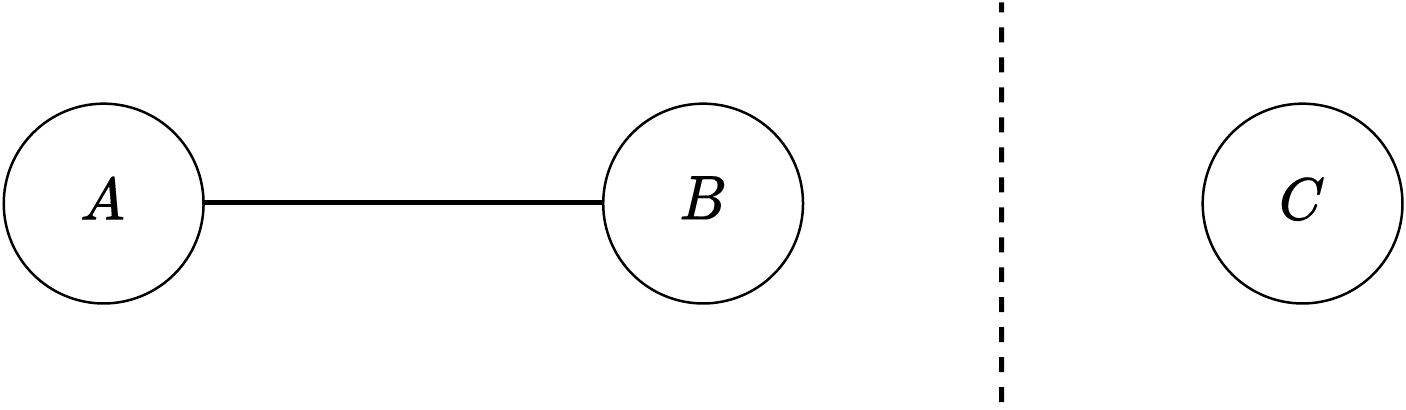}
}
Suppose $A$ becomes critical before $C$ arrives. In this case, our algorithm needs to decide whether to match $(A,B)$, or wait and match $(B,C)$ instead. The optimal matching depends on $w_C$. In particular, if $w_C = 0$, then the minimum cost is $w(A,B) + w_C = x$ and is achieved by matching $(A,B)$. However, if $w_C = x$, then the minimum cost is $w_A + w(B,C) = 1 + x$. Since we are interested in the worst-case error, we make the following two observations. 
\begin{enumerate}
    \item If $w_C = 0$, and we choose not to match $(A,B)$, then our resulting cost will be $w_A + w(B,C) = 1+x$. Recall that the minimum cost is $x$, so our matching has a cost that is $\frac{1+x}{x}$ times as large as the minimum cost. 
    \item If $w_C = x$ and we choose to match $(A,B)$, then our resulting cost will be $w(A,B) + w_C = 2x$. Recall that the minimum cost in this case is $1+x$, so our matching has a cost that is $\frac{2x}{1+x}$ times as large as the minimum cost.
\end{enumerate}

Since our goal is to minimize the worst case performance, this shows that we should match $(A,B)$ if $\frac{2x}{1+x} \leq \frac{1+x}{x}$, and not match $A$ otherwise. Equivalently, we should match $(A,B)$ if and only if $x \leq \frac{1}{\sqrt{2}-1}$. 

%\noindent One way of interpreting this result is the following. If $x$ is significantly larger than $1$, then matching $(A,B)$ reduces the cost from $w_A + w_B = 1+x$ to $w(A,B) = x$. However, since $x$ is very large, this reduction is not very significant. Matching $(A,B)$ in this case thus produces very little benefit, but most importantly, creates the risk that we will be unable to match $B$ to an ideal partner for $B$ that may show up later. \\

 This motivates thresholding algorithms which only accept matches that are sufficiently high quality as a way of balancing immediate reward with possible future reward. Algorithm \ref{alg:riskthreshold} describes the \texttt{Risk-Threshold} algorithm, which matches critical vertices to the most appealing partner according to a risk score. We also present \texttt{Risk-Threshold-ag} in Algorithm \ref{alg:riskthreshold_ag}, which is an agnostic version of \texttt{Risk-Threshold} that does not need to know when vertices become critical, and can thus be used in settings where users' willingness to wait is not observable. The competitive ratios of these algorithms are presented in the following results. 

\begin{algorithm}[h] 
    \caption{\texttt{Risk-Threshold-ag}$(H,\theta)$}\label{alg:riskthreshold_ag}
    \textbf{Input:} Shareability Graph $H = (V,E,w)$, Risk parameter $\theta \in [\frac{1}{2},1]$\;
    \textbf{Output:} Matching $M$\;
    Define $\theta_{ij} := \frac{w(i,j)}{w_i + w_j}$ for $(i,j) \in E$\;
    $M \leftarrow \emptyset$\;
    \While{There are vertices in the system}{
        \If{$\exists i,j : \theta_{ij} \leq \theta$}{
        $M \leftarrow M \cup \bigbrace{ (i,j) }$\;
        }
    }
    \textbf{Return} $M$
\end{algorithm}

\begin{algorithm}[h]
    \caption{\texttt{Risk-Threshold}$(H,\theta)$}\label{alg:riskthreshold}
    \textbf{Input:} Shareability Graph $H = (V,E,w)$, Risk parameter $\theta \in [\frac{1}{2},1]$\;
    \textbf{Output:} Matching $M$\;
    Define $\theta_{ij} := \frac{w(i,j)}{w_i + w_j}$ for $(i,j) \in E$\;
    $M \leftarrow \emptyset$\;
    \While{There are vertices in the system}{
    \If{$i$ is critical}{
    $j^* \leftarrow \arg\min_j \theta_{ij}$\;
    }
    \If{$\theta_{ij^*} \leq \theta$}{
    $M \leftarrow M \cup \bigbrace{(i,j^*)}$
    }
    }
    \textbf{Return} $M$
\end{algorithm}

\begin{theorem}
%[Achievability of $d$-agnostic deterministic algorithms]
\label{thm:mincost:detUB_ag}
Using $\theta = \frac{\sqrt{5}-1}{2}$, the competitive ratio of \texttt{Risk-Threshold-ag}$(\cdot, \frac{\sqrt{5}-1}{2})$ is $\frac{\sqrt{5}+1}{2} \approx 1.61$. 
\end{theorem}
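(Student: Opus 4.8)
The plan is to run a per-vertex charging argument against the optimal offline matching $M^*$, exploiting the golden-ratio identity satisfied by $\theta = \frac{\sqrt5-1}{2}$, namely $\theta^2 + \theta = 1$, equivalently $1 + \theta = \frac1\theta = \frac{\sqrt5+1}{2}$. I would also record the elementary consequence $2\theta < \frac1\theta$ (which holds since $\theta^2 = 1-\theta < \tfrac12$); this is what lets matched optimal edges be charged cheaply. Throughout I write $w_i$ for the singleton cost, and I use Assumption~\ref{assumption:min:split} (monotonicity) in the form $w(i,j) \ge \max\bigbrace{w_i,w_j} \ge \tfrac12(w_i+w_j)$ and Assumption~\ref{assumption:min:sublinear} (subadditivity) in the form $w(i,j) \le w_i + w_j$, so that every risk score satisfies $\theta_{ij} \le 1$.

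The first substantive step is a structural lemma about the matching $M$ returned by \texttt{Risk-Threshold-ag}: if $U$ denotes the set of vertices left unmatched by the algorithm, then $U$ contains no \emph{low-risk} edge, i.e. every edge $(i,j) \in E$ with $i,j \in U$ satisfies $\theta_{ij} > \theta$. This is where I expect the main difficulty, since it is the only place the online, temporal dynamics enter. The key observations are (i) because $H$ is a shareability hypergraph, any edge $(i,j)\in E$ has $\mathrm{diam}(i,j) = \abs{i-j} < d$, so the presence intervals $\bigbrace{i,\dots,i+d-1}$ and $\bigbrace{j,\dots,j+d-1}$ overlap and the two endpoints are simultaneously in the system for at least one timestep; and (ii) because matches are irrevocable, a vertex unmatched at termination is unmatched at every timestep. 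The agnostic rule maintains the invariant that, after its action at each timestep, no two \emph{currently present and unmatched} vertices span a low-risk edge. If some low-risk edge had both endpoints in $U$, then during their common presence window both would be present and unmatched, violating this invariant. Care is needed to state the invariant precisely for the agnostic algorithm, which never observes criticality and simply matches an available low-risk pair whenever one exists.

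With this in hand, I set up the charge $\mathrm{alg}(v) := \theta w_v$ if $v$ is matched by the algorithm and $\mathrm{alg}(v) := w_v$ if $v \in U$. Since every edge of $M$ is low-risk, $w(i,j) \le \theta(w_i+w_j)$, and hence $w(M) = \sum_{(i,j)\in M} w(i,j) + \sum_{i\in U} w_i \le \sum_v \mathrm{alg}(v)$. It then suffices to show that the total charge on each component of $M^*$ is at most $\frac1\theta$ times its optimal cost, after which I sum over the partition of $V$ induced by $M^*$. For an optimal singleton $i$ this is immediate: $\mathrm{alg}(i) \le w_i \le \frac1\theta w_i$. For an optimal edge $(a,b)\in M^*$ I split into three cases. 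If both $a,b\in U$, the structural lemma forces $\theta_{ab} > \theta$, so $\mathrm{alg}(a)+\mathrm{alg}(b) = w_a+w_b < \frac1\theta w(a,b)$. If both are matched, $\mathrm{alg}(a)+\mathrm{alg}(b) = \theta(w_a+w_b) \le 2\theta\, w(a,b) \le \frac1\theta w(a,b)$, using $w(a,b)\ge\tfrac12(w_a+w_b)$ together with $2\theta<\frac1\theta$. The mixed case, say $a$ matched and $b\in U$, is where the golden-ratio identity is essential: $\mathrm{alg}(a)+\mathrm{alg}(b) = \theta w_a + w_b \le (1+\theta)\max\bigbrace{w_a,w_b} = \frac1\theta \max\bigbrace{w_a,w_b} \le \frac1\theta w(a,b)$.

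Summing these bounds over all edges and singletons of $M^*$ yields $w(M) \le \sum_v \mathrm{alg}(v) \le \frac1\theta\bigpar{\sum_{(a,b)\in M^*} w(a,b) + \sum_{i} w_i} = \frac1\theta\, w(M^*) = \frac{\sqrt5+1}{2}\, w(M^*)$, which is the claimed competitive ratio. The only genuinely delicate part is the structural lemma (the temporal argument for the agnostic algorithm); the rest is the case analysis above, whose three cases are each settled by exactly one of the three facts $w(a,b)\ge\tfrac12(w_a+w_b)$, $2\theta<\frac1\theta$, and $1+\theta=\frac1\theta$.
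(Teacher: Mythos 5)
Your proof is correct and follows essentially the same route as the paper's: a per-vertex charging scheme compared against $M^*$ component by component, keyed on the fact that the algorithm's output is a maximal matching in the thresholded graph $(V,E_\theta)$ --- your structural lemma is exactly this maximality statement, and your temporal argument (edge diameter $<d$ plus irrevocability) is a more careful justification of it than the paper, which simply asserts maximality. The only cosmetic differences are that you split the optimal edges by the matched status of their endpoints under $M$ (which costs you the extra fact $2\theta^2 < 1$ for the both-matched case), whereas the paper splits them by membership in $E_\theta$ and keeps $\theta$ general so that the bound $\max\left(1+\theta, \frac{1}{\theta}\right)$ can be optimized at the end rather than fixing the golden ratio upfront.
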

\begin{theorem}
%[Achievability of deterministic algorithms]
\label{thm:mincost:detUB}
Using $\theta = \frac{2}{3}$, \texttt{Risk-Threshold}$(\cdot, \theta)$ is $\frac{3}{2}$-competitive. 
\end{theorem}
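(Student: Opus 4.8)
The plan is to prove $w(M) \le \tfrac{3}{2}\,w(M^*)$, where $M$ is the output of \texttt{Risk-Threshold}$(\cdot,\tfrac{2}{3})$ and $M^*$ is a minimum-cost matching, via a vertex-level charging argument built on the risk scores. The first step is to record what Assumptions \ref{assumption:min:split} and \ref{assumption:min:sublinear} say about every edge: monotonicity gives $w(i,j)\ge \max(w_i,w_j)$ and subadditivity gives $w(i,j)\le w_i+w_j$, so every risk score satisfies $\tfrac12 \le \theta_{ij}\le 1$. I would then split the cost of any matching across its vertices by assigning to a matched vertex $i$ with partner $m(i)$ the charge $\theta_{i\,m(i)}\,w_i$ and to an unmatched vertex the charge $w_i$. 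Since $w(i,j)=\theta_{ij}(w_i+w_j)$, these charges sum exactly to the matching cost, giving an exact decomposition $w(M)=\sum_i c_A(i)$ and $w(M^*)=\sum_i c_O(i)$.

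The easy observation is that any vertex $i$ that the algorithm matches is already safe: its charge is $\theta_{i\,m(i)}w_i \le \tfrac23 w_i$ (the algorithm only accepts edges of risk at most $\theta=\tfrac23$), while $c_O(i)\ge \tfrac12 w_i$ always, so $c_A(i)\le \tfrac43 c_O(i)\le \tfrac32 c_O(i)$. Hence the only vertices that can violate the target are the \emph{deficient} ones: vertices $a$ left unmatched by the algorithm but matched by $M^*$ through an efficient edge, i.e. to a partner $b$ with $\theta_{ab}<\tfrac23$. The heart of the argument is a structural lemma: for every deficient $a$, its optimal partner $b$ is matched by the algorithm. Because $a$ and $b$ share an edge they are within $d$ of each other, so both are present at the critical time of whichever becomes critical first; if $b$ were still available when $a$ became critical, the greedy rule would have matched $a$ to a best partner of risk $\le \theta_{ab}<\theta$, contradicting that $a$ is unmatched. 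Thus $b$ has already been consumed in an algorithm edge $(b,q)$, which necessarily satisfies $\theta_{bq}\le\theta$.

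The charging then pairs each deficient $a$ with its optimal partner $b$ and pays $a$'s deficit out of $b$'s slack; since $M^*$ is a matching, each $b$ is the partner of at most one such $a$, so no slack is double-spent. The central quantitative step is to show $c_A(a)+c_A(b)\le \tfrac32\big(c_O(a)+c_O(b)\big)$, which after substituting the charges reduces to $w_a\big(1-\tfrac32\theta_{ab}\big)\le w_b\big(\tfrac32\theta_{ab}-\theta_{bq}\big)$. Here the monotonicity bound is decisive: a small $\theta_{ab}$ forces $w_a$ and $w_b$ to be comparable, quantitatively $\min(w_a,w_b)\ge \tfrac{1-\theta_{ab}}{\theta_{ab}}\max(w_a,w_b)$, which is exactly the coupling needed to offset the deficit. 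Combined with the greedy relation $\theta_{bq}\le\theta_{ab}$ — valid when $b$ is the earlier-critical endpoint, so that $a$ was an available candidate $b$ declined in favor of $q$ — the inequality collapses to $\theta_{ab}\ge\tfrac12$, which holds. One then checks that $\theta=\tfrac23$ is precisely the value balancing the two ways the algorithm can err (accepting a not-quite-efficient edge versus passing on one), mirroring the motivating example; a different $\theta$ breaks one of the two checks, and the lower bound in Lemma \ref{lem:mincost:detLB} shows $\tfrac32$ is best possible.

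I expect the main obstacle to be the configuration in which the optimal partner $b$ becomes critical \emph{after} $a$ and is matched by the algorithm as the chosen partner of a third, earlier-critical vertex $q$. In this case the convenient inequality $\theta_{bq}\le\theta_{ab}$ is no longer available, and $b$'s slack alone need not cover $a$'s deficit. Handling it will require amortizing along the alternating path of $M\triangle M^*$ through $a$: spreading the residual deficit onto $q$ (which, being matched by the algorithm, is itself safe and carries surplus) and onward, then arguing that the accumulated per-vertex surpluses along the path dominate the boundary deficits. Verifying that this amortization closes at exactly $\tfrac32$, rather than leaking a larger constant, is the delicate part, and is where the precise value of the competitive ratio is ultimately pinned down.
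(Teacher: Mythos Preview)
Your plan is the paper's approach: the same vertex charges $v_i,v_i^*$, the same reduction to ``deficient'' vertices, and the same recognition that the difficulty is the configuration where $b$ is grabbed by an earlier-critical $q$ so that the greedy comparison $\theta_{bq}\le\theta_{ab}$ is unavailable. Your two-vertex pairing for the benign sub-case is correct and is essentially the paper's length-$2$ base case.

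The part you have not done --- and correctly flag as the crux --- is exactly where the work lies. The paper carries it out by decomposing $M\cup M^*$ into cycles and maximal alternating paths and proving $w(M;P)\le\tfrac32\,w(M^*;P)$ componentwise. Cycles and odd paths with more $M$-edges are immediate from your $\tfrac43$ observation. Even paths and odd paths with more $M^*$-edges are handled by induction on length: the induction step peels off one $M^*$-edge and one $M$-edge at the $M^*$ end and shows that this two-edge prefix alone satisfies the $\tfrac32$ ratio, using the monotonicity coupling $w_2\ge\tfrac{1-\theta_{12}}{\theta_{12}}w_1$ just as you do. The genuinely delicate piece is the length-$3$ base case $a\text{--}b\text{--}c\text{--}d$ with $(a,b),(c,d)\in M^*\cap E_\theta$ and $(b,c)\in M$: here one first argues that among $\{a,b,c,d\}$ the earliest-critical vertex must be $b$ or $c$ (otherwise the unmatched endpoint would have seized its neighbor), which restores a greedy inequality $\theta_{bc}\le\theta_{ab}$, and then closes with a case analysis on the relative sizes of $w_a,w_c,w_d$ that does land at exactly $\tfrac32$. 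So your anticipated amortization is right, but it is not a one-line redistribution onto $q$'s surplus; it is an induction whose base case needs the greedy ordering argument at length $3$.
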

%\noindent See Section \ref{pf:thm:mincost:detUB_ag} for a proof of Theorem \ref{thm:mincost:detUB_ag} and Section \ref{pf:thm:mincost:detUB} for a proof of Theorem \ref{thm:mincost:detUB}. In light of Lemma~\ref{lem:mincost:detLB}, Theorem \ref{thm:mincost:detUB} shows that \texttt{Risk-Threshold} has the optimal competitive ratio among deterministic algorithms. \\

\noindent Theorem~\ref{thm:mincost:detUB} in conjunction with Lemma~\ref{lem:mincost:detLB} shows that \texttt{Risk-Threshold} has the optimal competitive ratio among deterministic algorithms. 

While every customer is willing to wait $d$ timesteps in our model, \texttt{Risk-Threshold-ag} and \texttt{Risk-Threshold} can handle heterogeneous deadlines. The algorithms do not require homogeneous deadlines, and the proofs do not need this property either. Concretely, Theorems \ref{thm:mincost:detUB_ag} and \ref{thm:mincost:detUB} hold in the more general setting where the platform has until timestep $i + d_i$ to decide how to serve a user who arrives at timestep $i$, where the $d_i$ need not be the same. 

%To see this, we show that the algorithms behave the same in this heterogeneous case as they would in a homogeneous case with $d := \max_i d_i$ where the platform has until time $i+d$ to match $i$, but all vertices from $i+d_i$ to $i+d$ are incompatible with $i$. Since no new potential partners for $i$ arrive between $i+d_i$ and $i+d$, the set of possible matches for $i$ at timesteps $i+d_i$ and $i+d$ are the same. Hence our algorithms will have the exact same behavior.

Another interesting observation is that knowledge of willingness to wait in this model is not very valuable for deterministic algorithms. Indeed, \texttt{Risk-Threshold-ag} is $1.61$-competitive without knowledge of $d$, and Lemma \ref{lem:mincost:detLB} shows that knowledge of $d$ can only improve the competitive ratio by $0.11$. 
%This observation can be important if a ridesharing company is deciding whether to invest resources to obtain or estimate willingness to wait in hopes of improving their 2-capacity ridesharing service. 

%Indeed, the competitive ratio of Algorithm \ref{alg:riskthreshold_ag} is close to that of Algorithm \ref{alg:riskthreshold} which is optimal in light of Lemma \ref{lem:mincost:detLB} for the setting when $d$ is known.

We now prove Theorem~\ref{thm:mincost:detUB_ag}. The proof of Theorem~\ref{thm:mincost:detUB} is similar, but more technically involved, and can be found in Appendix~\ref{pf:thm:mincost:detUB}. 

\begin{proof}[Theorem~\ref{thm:mincost:detUB_ag}]
For a given instance $H$, let $M^*$ denote a minimum weight matching, and let $U^*$ be the set of vertices unmatched under $M^*$. For each $(i,j) \in E$, we denote $\theta_{ij} := \frac{w(i,j)}{w_i + w_j}$. For $\theta \in \bigbra{ \frac{1}{2}, 1 }$, define the set $E_\theta := \bigbrace{ (i,j) \in E : \theta_{ij} \leq \theta }$. Next, let $M$ be a matching produced by \texttt{Risk-Threshold-ag}$(H,\theta)$, and let $U$ be the set of unmatched vertices under $M$. Note that $M$ is a maximal matching in $(V,E_\theta,w)$. To compare $w(M)$ to $w(M^*)$, we assign to each vertex a score $v$ according to $M$ by the following rule:
\begin{align*}
    v_i &:= \casewise{
    \begin{tabular}{cc}
    $w_i$ & if $i \in U$ \\
    $\theta_{ij} w_i$ & if $(i,j) \in M$ for some $j \in V$ 
    \end{tabular}
    }
\end{align*}
Note that by construction, $\sum_{i \in V} v_i = w(M)$. One way to interpret $v$ is that $v_i$ is the contribution of vertex $i$ to the total cost of the matching. With this notation, we can now write
\begin{align*}
    w(M) &= \sum_{i \in V} v_i \overset{(a)}{=} \sum_{(i,j) \in M^*} (v_i + v_j) + \sum_{i \in U^*} v_i = \underbrace{\sum_{(i,j) \in M^* \cap E_\theta} (v_i + v_j)}_{\text{term 1}} + \underbrace{\sum_{(i,j) \in M^* \setminus E_\theta} (v_i + v_j)}_{\text{term 2}} + \underbrace{\sum_{i \in U^*} v_i}_{\text{term 3}},
\end{align*}
where $(a)$ is due to the fact that the set of matched nodes under $M^*$ and the set of unmatched nodes under $M^*$ form a partition of $V$. For term 1, note that $M^* \cap E_\theta$ is a matching in $(V,E_\theta,w)$. Since $M$ is a maximal matching in $(V,E_\theta,w)$, this means that for every $(i,j) \in M^* \cap E_\theta$, at least one of $i$ or $j$ is also matched in $M$. Without loss of generality, suppose $i$ is matched. Then we have 
\begin{align*}
    v_i + v_j &\leq \theta_{ij} w_i + w_j \leq \theta_{ij} \max(w_i, w_j) + \max(w_i, w_j) \leq (1+\theta) \max (w_i,w_j) \overset{(a)}{\leq} (1+\theta) w(i,j). 
\end{align*}
Where $(a)$ is due to Assumption \ref{assumption:min:split}. For term 2, note that $(i,j) \not\in E_\theta$ means that $\theta_{ij} > \theta$. Thus for any $(i,j) \in M^* \setminus E_\theta$, we have:
\begin{align*}
    v_i + v_j \leq w_i + w_j = \frac{1}{\theta_{ij}} w(i,j) \leq \frac{1}{\theta} w(i,j). 
\end{align*}
Finally, for term 3, note that $v_i \leq w_i$ is always true. Putting these observations together, we see that:
\begin{align*}
    w(M) &= \sum_{(i,j) \in M^* \cap E_\theta} (v_i + v_j) + \sum_{(i,j) \in M^* \setminus E_\theta} (v_i + v_j) + \sum_{i \in U^*} v_i \\
    &\leq \sum_{(i,j) \in M^* \cap E_\theta} (1+\theta) w(i,j) + \sum_{(i,j) \in M^* \setminus E_\theta} \frac{1}{\theta} w(i,j) + \sum_{i \in U^*} w_i \\
    &\leq \max \bigpar{ 1+\theta, \frac{1}{\theta} } \sum_{(i,j) \in M^* \cap E_\theta} w(i,j) + \max \bigpar{ 1+\theta, \frac{1}{\theta} } \sum_{(i,j) \in M^* \setminus E_\theta} w(i,j) + \max \bigpar{ 1+\theta, \frac{1}{\theta} } \sum_{i \in U^*} w_i \\
    &= \max \bigpar{ 1+\theta, \frac{1}{\theta} } \bigpar{ \sum_{(i,j) \in M^* \cap E_\theta} w(i,j) + \sum_{(i,j) \in M^* \setminus E_\theta} w(i,j) + \sum_{i \in U^*} w_i } = \max \bigpar{ 1+\theta, \frac{1}{\theta} } w(M^*). 
\end{align*}
Thus \texttt{Risk-Threshold}$(\cdot, \theta)$ is a $\max \bigpar{ 1+\theta, \frac{1}{\theta} }$-competitive algorithm. To finish, we now observe this value is minimized when $\theta = \frac{\sqrt{5}-1}{2}$, which implies that \texttt{Risk-Threshold}$\bigpar{\cdot, \frac{\sqrt{5}-1}{2} }$ is a $\frac{\sqrt{5}+1}{2}$-competitive algorithm for \onlinemin{} when $k=2$. \qed 
\end{proof}

\subsection{The $k>2$ Case}\label{sec:costmin:k>2}

This case can be addressed by a combination of ideas that we have discussed earlier. Therefore, we just state the results, sketch the main ideas, and refer the reader to the Appendix for the details. 

First, observe that the offline version of the problem \offlinemin{} is essentially a weighted set cover problem (see Appendix~\ref{sec:prelims:setcover}). To prove the equivalence, one needs to make sure Assumption \ref{assumption:min:split} does not make the problem strictly easier. This observation allows us to leverage hardness results from set cover \cite{Trevisan01} to prove hardness results for \onlinemin{}. 

%See Appendix~\ref{pf:lem:costmin:k>2:polyLB} for a proof of Lemma \ref{lem:costmin:k>2:polyLB}. 

We can also use the idea of batching with a random shift to get a postive result with loss of factor 2. Putting them all together gives the following: 

\begin{theorem}\label{thm:mincost:hypergraph:UB}
Let $\cA$ be an algorithm for $k$-\texttt{WSC} with approximation ratio $\rho(\cA)$. Then \texttt{Randomized-Batching}$(\cdot, \cA)$ is $\bigpar{2 - \frac{1}{d}} \rho(\cA)$-competitive for \onlinemin{}. In particular, \texttt{Randomized-}\texttt{Batching}$(\cdot, \cA)$ with $\cA =$\texttt{Weighted-Greedy} is $\bigpar{2-\frac{1}{d}}(1 + \log k)$-competitive for \onlinemin{}.

Furthermore, no algorithm with running time polynomial in $n$ and $k$ can obtain a better ratio than \onlinemin{} within $\log k - O(\log \log k)$  unless $P=NP$.
\end{theorem}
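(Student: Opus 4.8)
The plan is to split Theorem~\ref{thm:mincost:hypergraph:UB} into three claims: (i) the positive result that \texttt{Randomized-Batching}$(\cdot,\cA)$ with a $k$-\texttt{WSC} subroutine is $(2-\frac{1}{d})\rho(\cA)$-competitive; (ii) the specialization to \texttt{Weighted-Greedy} using its known $(1+\log k)$ approximation guarantee; and (iii) the NP-hardness lower bound of $\log k - O(\log\log k)$. The first two are positive results built on machinery already in the paper, while the third is a hardness reduction; I would expect the hardness part to be the main obstacle.

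For the positive direction, I would first establish that the offline problem \offlinemin{} is equivalent to weighted set cover of rank $k$ ($k$-\texttt{WSC}), being careful that Assumption~\ref{assumption:min:split} does not trivialize the reduction. Concretely, each hyperedge $e$ (together with the option of serving individual vertices) corresponds to a set covering the vertices in $e$ at cost $w(e)$, and by subadditivity (Assumption~\ref{assumption:min:sublinear}) the optimal matching is a minimum-cost exact cover, which under monotonicity coincides with a minimum-cost (not necessarily disjoint) cover since any overcovering can be pruned down to a partition without increasing cost. This lets me invoke a $\rho(\cA)$-approximation for the offline problem. Next I would run \texttt{Randomized-Batching} with this $\cA$ and adapt the analysis underlying Theorem~\ref{thm:raw_LB} to the minimization setting. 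The key difference is that in minimization, the random-shift argument loses a factor rather than gaining one: a fixed optimal offline matching $M^*$ has each edge $e$ fully contained in a random batch with probability $\frac{d-\mathrm{diam}(e)}{d} \geq \frac{1}{d}$, but an edge cut by the batch boundary forces both of its batches to cover the affected vertices separately. The clean way to bound this is to charge each optimal edge $e$ to the batches it intersects: using Observation~\ref{obs:indicatorsum}-style counting, in expectation over the shift $Z$ the batching solution pays at most $(2-\frac{1}{d})$ times the offline optimum per edge, because an edge of diameter at most $d-1$ straddles at most two consecutive batches, and each of the two batch-restricted covers pays for it, while the probability it is cleanly contained (hence paid once) is at least $\frac{1}{d}$. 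Combining the $\rho(\cA)$ loss from the offline subroutine with the $(2-\frac{1}{d})$ loss from batching yields the claimed $(2-\frac{1}{d})\rho(\cA)$ bound; plugging in \texttt{Weighted-Greedy} with $\rho=1+\log k$ gives the stated corollary.

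For the hardness direction, the plan is a gap-preserving reduction from weighted set cover. Since the offline relaxation $d>n$ makes \onlinemin{} collapse to \offlinemin{} and hence to $k$-\texttt{WSC}, any polynomial-time algorithm beating competitive ratio $r$ on \onlinemin{} would in particular approximate $k$-\texttt{WSC} within $r$. I would then cite the inapproximability of bounded-frequency/bounded-rank set cover: Trevisan~\cite{Trevisan01} shows that $k$-\texttt{WSC} cannot be approximated to within $\ln k - O(\ln\ln k)$ unless $P=NP$. The bookkeeping obstacle is ensuring the reduction respects the shareability-hypergraph structure of Definition~\ref{def:shareability} — in particular the diameter constraint $\mathrm{diam}(e)<d$ and rank $\le k$ — so that an arbitrary $k$-\texttt{WSC} instance embeds as a valid \onlinemin{} instance; taking $d>n$ makes the diameter constraint vacuous and keeping set sizes at most $k$ preserves rank, so the embedding is faithful. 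The final step is to confirm the additive $O(\log\log k)$ slack in Trevisan's bound matches the theorem's $\log k - O(\log\log k)$ statement, completing the lower bound.

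The main obstacle I anticipate is the minimization analysis of batching: unlike the maximization case where uncovered edges merely fail to contribute, here every optimal edge must be paid for in every batch it touches, so I must argue carefully that no edge is charged to more than two batches and that the expected total charge telescopes to $(2-\frac{1}{d})$ rather than a larger constant. The monotonicity assumption is what rescues this, since it guarantees that restricting an edge to the portion inside a batch yields a legal (cheaper) sub-edge to pay with, preventing the per-batch cost from exceeding $w(e)$.
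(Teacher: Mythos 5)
Your positive direction reproduces the paper's argument essentially step for step: by Assumption~\ref{assumption:min:split} each optimal edge $e$ can be replaced inside a batch by its restriction $e \cap V_{z,i}$ at no extra cost, every edge of diameter less than $d$ meets at most two consecutive batches, and the clean-containment probability of at least $\frac{1}{d}$ gives expected charging multiplicity at most $2\bigpar{1-\frac{1}{d}} + \frac{1}{d} = 2-\frac{1}{d}$, which multiplies the offline loss $\rho(\cA)$; the specialization to \texttt{Weighted-Greedy} with $\rho = 1+\log k$ is likewise identical to the paper's. This part is correct and matches the paper's proof (including the matching-to-cover and cover-to-matching pruning argument of Observation~\ref{obs:costmin:k>2:setcov_equiv}).

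The gap is in the hardness direction. You assert that a polynomial-time algorithm beating ratio $r$ for \onlinemin{} would approximate arbitrary $k$-\texttt{WSC} within $r$, and you list only the diameter constraint (vacuous once $d>n$) and the rank bound as the obstacles to embedding. But Theorem~\ref{thm:mincost:hypergraph:UB} concerns \onlinemin{} under the ridesharing assumptions, so such an algorithm is only guaranteed to work on instances whose edge set is downward closed with monotone, subadditive weights (Assumptions~\ref{assumption:min:split} and~\ref{assumption:min:sublinear}). An arbitrary hard $k$-\texttt{WSC} instance $(X,\cS,w)$ does not satisfy these; to embed it you must pass to the downward closure $\overline{\cS}$, and you must then argue that this closure does not change the optimum --- otherwise Trevisan's gap instances might not survive the embedding, and you would only have shown hardness for the a priori easier, downward-closed subclass. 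The paper flags exactly this issue (``one needs to make sure Assumption~\ref{assumption:min:split} does not make the problem strictly easier'') and resolves it in Lemma~\ref{lem:costmin:k>2:polyLB}: for unweighted $k$-\texttt{SC}, every set $S \in \overline{\cS}$ sits inside some $T_S \in \cS$, so any cover drawn from $\overline{\cS}$ lifts set-by-set to a cover from $\cS$ of the same cardinality, whence $k$-\texttt{SC}$(X,\cS)$ and $k$-\texttt{SC}$(X,\overline{\cS})$ have equal optima and the $\log k - O(\log\log k)$ hardness of \cite{Trevisan01} transfers. Your proposal gestures at this concern (``does not trivialize the reduction'') only in the positive direction, where it is not needed, and omits it in the hardness direction, where it is the one substantive verification; as written, the lower-bound claim does not follow.
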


See Appendix~\ref{pf:thm:mincost:hypergraph:UB} for a proof of Theorem~\ref{thm:mincost:hypergraph:UB}.

%END THE SECTION HERE. MOVE EVERYTHING ELSE TO APPENDIX.

\section{Conclusions}\label{sec:conclusion}

In this paper we studied online hypergraph matching with delays motivated by applications to high capacity ridesharing. We studied both the utility maximization and cost minimization variants of the problem. For utility maximization, we presented a polynomial-time randomized batching algorithm and proved that it attains the optimal competitive ratio. We studied cost minimization under additional assumptions specialized to the ridesharing problem. For $k=2$ we introduced a thresholding algorithm and proved that it achieves the optimal competitive ratio for deterministic algorithms. For $k>2$, we characterize both the optimal competitive ratio and the optimal polynomial-time competitive ratio up to a factor of $2 -\frac{1}{d}$. 

%We also discuss extensions where users' willingness to wait $d$ is heterogeneous and unobserved. For $k>2$ we show that a randomized batching algorithm is $(2-\frac{1}{d})$-competitive, however it is not polynomial-time computable. Leveraging a connection with set cover, we proved that a polynomial-time randomized greedy batching algorithm is $(2-\frac{1}{d}) \log k$-competitive for \onlinemin{}, and it is NP-hard to do better than $\log k - O (\log \log k)$. 

There are several interesting directions for future work. With regards to \onlinemax{}, adding additional application-specific assumptions could enable better approximation guarantees. For \onlinemin{}, closing the $(2-\frac{1}{d})$ gap between the achievability and impossibility results for $k>2$ and finding a randomized algorithm that beats the $\frac{3}{2}$ threshold for $k=2$ would both be interesting. For both models, extensions to heterogeneous and unobserved willingness to wait for $k>2$ would provide compelling insight for practitioners and real-world deployment. 
 
%  END  - 12/9/2019 insertion

%
% ---- Bibliography ----
%
% BibTeX users should specify bibliography style 'splncs04'.
% References will then be sorted and formatted in the correct style.
%
\bibliographystyle{splncs04} %YOU NEED TO CHANGE TO splncs04
\bibliography{mwtsao.bib}

\ifappendix 

\newpage 

\appendix

\section{Appendix: Background}

In this section we introduce the hypergraph matching and set cover problems along with existing achievability and hardness results for these problems. Hypergraph matching and set cover play an important role in our results on utility maximization and cost minimization respectively.

\subsection{Hypergraphs}\label{sec:prelims:hypergraph}

In this section we introduce hypergraphs and the hypergraph matching problem\footnote{also known as set packing in the computer science literature}. 
%The hypergraph matching problem\footnote{also known as set packing in the computer science literature} plays a central role in our algorithmic findings. For this reason, we introduce the hypergraph matching problem and discuss existing results for hypergraphs with bounded rank.

\begin{definition}[Hypergraph]\label{def:hypergraph}
A hypergraph is a generalization of a graph where edges can join any number of vertices. A weighted undirected hypergraph $H = (V,E,w)$ is defined by its vertex set $V$, its hyperedge set $E$, and its hyperedge weights $w$. The hyperedge weights $w : E \rightarrow \mathbb{R}$ assign to each hyperedge a real value. For any $E' \subset E$, we define the weight of $E'$ as $w(E'):= \sum_{e \in E'} w(e)$. 
\end{definition}
\noindent Below we list some definitions related to hypergraphs that will be used throughout this paper.

\begin{definition}[Hypergraph Rank]\label{def:hypergraph:rank}
The rank of a hypergraph is the maximum cardinality of a hyperedge in the hypergraph, i.e. the hyperedges of a rank $k$ hypergraph have at most $k$ vertices. 
\end{definition}

\begin{definition}[Matching]
For a graph or hypergraph $H = (V,E)$, a matching $M$ in $H$ is a set of edges $M\subseteq E$ where all edges are disjoint. 
\end{definition}

\begin{definition}[Maximal Matching]
A matching $M$ in $H$ is maximal if it is not a strict subset of any other matching.
\end{definition}

\begin{definition}[Maximum Matching]
A matching $M$ in $H$ is maximum if it there is no matching with greater cardinality.
\end{definition}

\begin{definition}[Edge Diameter of Graphs and Hypergraphs]\label{def:edgediam}
In a graph or hypergraph $H = (V,E,w)$ with ordered vertices $V = [n]$, we define the diameter of an edge to be $\text{diam}(e) := \max_{i,j \in e} i-j$. Furthermore, we define the edge diameter of a graph $G$ to be $\text{diam}(E) := \max_{e\in E} \text{diam}(e)$, where $E$ is its edge set. 
\end{definition}

\subsubsection{Finding matchings in rank $k$ hypergraphs}

We now define the \offlinemax{} problem, which plays a key role in our discussion on online hypergraph matching.
\begin{definition}[\offlinemax{}]
The \offlinemax{} problem is to find a maximum weight matching in a rank $k$ hypergraph $H$. The maximum weight matching problem in graphs is a special case where $k=2$.
\end{definition}

\noindent The best known algorithms for \offlinemax{} \cite{Berman00,ChandraH01a} are combinatorial in nature and are $\frac{2}{k}$-competitive. Convex relaxations are surveyed in \cite{ChanLau12} who show that the standard linear programming relaxation of \offlinemax{} can be rounded to produce a $\frac{1}{k-1 + \frac{1}{k}}$-competitive algorithm. Conversely, \cite{Hazan06} proves that it is NP-hard to approximate \offlinemax{} better than $\Omega \bigpar{ \frac{\log k}{k} }$, showing that existing methods are optimal up to log factors. \\

\noindent Among all of the existing methods, we focus our discussion on a \texttt{Greedy} algorithm outlined in Algorithm~\ref{alg:offlinegreedy}. As described in Lemma \ref{lem:greedyratio}, \texttt{Greedy} is appealing because its competitive ratio is within a constant factor of the best known result and also runs in nearly linear time.

\begin{algorithm}
\caption{\texttt{Greedy}}\label{alg:offlinegreedy}
\textbf{Input:} Weighted Hypergraph $H = (V,E,w)$\;
\textbf{Output:} A matching $M$ in $H$\;
Set $M \leftarrow \emptyset$\;
Sort edges in $E=\bigbrace{e_j}_{j=1}^{|E|}$ according to their weights, in decreasing order to obtain $\bigbrace{\widetilde{e_j}}_{j=1}^{|E|}$\;
\For{$1\leq j \leq |E|$}{
\If{$M \cup \bigbrace{\widetilde{e}_j}$ is a matching}{
	$M \leftarrow M \cup \bigbrace{\widetilde{e}_j}$\;
}
}
Return $M$
\end{algorithm}

\begin{lemma}[\texttt{Greedy} is $\frac{1}{k}$-competitive]\label{lem:greedyratio}
The competitive ratio of Algorithm \ref{alg:offlinegreedy} is at least $\frac{1}{k}$ for \offlinemax{}. Furthermore, its runtime is $O \bigpar{\abs{E} \log \abs{E}}$. 
\end{lemma}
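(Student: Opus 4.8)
The plan is to prove the approximation guarantee by a charging argument and to handle the runtime by a direct accounting of the two phases of \texttt{Greedy}. Let $M$ denote the matching returned by \texttt{Greedy} and let $M^*$ be a maximum weight matching in $H$. The single structural fact I would extract from the algorithm is this: \texttt{Greedy} processes edges in nonincreasing order of weight and adds an edge exactly when it is disjoint from everything added so far. Consequently, whenever an edge $e$ is rejected, it is because $e$ intersects some edge $g \in M$ that was added earlier; since $g$ was processed before $e$, we have $w(g) \geq w(e)$.

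With this in hand, I would define a map $f : M^* \to M$ as follows. If $e^* \in M$, set $f(e^*) = e^*$; otherwise $e^*$ was rejected, and I set $f(e^*)$ to be any edge of $M$ intersecting $e^*$ that was already present when $e^*$ was considered. By the structural fact, $w(f(e^*)) \geq w(e^*)$ in either case. The crux is to bound, for a fixed $g \in M$, the number of edges $e^* \in M^*$ with $f(e^*) = g$. Every such $e^*$ shares at least one vertex with $g$, and because $M^*$ is a matching its edges are pairwise vertex-disjoint; hence distinct preimages of $g$ must occupy distinct vertices of $g$. Since $\abs{g} \leq k$ by the rank bound, at most $k$ edges of $M^*$ map to $g$.

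Combining these two ingredients gives the bound directly:
\[
w(M^*) = \sum_{e^* \in M^*} w(e^*) \leq \sum_{e^* \in M^*} w(f(e^*)) = \sum_{g \in M} \abs{f^{-1}(g)}\, w(g) \leq k \sum_{g \in M} w(g) = k\, w(M),
\]
so $w(M) \geq \frac{1}{k} w(M^*)$, which is the claimed competitive ratio. For the runtime, sorting the $\abs{E}$ edges by weight costs $O(\abs{E} \log \abs{E})$. I would then maintain a Boolean array marking which vertices are already covered by $M$; testing whether a candidate edge is disjoint from $M$, and updating the array upon insertion, each touches at most $\abs{e} \leq k$ vertices, which is $O(1)$ for fixed $k$. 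The scan over all edges therefore costs $O(\abs{E})$, and the total is dominated by the sort, giving $O(\abs{E} \log \abs{E})$.

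The argument is elementary and I do not expect a genuine obstacle; the one point requiring care is the comparison $w(f(e^*)) \geq w(e^*)$, which relies crucially on the nonincreasing processing order, together with correctly treating the case $e^* \in M$ so that $f$ is well defined on all of $M^*$. The $k$-to-one charging bound is where the rank of the hypergraph enters, and it is the only place the hypothesis $\abs{e} \leq k$ is invoked.
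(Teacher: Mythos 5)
Your proof is correct. The paper itself never proves Lemma~\ref{lem:greedyratio} --- it is stated as a known fact, with the main text citing \cite{Berman00,ChandraH01a} for the surrounding context --- and your charging argument is precisely the standard proof being implicitly invoked: the map $f$ sending each $e^* \in M^*$ to a no-lighter intersecting edge of $M$ is well defined by the nonincreasing processing order, and the rank bound $\abs{e} \leq k$ combined with the vertex-disjointness of $M^*$ makes $f$ at most $k$-to-one, yielding $w(M^*) \leq k\, w(M)$. Your runtime accounting also matches the paper's own usage elsewhere (the proof of Theorem~\ref{thm:online_polyLB} charges $O(k)$ per disjointness check after sorting), and since the model fixes $k$ as a constant, $O(\abs{E}\log\abs{E} + k\abs{E}) = O(\abs{E}\log\abs{E})$ as claimed.
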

%Lemma \ref{lem:greedyratio} is a widely known folklore result, but for completeness we provide a proof in Appendix~\ref{pf:lem:greedyratio}.

%In the remainder of this paper, we use techniques from \offlinemax{} as subroutines to develop competitive algorithms for online hypergraph matching where decisions must be made without full knowledge of the hypergraph. While we focus our discussion on \texttt{Greedy}, local search and linear programming can also be used instead.

\subsection{Set Cover}\label{sec:prelims:setcover}

Crucial to our discussion on minimum cost matchings will be the set cover problem and its variants, which we define below.

\begin{definition}[Weighted Set Cover (\texttt{WSC})]\label{prob:WSC}
Given a set $X$, a collection $\cS \subset 2^X$ of subsets of $X$ satisfying $\cup_{S \in \cS} S = X$ and a weight function $w : \cS \rightarrow \mathbb{R}_+$, the weighted set cover problem \texttt{WSC}$(X,\cS,w)$ asks the following minimization problem
\begin{align*}
    \underset{\cE \subset \cS}{\text{minimize }} &\sum_{S \in \cE} w(S) \\
    \text{s.t. } & \cup_{S \in \cE} S = X. 
\end{align*}
\end{definition}

\begin{definition}[Set Cover (\texttt{SC})]\label{prob:SC}
An instance of \texttt{WSC}$(X,\cS,w)$ is an \textit{unweighted} set cover problem if $w(S) = 1$ for every $S \in \cS$. We denote such an instance as $\texttt{SC}(X,\cS)$.
\end{definition}

\begin{definition}[Set Cover with a rank constraint ($k$-\texttt{WSC})]\label{prob:kSC}
An instance \texttt{WSC}$(X,\cS,w)$ is rank $k$ if $\abs{S} \leq k$ for every $S \in \cS$. We denote the rank $k$ set cover problem as $k$-\texttt{WSC}$(X,\cS,w)$. If in addition, $w(S) = 1$ for every $S \in \cS$, then this is an instance of rank $k$ unweighted set cover, which we denote as $k$-\texttt{SC}$(X,\cS)$.  
\end{definition}

\noindent The set cover problem and its variants have been extensively studied in the computer science literature. A simple greedy algorithm was shown in \cite{Johnson74,Lovasz75} to be $1 + \log n$ competitive for \texttt{SC}, where $n = \abs{X}$. It was later shown by \cite{Feige98} that this performance is optimal up to lower order terms. Namely, they showed that a polynomial-time $(1-\epsilon) \log n$ competitive ratio is achievable only if NP has quasi-polynomial-time algorithms. \\

\noindent For $k$-\texttt{SC}, the greedy algorithm presented in \cite{Johnson74,Lovasz75} is $1 + \log k$ competitive. \cite{Chvatal79} extended this result to the weighted version of the problem $k$-\texttt{WSC}. Trevisan showed that this performance is optimal up to lower order terms in \cite{Trevisan01} by showing that it is NP-hard to approximate $k$-\texttt{SC} better than $\log k - O(\log \log k)$. 

\newpage 
\section{Appendix: Proofs for Utility Maximization}

\subsection{Visualizing the reduction from $d$-\texttt{ASP} to \onlinemax{}}\label{dASP_vis}
\begin{figure}[h]
\centerline{
	\includegraphics[width = 1.0\textwidth]{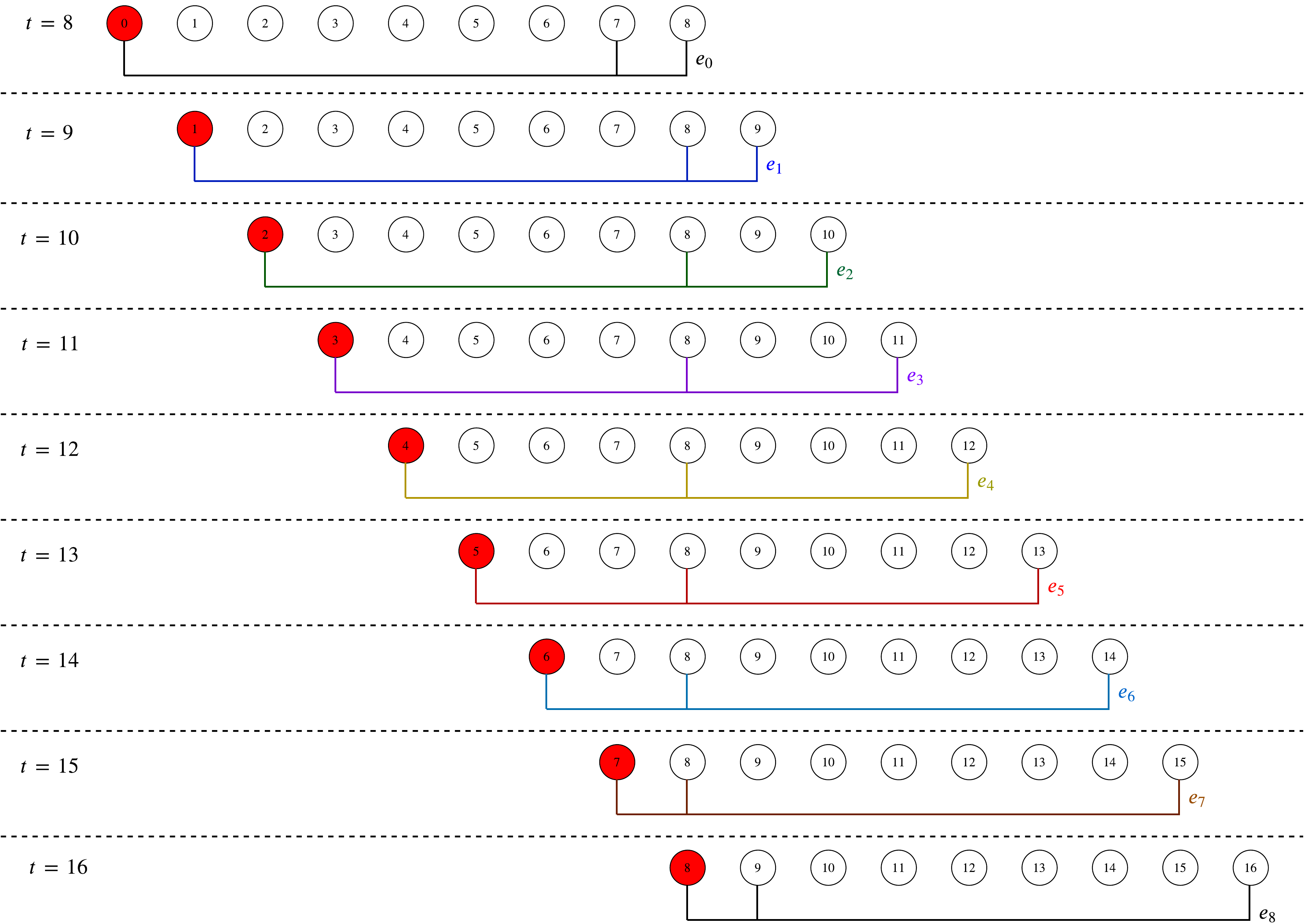}
}
\caption{An example of a shareability hypergraph $G \in \cG$ evolving in time. This example has $d=9$. Vertices arrive one per timestep on the right, and depart one per timestep from the left. A vertex is red during the last timestep it is in the system. For each $t$, $e_t$ is revealed at time $t+d-1$, when its last vertex appears. It also becomes critical at this time, since $t \in e_t$ is about to leave the system. Therefore, from timestep $d-1$ onward, one edge, along with its weight, is revealed each timestep, and can only be included into the matching at that timestep. Furthermore, at most one edge can be in the matching, since all edges contain the node $d-1=8$. Hence, the family $\cG$ encodes the \texttt{$d$-ASP} problem.}
\label{fig:d-ASP}
\end{figure}

\subsection{Proof of Lemma \ref{lem:advsecretary} }\label{pf:lem:advsecretary}
For any (possibly randomized) algorithm $\cA$, we use the random variable $W(\cA)$ to denote the aptitude of the candidate chosen by $\cA$. Let $\cA_t$ be any algorithm with the optimal competitive ratio $\rho(\cA_t)$ for \texttt{$t$-ASP}. First, $\rho(\cA_t) \geq \frac{1}{t}$ by the trivial algorithm $\cA_t^*$ which ignores all scores, and commits to one of the $t$ secretaries uniformly at random. Indeed,
\begin{align*}
\mathbb{E} \bigbra{ W(\cA_{t}^*) } &= \frac{1}{t} \sum_{\tau = 1}^t w_\tau = \frac{\norm{w}_1}{t} \geq \frac{\norm{w}_\infty}{t} = \frac{\texttt{OPT}}{t}. 
\end{align*}
Therefore, all we need to show is that $\rho(\cA_t) \leq \frac{1}{t}$, which we will do by induction. \\

\noindent \textit{Base Case:} For $t = 1$ the problem is trivial; there is only one candidate, so hiring the candidate gives the optimal solution and the optimal competitive ratio is $\frac{1}{1} = \frac{1}{t}$. \\

\noindent \textit{Induction Step:} Suppose that $\rho(\cA_t) = \frac{1}{t}$ for some $t \geq 1$. For the $(t+1)\texttt{-ASP}$, at the first timestep, $\cA_{t+1}$ decides to hire the first candidate with some probability $p$. If the first candidate is hired, the algorithm collects utility $w_1$. If the first candidate is not hired, then the algorithm must solve a secretary problem on the remaining $t$ candidates. Therefore, 
\begin{align*}
\mathbb{E} \bigbra{ W(\cA_{t+1}) } = p w_1 + (1-p) \mathbb{E} \bigbra{W(\cA_t)}.
\end{align*}
Therefore the competitive ratio of $\cA_{t+1}$ is
\begin{align}\label{eqn:secretaryinduction}
\rho(\cA_{t+1}) = \min_{w \in \mathbb{R}^{t+1}_+} \frac{p w_1}{\norm{w}_\infty} + (1-p)\frac{\mathbb{E}[W(\cA_{t})]}{\norm{w}_\infty}
\end{align}
Let $z \in \mathbb{R}^{t}_+$ be a worst-case instance for $\cA_{t}$ so that $\mathbb{E}[W(\cA_t)] = \frac{\texttt{OPT}}{t}$. Now let $[w_2,w_3,...,w_{t+1}]^\top = \alpha z$ for some $\alpha > 0$. Sending $w_1 \rightarrow \infty$, \eqref{eqn:secretaryinduction} converges to $p$. Sending $\alpha \rightarrow \infty$, eventually we have $\norm{w}_\infty = \norm{z}_\infty$, and \eqref{eqn:secretaryinduction} converges to $(1-p)\rho(\cA_t) = \frac{1-p}{t}$. Therefore,
\begin{align*}
\rho(\cA_{t+1}) &\leq \min \bigpar{ p, \frac{1-p}{t} }.
\end{align*}
The best choice of $p$ to maximize this upper bound is $p = \frac{1}{t+1}$, which yields $\rho(\cA_{t+1}) = \frac{1}{t+1}$. This completes the induction step, and thus completes the proof. 

\subsection{Proof of Theorem \ref{thm:online_polyLB} }\label{pf:thm:online_polyLB}

Let $M^*$ be a maximum weight matching for an instance $H = (V,E,w)$ of \onlinemax{}. Partition $M^*$ into two disjoint subsets $M_{s}^* \cup M_{\ell}^*$ defined via
\begin{align*}
M_{s}^* &:= \bigbrace{e \in M^* : \text{diam}(e) \leq d-k} \\
M_{\ell}^* &:= \bigbrace{e \in M^* : \text{diam}(e) > d-k},
\end{align*} 
where $M_{s}^*$ contains all of the edges of $M^*$ whose diameter is at most $d-k$, and $M_{\ell}^*$ contain the other edges. First note that $w(M^*) = w(M_{s}^*) + w(M_{\ell}^*)$. \\

\noindent Recall the definition of batches: For each $z \in \bigbrace{0,1,...,d-1}$ and $i \leq \frac{n}{d}$, we define the batch $W_{z,i} := (V_{z,i}, E_{z,i},w)$ to be a subgraph of $H$ in the following way:
\begin{enumerate}
	\item $V_{z,i} = \bigbrace{z + id + t}_{t=0}^{d-1}$ is the set of nodes that arrive between times $z+id$ and $z+(i+1)d-1$. 
	\item $E_{z,i} \subset E$ are all edges whose endpoints are a subset of $V_{z,i}$. 
\end{enumerate}
We then use $E_z := \bigcup_{i} E_{z,i}$. 

\noindent Now suppose we have a procedure $\cA$ that, for every pair $z,i$, can find a matching $M_{z,i} \subset E_{z,i}$ so that
\begin{align}\label{eqn:hybrid_ratio}
w(M_{z,i}) \geq w(M_\ell^* \cap E_{z,i}) + \frac{1}{k} w(M_{s}^* \cap E_{z,i}).
\end{align}
Next we show that Algorithm \ref{alg:randombatch} using such an $\cA$ as a subroutine is $\frac{1}{d}$-competitive for \onlinemax{}. 
%If we run Algorithm \ref{alg:randombatch} with $\cA$ as a subroutine, i.e. by choosing $Z$ uniformly at random from $\bigbrace{0,1,...,d-1}$ and applying $\cA$ to each of the sets $\bigbrace{E_{Z,i}}_{i=0}^{n/d}$ to produce $\bigbrace{M_{Z,i}}_{i=1}^{n/d}$, then $M := \bigcup_{i=1}^{n/d} M_{Z,i}$ is a matching since the sets $\bigbrace{E_{Z,i}}_{i=0}^{n/d}$ are vertex disjoint, and
\begin{align}
\mathbb{E} \bigbra{ w(M) } &= \sum_{z=0}^{d-1} \mathbb{P} \bigbra{Z=z} \mathbb{E} \bigbra{ w(M) | Z = z} \label{eqn: 1/dpoly}\\
&= \frac{1}{d} \sum_{z=0}^{d-1} \sum_{i=0}^{n/d} w(M_{z,i}) \nonumber \\
&\overset{(a)}{\geq} \frac{1}{d} \sum_{z=0}^{d-1} \sum_{i=0}^{n/d} \bigbra{ w(M_\ell^* \cap E_{z,i}) + \frac{1}{k} w(M_{s}^* \cap E_{z,i}) } \nonumber \\
&= \frac{1}{d} \sum_{z=0}^{d-1} \bigbra{ w(M_\ell^* \cap E_{z}) + \frac{1}{k} w(M_{s}^* \cap E_{z}) } \nonumber \\
&= \frac{1}{d} \sum_{z=0}^{d-1} \sum_{e \in M_\ell^* \cap E_{z}} w(e) + \frac{1}{kd} \sum_{z=0}^{d-1} \sum_{e \in M_s^* \cap E_{z}} w(e) \nonumber \\
&= \frac{1}{d} \sum_{e \in M_\ell^*} w(e) \sum_{z=0}^{d-1} \mathds{1}_{[e \in E_z]} + \frac{1}{kd} \sum_{e \in M_s^*} w(e) \sum_{z=0}^{d-1} \mathds{1}_{[e \in E_z]} \nonumber \\
&\overset{(b)}{\geq} \frac{1}{d} \sum_{e \in M_\ell^*} w(e) + \frac{1}{kd} \sum_{e \in M_s^*} w(e) k = \frac{1}{d} \sum_{e \in M^*} w(e) = \frac{w(M^*)}{d}. \nonumber 
\end{align}
Where $(a)$ is due to condition \eqref{eqn:hybrid_ratio} and $(b)$ is due to Observation \ref{obs:indicatorsum}; edges in $M_s^*$ have diameter at most $d-k$ and edges in $M_\ell^*$ have diameter at most $d-1$. Therefore if we can find a $\cA$ that runs in polynomial-time and satisfies \eqref{eqn:hybrid_ratio} for all pairs $z,i$, then Algorithm \ref{alg:randombatch} using $\cA$ as a subroutine will achieve a competitive ratio of $\frac{1}{d}$ in expectation. 

\subsubsection{Finding a subroutine that satisfies \eqref{eqn:hybrid_ratio}:}

Let $W_{z,i} = (V_{z,i},E_{z,i},w)$ be a batch. Without loss of generality (by shifting the time indices) we will assume that $V_{z,i} = \bigbrace{0,1,...,d-1}$, i.e. $z=i=0$. Recalling that \texttt{Greedy} is $\frac{1}{k}$-competitive for \offlinemax{}, for $M = \texttt{Greedy}(W_{z,i})$ we have
\begin{align}
w(M) &\geq \frac{1}{k} w(M^* \cap E_{z,i}) \nonumber \\
&= \frac{1}{k} w(M_\ell^* \cap E_{z,i}) + \frac{1}{k} w(M_s^* \cap E_{z,i}).\label{eqn:greedy_short_long}
\end{align}
Following the same analysis as \eqref{eqn: 1/dpoly}, running Algorithm \ref{alg:randombatch} with \texttt{Greedy} gives
\begin{align}
\mathbb{E} \bigbra{w(M)} &\geq \frac{1}{kd} \bigpar{ \sum_{e \in M_\ell^*} w(e) \sum_{z=0}^{d-1} \mathds{1}_{[e \in E_z]} + \sum_{e \in M_s^*} w(e) \sum_{z=0}^{d-1} \mathds{1}_{[e \in E_z]} } = \frac{1}{kd} w(M_\ell^*) + \frac{1}{d}  w(M_s^*).\label{eqn:rb_greedy_short_long} 
\end{align}

This is not sufficient for our purposes because the $\frac{1}{k}$ coefficient in front of $w(M_\ell^* \cap E_{z,i})$ in \eqref{eqn:greedy_short_long} leads to a $\frac{1}{kd} w(M_\ell^*)$ contribution to the bound in \eqref{eqn:rb_greedy_short_long}. Note that the reason we are able to tolerate the $\frac{1}{k}$ coefficient on $w(M_s^* \cap E_{z,i})$ in \eqref{eqn:hybrid_ratio} is because of Observation \ref{obs:indicatorsum}; each $e \in M_s^*$ is in $E_z$ for at least $k$ choices of $z$, which will cancel the $\frac{1}{k}$ coefficient. However, edges in $M_\ell^*$ are in $E_z$ for less than $k$ choices of $z$, and hence we cannot do the same cancellation. Thus, we need to find a way to modify \texttt{Greedy} to get a better approximation factor for $w(M_\ell^* \cap E_{z,i})$. \\

\noindent To this end, we have the following observation. Let $L := \bigbrace{0,1,..,k-1}$ and $R := \bigbrace{d-k,d-k+1,...,d-1}$ be the sets of the first and last $k$ vertices of $W_{z,i}$ respectively. 

\begin{observation}
For $e \in M^* \cap E_{z,i}$, if $e \cap L = \emptyset$ or $e \cap R = \emptyset$, then $diam(e) \leq d-k$, and hence $e \not\in M_\ell^*$. Equivalently, for any $e \in M_\ell^*$, we have $e \cap L \neq \emptyset$ and $e \cap R \neq \emptyset$.
\end{observation}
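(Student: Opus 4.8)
The plan is to establish the ``if'' direction by a short case analysis on the two hypotheses, and then obtain the ``Equivalently'' reformulation for free as its contrapositive. Recall that after shifting time indices we may assume $V_{z,i} = \bigbrace{0,1,\ldots,d-1}$, that $L = \bigbrace{0,1,\ldots,k-1}$ and $R = \bigbrace{d-k,\ldots,d-1}$ are its first and last $k$ vertices, and that $\text{diam}(e)$ equals the difference between the largest and smallest indices appearing in $e$. Since $e \in E_{z,i}$, all of its vertices lie in $\bigbrace{0,1,\ldots,d-1}$, which is the only fact I will use about the ambient batch.

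First I would treat the case $e \cap L = \emptyset$. Here no vertex of $e$ has index in $\bigbrace{0,\ldots,k-1}$, so every index of $e$ is at least $k$; combined with the upper bound $d-1$ on indices, this gives $\text{diam}(e) \leq (d-1)-k = d-k-1 \leq d-k$, so $e \in M_s^*$ and hence $e \notin M_\ell^*$. The case $e \cap R = \emptyset$ is entirely symmetric: now every index of $e$ is at most $d-k-1$, while the smallest index is at least $0$, so again $\text{diam}(e) \leq (d-k-1) - 0 \leq d-k$ and $e \notin M_\ell^*$.

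The equivalent statement is then immediate by contraposition: if $e \in M_\ell^*$, i.e.\ $\text{diam}(e) > d-k$, then by the two cases above neither $e \cap L = \emptyset$ nor $e \cap R = \emptyset$ can hold, so $e$ must meet both $L$ and $R$. I do not anticipate any real difficulty; the only point requiring care is the off-by-one bookkeeping in the ranges of $L$ and $R$, which is precisely what makes the diameter bound land at $d-k$ and thereby separate the long edges from the short ones.
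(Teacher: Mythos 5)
Your proof is correct and matches the paper's intent: the paper states this observation without any explicit proof, treating it as immediate, and your two-case index bookkeeping (min index $\geq k$ when $e \cap L = \emptyset$, max index $\leq d-k-1$ when $e \cap R = \emptyset$, each giving $\text{diam}(e) \leq d-k-1 \leq d-k$) is precisely the elementary argument being assumed, with the contrapositive yielding the restated form. No gaps.
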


\noindent Hence $M_\ell^* \cap E_{z,i}$ is a member of the following set: 
\begin{align*}
\cM_{L,R}(W_{z,i}) := \bigbrace{M \text{ a matching in } W_{z,i} : e \cap L \neq \emptyset, e \cap R \neq \emptyset \; \forall \; e \in M}.
\end{align*}

\noindent In other words, $M_\ell^* \cap E_{z,i}$ is a matching in $W_{z,i}$ that is restricted to $L,R$ where we define restricted matchings below. 

\begin{definition}[Restricted Matchings]\label{def:restricted_matchings}
Given a hypergraph $H = (V,E)$ and two disjoint subsets $L,R \subset V, L \cap R = \emptyset$, a collection of edges $M \subset E$ is a matching is restricted to $L,R$ if 
\begin{enumerate}
	\item $M$ is a matching in $H$.
	\item For any $e \in M$, $e \cap L \neq \emptyset$ and $e \cap R \neq \emptyset$.
\end{enumerate}
Finally, we use $\cM_{L,R}(H)$ to denote the set of all matchings in $H$ that are restricted to $L,R$. 
\end{definition}

Since $\abs{L} = \abs{R} = k$ is a constant, $\abs{\cM_{L,R}(W_{z,i})}$ is small (polynomial in $d$). This motivates the \hyperref[alg:dk_greedy]{\texttt{Depth-$k$ Greedy}} algorithm which does the following: For each $M \in \cM_{L,R}(W_{z,i})$, extend $M$ to a maximal matching by adding edges via \texttt{Greedy}, and then among the $\abs{\cM_{L,R}(W_{z,i})}$ resulting maximal matchings, return the one with the largest weight. First, we will show that \texttt{Depth-$k$ Greedy} satisfies \eqref{eqn:hybrid_ratio}. To this end, for any matching $M$ we use $V(M)$ to denote the vertices that are matched by $M$. If $M_{z,i}$ is the matching returned by \texttt{Depth-$k$ Greedy}, we have
\begin{align*}
w(M_{z,i}) &= \max_{M \in \cM_{L,R}(W_{z,i})} w(M) + w(\texttt{Greedy}(V_{z,i} \setminus V(M), E_{z,i})) \\
&\overset{(a)}{\geq} w(M_\ell^* \cap E_{z,i}) + w(\texttt{Greedy}(V_{z,i} \setminus V(M_\ell^* \cap E_{z,i}), E_{z,i})) \\
&\overset{(b)}{\geq} w(M_\ell^* \cap E_{z,i}) + \frac{1}{k} \max_{\substack{M' \text{ s.t.} \\ M' \cup (M_\ell^* \cap E_{z,i}) \text{ is}\\ \text{a matching} \\ \text{in } W_{z,i}}} w(M') \\
&\overset{(c)}{\geq} w(M_\ell^* \cap E_{z,i}) + \frac{1}{k} w(M_s^* \cap E_{z,i}).
\end{align*}
Where $(a)$ is due to $M_\ell^* \cap E_{z,i} \in \cM_{L,R}(W_{z,i})$, $(b)$ is due to \texttt{Greedy} being $\frac{1}{k}$-competitive for \offlinemax{}, and $(c)$ is because $(M_{\ell}^* \cap E_{z,i}) \cup (M_{s}^* \cap E_{z,i}) = M^* \cap E_{z,i}$ is a matching in $W_{z,i}$. Hence \texttt{Depth-$k$ Greedy} satisfies condition \eqref{eqn:hybrid_ratio}. To complete the proof, we next show that the runtime of \texttt{Depth-$k$ Greedy} is polynomial in the size of the problem instance $H$.  

\subsubsection{Runtime of \texttt{Depth-$k$ Randomized Batching:}}
Finally, we need to show that the runtime of \texttt{Depth-$k$ Randomized Batching} is polynomial in the size of $H$. Since $\abs{L} = k$, any matching in $\cM_{L,R}(W_{u,i})$ can have at most $k$ hyperedges. From this, we conclude that
\begin{align*}
\abs{\cM_{L,R}(W_{z,i})} &\leq \sum_{\ell = 1}^{k} {\abs{E_{z,i}} \choose \ell} \leq k {\abs{E_{z,i}} \choose k} \leq k \bigpar{ \frac{\abs{E_{z,i}}}{ek} }^k
\end{align*}
For a batch $W_{z,i}$, it takes $c \abs{E_{z,i}} \log \abs{E_{z,i}} $ time to sort the edges by weight, for some constant $c$. For each $M \in \cM_{L,R}(W_{z,i})$, given that the edges are already sorted, extending $M$ to a maximal matching via \hyperref[alg:offlinegreedy]{\texttt{Greedy}} requires $O(k \abs{E_{z,i}})$ time, since for each edge, it takes $O(k)$ time to check if it is disjoint from the current matching. Doing this for all members of $\cM_{L,R}(W_{z,i})$ would take
\begin{align*}
c \abs{E_{z,i}} \log \abs{E_{z,i}} + \abs{\cM_{L,R}(W_{z,i})} \cdot k \abs{E} &= c \abs{E_{z,i}} \log \abs{E_{z,i}} + \frac{k^2}{(ek)^k} \abs{E_{z,i}}^{k+1} 
\end{align*}
time. Therefore the expected runtime of \texttt{Randomized-Batching}$(\cdot,\cA)$ with $\cA = \texttt{Depth-k Greedy}$ is
\begin{align*}
\frac{1}{d} \sum_{z=0}^{d-1} \sum_{i} \bigpar{ c \abs{E_{z,i}} \log \abs{E_{z,i}} + \frac{k^2}{(ek)^k} \abs{E_{z,i}}^{k+1} } &= O \bigpar{ \frac{1}{d} \sum_{z=0}^{d-1} \sum_{i} \abs{E_{z,i}}^{k+1} }\\
&\leq O \bigpar{ \frac{1}{d} \sum_{z=0}^{d-1} \abs{E_{z}}^{k+1} }  \\
&\leq O \bigpar{ \max_{0 \leq z \leq d-1} \abs{E_{z}}^{k+1} } \\
&\leq O \bigpar{ \abs{E}^{k+1} }.
\end{align*}
This completes the proof.

\section{Appendix: Proofs for Cost Minimization in Ridesharing}

\subsection{Proof of Lemma \ref{lem:mincost:detLB}}\label{pf:lem:mincost:detLB}

Consider the following pair of shareability graphs $H_1,H_2$, where the arrival order of vertices is $A,B,C,D$, the cost of each vertex is $1$, and the cost of every edge is also $1$. The minimum weight matchings are highlighted for each graph. \\
\centerline{
\begin{tabular}{cc}
    \includegraphics[scale = 0.5]{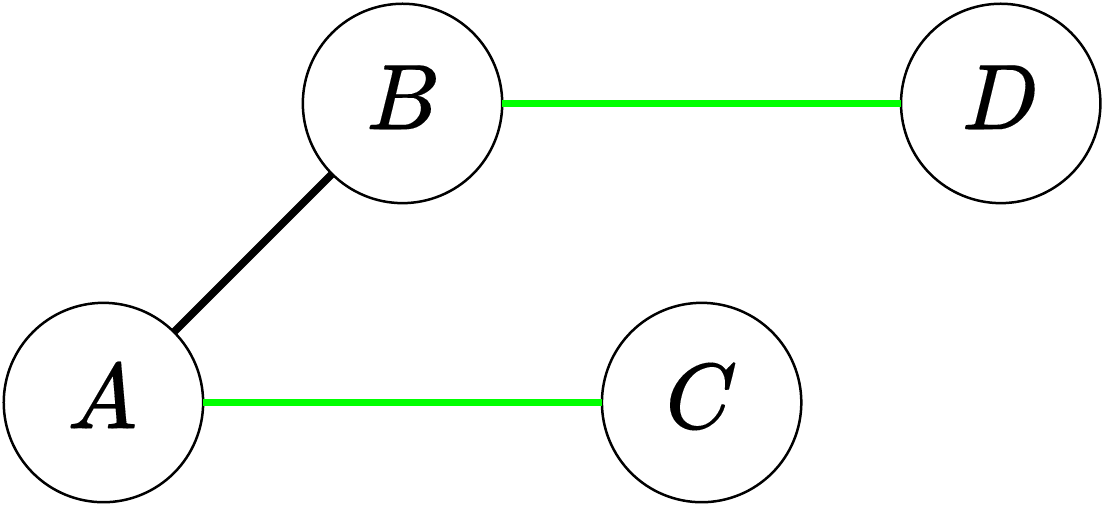} & \includegraphics[scale = 0.5]{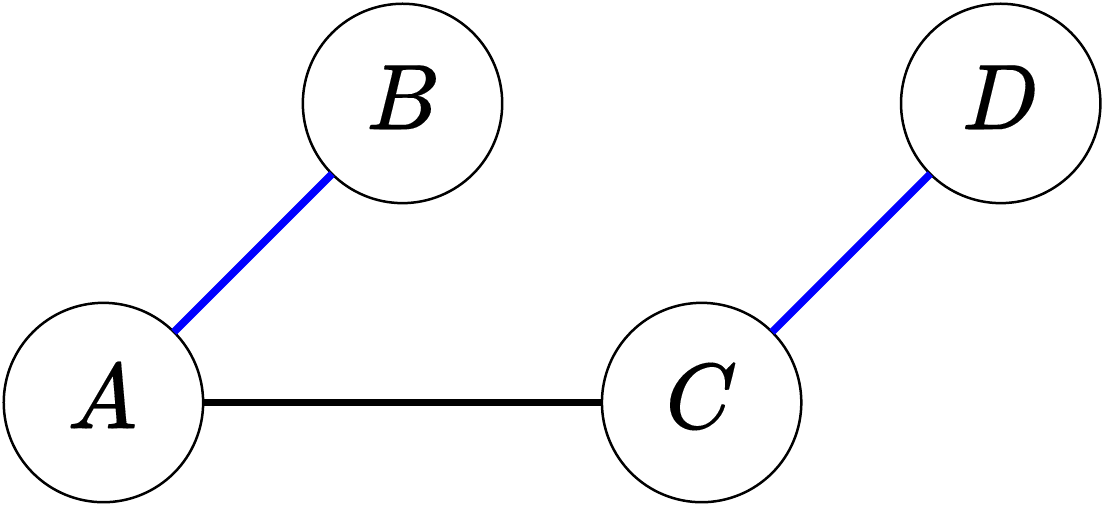} \\
    $H_1$ & $H_2$
\end{tabular}
}

When $A$ becomes critical, there are only 3 possible actions for a deterministic algorithm: 
\begin{itemize}
    \item[(a)] Pair $(A,B)$.
    \item[(b)] Pair $(A,C)$.
    \item[(c)] Don't pair $A$ with anything. 
\end{itemize}

\subsubsection{Proving a $\frac{3}{2}$ lower bound for deterministic algorithms}

Action $(a)$ is optimal for $H_2$, but leads to a non-maximum matching for $H_1$. On the other hand, action $(b)$ is optimal for $H_1$ but leads to a non-maximum matching for $H_2$. \noindent A non-maximum matching has either 0 edges and 4 unmatched vertices, which gives a cost of $4$, or 1 edge and 2 unmatched vertices, which gives a cost of $3$. A maximum matching has 2 edges, and a cost of $2$. Therefore, every non-maximum matching has cost at least $\frac{3}{2}$ times as large as an optimal matching. Before $D$ arrives, the instances $H_1,H_2$ look identical: \\
\centerline{
    \includegraphics[scale = 0.5]{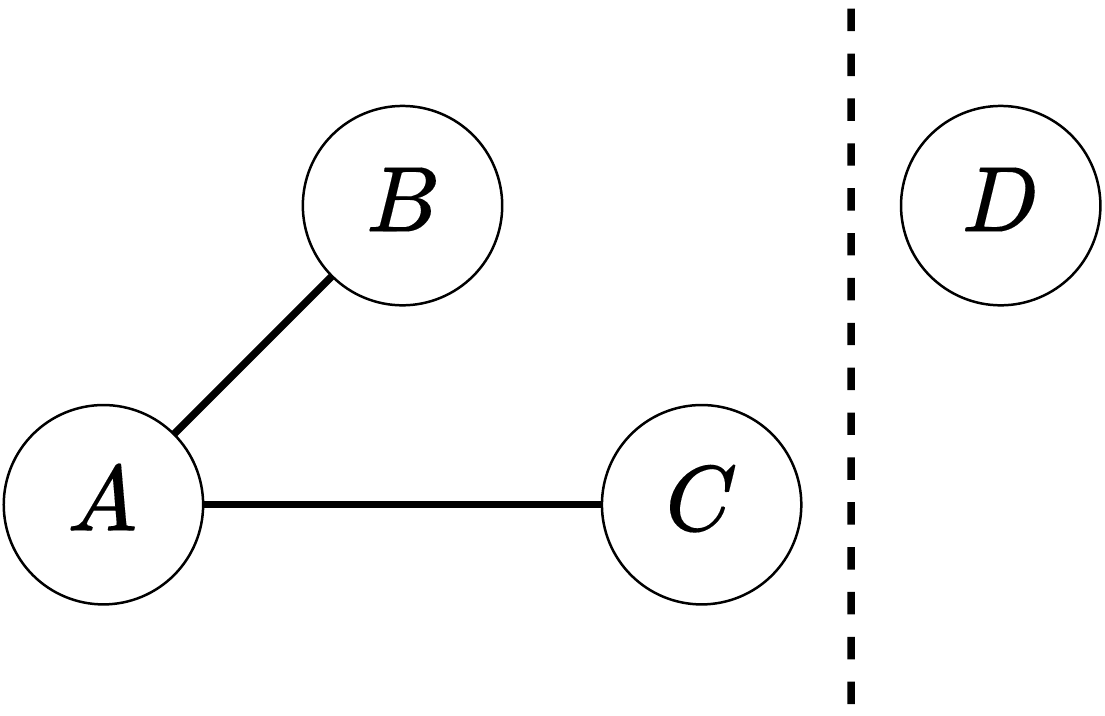}
}
Therefore, if $A$ becomes critical before $D$ arrives, then there is no way to distinguish whether the instance is $H_1$ or $H_2$. Using action $(a)$ will lead to a non-maximum matching if the instance is $H_1$, and using the action $(b)$ leads to a non-maximum matching if the instance is $H_2$. Action $(c)$ always leads to a non-maximum matching. \\ 

\noindent Now we review the observations to prove a lower bound of $\frac{3}{2}$ for all deterministic algorithms. First, on the instances $\bigbrace{H_1,H_2}$, there are only 3 choices for the first action in any deterministic algorithm. Next, for each of the three possible first actions, there is an instance $H \in \bigbrace{H_1,H_2}$ for which this action leads to a non-maximum matching. Finally, for any $H \in \bigbrace{H_1,H_2}$, a non-maximum matching has cost at least $\frac{3}{2}$ times as large as the minimum cost matching. Therefore, for any deterministic algorithm, there is an instance for which the algorithm's cost is at least $\frac{3}{2}$ times that of the minimum cost matching.  

\subsubsection{Proving a $\frac{5}{4}$ lower bound for randomized algorithms}

We will prove a $\frac{5}{4}$ lower bound using Yao's minimax principle. When $A$ becomes critical, the only actions for deterministic algorithms are $\bigbrace{(a),(b),(c)}$. Thus the behavior of any randomized algorithm at the time $A$ becomes critical can be described as a distribution over $\bigbrace{(a),(b),(c)}$. Let $p_a,p_b,p_c$ be the probabilities that actions $(a),(b),(c)$ are taken respectively. Since $p_a + p_b + p_c = 1$, at least one of the following must happen:
\begin{itemize}
    \item[]Case 1: $p_a + p_c \geq \frac{1}{2}$. 
    \item[]Case 2: $p_b + p_c \geq \frac{1}{2}$.  
\end{itemize}
Note that actions $(a),(b)$ lead to non-maximum matchings for the instance $H_1$, and actions $(b),(c)$ lead to non-maximum matchings for the instance $H_2$. If Case 1 is true, then on the instance $H_1$, and letting $M$ be the matching produced by the randomized algorithm,
\begin{align*}
    \mathbb{E}\bigbra{c(M)} &= p_b \mathbb{E} \bigbra{M | \text{first action is } (b)} + (p_a + p_c) \mathbb{E} \bigbra{M | \text{first action } \in \bigbrace{(a),(c)}} \\
    &\geq p_b \cdot 2 + (p_a + p_c) 3 \\
    &\geq 1 + \frac{3}{2} = \frac{5}{2}. 
\end{align*}
Where the first inequality is because $2$ is the minimum cost matching, so in particular it is a lower bound for $\mathbb{E} \bigbra{M | \text{first action is } (b)}$, and because taking action $(a)$ or $(c)$
leads to a non-maximum matching in $H_1$, which has cost at least $3$. Therefore, $\frac{c(M)}{c(M^*)} \geq \frac{5}{2} \div 2 = \frac{5}{4}$. The analysis if Case 2 is true is analogous; the only difference is we apply the analysis to the $H_2$ instance instead. 

\subsection{Proof of Theorem \ref{thm:mincost:detUB}}\label{pf:thm:mincost:detUB}

For an instance $H = (V,E,w)$, let $M^*$ be a minimum weight matching for $H$, and let $U^*$ denote the set of vertices that are unmatched under $M^*$. For each edge $(i,j) \in E$, define $\theta_{ij} := \frac{w(i,j)}{w_i + w_j}$. For $\theta \in \bigbra{\frac{1}{2},1}$, define
\begin{align*}
    E_\theta := \bigbrace{ (i,j) \in E : \theta_{ij} \leq \theta }.
\end{align*}
Let $M$ be the matching produced by \texttt{Risk-Threshold}$(H, \theta)$, and let $U$ be the set of vertices that are unmatched under $M$. Note that $M$ is a maximal matching in $(V, E_\theta, w)$. Define $v^*,v$ in the following way:
\begin{align*}
    v_i^* := \casewise{
        \begin{tabular}{cc}
            $w_i$ & if $i \in U^*$  \\
            $\theta_{ij} w_i$ & if $(i,j) \in M^*$ 
        \end{tabular}
    }
    \text{ and }
    v_i := \casewise{
        \begin{tabular}{cc}
            $w_i$ & if $i \in U$  \\
            $\theta_{ij} w_i$ & if $(i,j) \in M$.
        \end{tabular}
    }
\end{align*}
Note that by construction, $w(M^*) = \sum_{i \in V} v_i^*$ and $w(M) = \sum_{i \in V} v_i$, so $v_i, v_i^*$ denote the contribution of vertex $i$ to the total cost under matchings $M,M^*$ respectively. \\

\noindent Since $M,M^*$ are matchings, their union $M \cup M^*$ is a disjoint union of cycles and paths, i.e. every connected component of $(V,M \cup M^*,w)$ falls into one of the three following categories:
\begin{enumerate}
    \item Vertices of degree 0.
    \item Cycles.
    \item Maximal Paths.
\end{enumerate}
Where the definition of a maximal path is the following. 
\begin{definition}[Maximal Paths]
We say a path $P$ is maximal in a graph $G = (V,E,w)$ if it is not a strict subset of another path. 
\end{definition}
To prove Theorem \ref{thm:mincost:detUB}, we will prove the following claim.
\begin{observation}\label{obs:detUB}
For every connected component in $M \cup M^*$, the cost of the vertices in this connected component under $M$ is no more than $\frac{3}{2}$ times the cost of those same vertices under $M^*$. 
\end{observation}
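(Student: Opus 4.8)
The plan is to prove the per-component inequality asserted in Observation~\ref{obs:detUB}; summing it over all connected components of $M \cup M^*$ then gives $w(M) = \sum_{i \in V} v_i \le \frac{3}{2} \sum_{i \in V} v_i^* = \frac{3}{2} w(M^*)$, which is exactly Theorem~\ref{thm:mincost:detUB}. Since every vertex has degree at most one in each of $M$ and $M^*$, each component is an isolated vertex, a cycle, or a maximal path, so it suffices to treat these three shapes. Two facts will be used throughout. First, Assumptions~\ref{assumption:min:split} and~\ref{assumption:min:sublinear} give $\theta_{ij} \in [\frac12,1]$ for every edge, and more precisely monotonicity gives $\max(w_i,w_j) \le \theta_{ij}(w_i+w_j)$, so the heavier endpoint is at most $\frac{\theta_{ij}}{1-\theta_{ij}}$ times the lighter one (e.g.\ $\theta_{ij}=\frac12$ forces $w_i=w_j$). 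Second, every edge of $M$ lies in $E_\theta$, i.e.\ its threshold is at most $\theta=\frac23$, while edges of $M^*$ may have any threshold in $[\frac12,1]$.

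The easy cases are isolated vertices and vertices that $M$ matches. An isolated vertex is unmatched in both, so $v_i=w_i=v_i^*$. If $i$ is matched in $M$ via $(i,j)$ then $v_i=\theta_{ij}w_i\le\frac23 w_i$; when $i$ is also matched in $M^*$ we have $v_i^*\ge\frac12 w_i$, giving $v_i\le\frac43 v_i^*$, and when $i$ is unmatched in $M^*$ we have $v_i\le\frac23 w_i=\frac23 v_i^*$. This settles isolated vertices, every vertex of a cycle (each is matched by both $M$ and $M^*$), every internal vertex of a path, and every path endpoint whose incident edge lies in $M$. What remains is a \emph{bad} endpoint: a vertex $i$ unmatched in $M$ but matched in $M^*$ via some $(i,k)$, so $v_i=w_i$ and $v_i^*=\theta_{ik}w_i$. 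If $\theta_{ik}>\frac23$ then $v_i=\frac{1}{\theta_{ik}}v_i^*<\frac32 v_i^*$ pointwise; the genuinely problematic case is $(i,k)\in E_\theta$, where maximality of $M$ in $(V,E_\theta,w)$ forces $k$ to be matched in $M$, say by $(k,k')\in M$, so $k$ is the path-neighbour of $i$.

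For such endpoints I would not argue pointwise but prove the inequality $\sum_{i\in P} v_i \le \frac32 \sum_{i\in P} v_i^*$ for the whole maximal path $P$ at once, charging the deficit of each bad endpoint against the surplus of the adjacent $M$-matched vertices. Closing this requires two ingredients beyond maximality: (i) the monotonicity weight-ratio bound above, which shows a small $\theta_{ik}$ forces $w_i$ and $w_k$ to be comparable; and (ii) an algorithmic property of \texttt{Risk-Threshold}, namely that the threshold of $k$'s $M$-edge $(k,k')$ is at most the threshold of the $M^*$-edge incident to whichever endpoint of $(k,k')$ triggered that match. Combining (i) and (ii) with the easy pointwise bounds and clearing denominators reduces the per-path statement to a small family of linear inequalities in the vertex weights that hold precisely because the threshold is set to $\theta=\frac23$ (one can check the extremal configuration $\theta_{ik}=\frac12$ makes them tight).

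The crux, and the step I expect to be the main obstacle, is establishing ingredient~(ii). It rests on a timing argument: because $i$ was left unmatched while $\theta_{ik}\le\theta$, the partner $k$ must have been matched before $i$ became critical, and a short case analysis on arrival and criticality order shows that the $M^*$-partner of the triggering endpoint was present and available at the instant $(k,k')$ was formed; the $\arg\min$ rule in \texttt{Risk-Threshold} then yields the desired threshold inequality. Feeding these inequalities into the per-path bound leaves a finite bookkeeping over the endpoint/trigger configurations (which endpoint triggered which match, and which incident $M^*$-edges themselves lie in $E_\theta$), and I expect this enumeration, rather than any single inequality, to be the most delicate part of the argument.
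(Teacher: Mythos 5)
Your component decomposition, your pointwise bounds for $M$-matched vertices, and your treatment of bad endpoints whose $M^*$-edge lies outside $E_\theta$ are all correct and agree with the paper. The gap is your ingredient~(ii): it is \emph{false} as soon as a maximal path contains more than one $M$-edge, and the local charging you build on it breaks. Concretely, consider the path on vertices $u_1,\dots,u_6$ with $(u_1,u_2),(u_3,u_4),(u_5,u_6)\in M^*$ and $(u_2,u_3),(u_4,u_5)\in M$, all vertex weights equal to $1$, $w(u_1,u_2)=w(u_3,u_4)=w(u_4,u_5)=1$ (threshold $\frac{1}{2}$), $w(u_2,u_3)=\frac{4}{3}$ (threshold $\frac{2}{3}$), $w(u_5,u_6)=1+\epsilon$, with arrival order $u_5,u_4,u_3,u_2,u_1,u_6$ and deadline large enough (say $7$) that all edges are feasible. \texttt{Risk-Threshold} first matches $(u_4,u_5)$ when $u_5$ becomes critical; when $u_3$ becomes critical its cheap partner $u_4$ is already taken, so the $\arg\min$ over \emph{available} vertices selects $u_2$ and forms $(u_2,u_3)$; then $u_1$ and $u_6$ find their only neighbors matched and go unmatched. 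One checks $M^*=\{(u_1,u_2),(u_3,u_4),(u_5,u_6)\}$ is the minimum-cost matching, so $M\cup M^*$ is exactly this path. Here the trigger of $(u_2,u_3)$ is $u_3$, but its $M^*$-partner $u_4$ was unavailable at that instant, and indeed $\theta_{u_2u_3}=\frac{2}{3}>\frac{1}{2}=\theta_{u_3u_4}$, contradicting (ii). Your charge of $u_1$'s deficit against the adjacent $M$-matched vertices also fails numerically: for the group $\{u_1,u_2,u_3\}$ we get $v_{u_1}+v_{u_2}+v_{u_3}=1+\frac{2}{3}+\frac{2}{3}=\frac{7}{3}$, while $\frac{3}{2}(v_{u_1}^*+v_{u_2}^*+v_{u_3}^*)=\frac{3}{2}\cdot\frac{3}{2}=\frac{9}{4}<\frac{7}{3}$. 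The Observation itself still holds on this instance (the whole-path ratio is about $\frac{13}{9}$), but the surplus that rescues $u_1$ sits two edges away, in the $(u_4,u_5)$ region, so the deficit must propagate along the path rather than be absorbed by the adjacent $M$-edge.

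This is exactly why the paper does not use timing information for interior edges at all. Its proof of the path case (Lemma~\ref{lem:mincost:pathbound_master}) peels two edges at a time off the end of the path whose first edge lies in $M^*$; the induction step needs only that the $M$-edge has threshold at most $\theta$ together with the monotonicity bounds $w(0,1)\geq\max(w_0,w_1)$ and $w_2\geq\frac{1-\theta_{12}}{\theta_{12}}w_1$, and no statement about when matches were formed. The criticality/$\arg\min$ argument you describe is invoked only in the base case for type-2 paths of length three, $(a,b),(b,c),(c,d)$ with $(a,b),(c,d)\in M^*\cap E_\theta$ — there both outer vertices are genuinely unmatched in $M$, so the claim that the first of the four vertices to become critical is $b$ or $c$ is valid, and even then it yields only \emph{one} of the two inequalities $\theta_{bc}\leq\theta_{ab}$ or $\theta_{bc}\leq\theta_{cd}$, after which a substantial convex-combination computation (using Fact~\ref{fact:frac_shift}) is still required. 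So the step you defer as "finite bookkeeping" is not bookkeeping: as stated, your plan rests on a timing claim that is false for paths of length five or more, and repairing it forces you into a global, inductive argument along the path — which is the paper's proof.
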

\noindent Once we establish this, Theorem \ref{thm:mincost:detUB} follows by the following implications. Suppose 
\begin{align*}
    M \cup M^* = \bigpar{ \bigcup_{j=1}^p C_j } \bigcup \bigpar{ \bigcup_{\ell=1}^q P_\ell }
\end{align*}
is the partition of $M \cup M^*$ into cycles $\bigbrace{C_j}_{j=1}^p$ and maximal paths $\bigbrace{P_\ell}_{\ell=1}^q$. If for a collection of edges $E'$ we use $V(E')$ to denote the set of vertices that are touched by edges in $E'$, we have 
\begin{align*}
    V = \bigpar{ \bigcup_{j=1}^p V(C_j) } \bigcup \bigpar{ \bigcup_{\ell=1}^q V(P_\ell) } \bigcup (U \cap U^*)
\end{align*}
where recall that $U \cap U^*$ denotes the vertices that are unmatched under both $M$ and $M^*$. Finally we can see that:
\begin{align*}
    w(M) &= \sum_{i \in V} v_i \\
    &= \sum_{j=1}^p \sum_{i \in V(C_j)} v_i + \sum_{\ell=1}^q \sum_{i \in V(P_\ell)} v_i + \sum_{i \in U \cap U^*} v_i \\
    &\overset{(a)}{\leq} \sum_{j=1}^p \frac{3}{2} \sum_{i \in V(C_j)} v_i^* + \sum_{\ell=1}^q \frac{3}{2} \sum_{i \in V(P_\ell)} v_i^* + \sum_{i \in U \cap U^*} v_i^* \\ 
    &\leq \frac{3}{2} \bigpar{ \sum_{j=1}^p \sum_{i \in V(C_j)} v_i^* + \sum_{\ell=1}^q \sum_{i \in V(P_\ell)} v_i^* + \sum_{i \in U \cap U^*} v_i^* } \\
    &= \frac{3}{2} \sum_{i \in V} v_i^* = \frac{3}{2} w(M^*).
\end{align*}
Where $(a)$ is due to Observation \ref{obs:detUB}. We will prove claim 3 by examining each of the 3 types of connected components of $(V,M \cup M^*, w)$. 

\subsubsection{Vertices of degree 0}

If a vertex $i$ has degree zero in the graph $(V,M\cup M^*)$, this means it is unmatched under both $M$ and $M^*$, meaning that $v_i = w_i$ and $v_i^* = w_i$, therefore the cost contribution of $i$ to $M$ is the same as its cost contribution to $M^*$.

\subsubsection{Cycles}\label{pf:mincost:detUB:cycles}

Note that any cycle in $(V,M \cup M^*,w)$ must have even length. A cycle having odd length would imply that there are two edges from the same matching that share a vertex. So consider an even cycle on vertices $\bigbrace{1,2,...,2\ell}$ where the edges $\bigbrace{(2k+1,2k+2)}_{k=0}^{\ell-1}$ belong to $M^*$ and the edges $\bigbrace{(2k, 2k+1)}_{k=1}^{\ell}$ belong to $M$. Since every vertex in a cycle has degree 2, this means that each vertex in the cycle is matched in both $M$ and $M^*$. We thus see that

\begin{align*}
    \sum_{i=1}^{2\ell} v_i = \sum_{k=1}^{\ell} w(2k,2k+1) &\overset{(a)}{\leq} \sum_{k=1}^{\ell} \theta w_{2k}+ \theta w_{2k+1} \\
    &\overset{(b)}{\leq} \sum_{k=1}^{\ell} 2\theta v_{2k}^* + 2\theta v_{2k+1}^* = 2\theta \sum_{i=1}^{2\ell} v_i^*.
\end{align*}
Where $(a)$ is due to the fact that $w(i,j) = \theta_{ij}(w_i+w_j)$, $\theta_e \leq \theta$ for every $e \in E_\theta$ and $M \subset E_\theta$. We get $(b)$ because $\theta \geq \frac{1}{2}$ due to Assumption \ref{assumption:min:split}. We conclude this part by noting that for $\theta = \frac{2}{3}$, $2\theta = \frac{4}{3} \leq \frac{3}{2}$. \\

\subsubsection{Paths}
To analyze the cost contributions of maximal paths, we will use the following definition of restricted cost. 
\begin{definition}[Restricted Cost]
Given a graph $G = (V,E,w)$, an edge sets $E'\subset E$, and a matching $M$, we define $w(M ; E')$ to be the cost of $M \cap E'$ in the subgraph $(V(E'), E', w)$. 
\end{definition}
To prove Observation \ref{obs:detUB} for paths, we will use the following lemma. 
\begin{lemma}[Path cost bound]\label{lem:mincost:pathbound_master}
For any path $P \subset (V, M \cup M^*,w)$, we have:
\begin{align*}
    w(M;P) \leq \frac{3}{2} w(M^*;P). 
\end{align*}
\end{lemma}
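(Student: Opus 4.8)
The plan is to first reduce both restricted costs to sums of the per-vertex scores $v_i,v_i^*$ already defined in the proof of Theorem~\ref{thm:mincost:detUB}. Since $w(i,j)=\theta_{ij}(w_i+w_j)=v_i+v_j$ for any edge $(i,j)$ matched by $M$ (resp. $M^*$), while $v_i=w_i$ for a vertex unmatched by that matching, and since $P$ is a connected component of $M\cup M^*$ (so a vertex of $P$ is uncovered by $M\cap P$ exactly when it is globally unmatched by $M$), one obtains $w(M;P)=\sum_{i\in V(P)}v_i$ and $w(M^*;P)=\sum_{i\in V(P)}v_i^*$. Hence Lemma~\ref{lem:mincost:pathbound_master} is precisely Observation~\ref{obs:detUB} restricted to $P$, and it suffices to show $\sum_{i\in V(P)}\phi_i\ge 0$, where $\phi_i:=\tfrac32 v_i^*-v_i$.

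Next I would classify the vertices of $P$ by their status in $M$ and $M^*$ and compute $\phi_i$ with $\theta=\tfrac23$. An interior vertex of the alternating path (matched in both) satisfies $v_i\le \tfrac23 w_i$ (as $M\subseteq E_\theta$) and $v_i^*\ge\tfrac12 w_i$ (as $\theta_{ij}\ge\tfrac12$), so $\phi_i\ge\tfrac1{12}w_i\ge 0$; an endpoint whose path-edge lies in $M$ (matched in $M$, unmatched in $M^*$) has $v_i\le\tfrac23 w_i$, $v_i^*=w_i$, so $\phi_i\ge\tfrac56 w_i\ge 0$. The only possible deficit comes from a \emph{problematic endpoint}: a vertex $a$ unmatched in $M$ whose single path-edge $(a,b)\in M^*$, for which $\phi_a=(\tfrac32\theta_{ab}-1)w_a$ is negative exactly when $(a,b)\in E_\theta$. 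There are at most two such vertices per path, and here Assumption~\ref{assumption:min:split} becomes essential: monotonicity gives $w(a,b)\ge\max(w_a,w_b)$, hence $\theta_{ab}\ge\frac{\max(w_a,w_b)}{w_a+w_b}$, which is what prevents the deficit from being too large.

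The heart of the argument is to absorb each such deficit into the slack of the $M^*$-neighbor $b$, which by maximality of $M$ in $E_\theta$ must be matched in $M$ to some $c\ne a$ (so $b$ is interior). I would establish the pairwise bound $\phi_a+\phi_b\ge 0$, which unwinds to $\tfrac32 w(a,b)\ge w_a+\theta_{bc}\,w_b$. Feeding in the relation $\theta_{bc}\le\theta_{ab}$ together with $w(a,b)\ge\max(w_a,w_b)$ reduces this to $w(a,b)\ge\frac{2w_a(w_a+w_b)}{3w_a+w_b}$, which holds in both regimes $w_a\ge w_b$ and $w_a<w_b$ by a short computation (the second factoring as $(w_b+2w_a)(w_b-w_a)\ge 0$). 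Summing the nonnegative interior and favorable slacks with these nonnegative problematic pairs — which are vertex-disjoint, since distinct endpoints of the alternating path have distinct $M^*$-neighbors — then yields $\sum_{i\in V(P)}\phi_i\ge 0$. The one relation left to justify is $\theta_{bc}\le\theta_{ab}$: because \texttt{Risk-Threshold} leaves $a$ unmatched, the edge $(a,b)$ was unavailable when $a$ became critical, so $b$ was already matched to $c$; and since the algorithm always matches a critical vertex to its minimum-risk available partner, $\theta_{bc}\le\theta_{ab}$ holds provided $a$ was available when $(b,c)$ was formed.

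I expect this availability/best-partner claim to be the main obstacle. It hinges on the deadline structure — criticality order equals arrival order, and adjacency forces arrival times to lie within $d$ of one another — and splits into cases according to whether $b$ or $c$ triggered the match $(b,c)$ by becoming critical, and whether $a$ had already arrived and not yet departed at that instant. The tightest configuration is a Type~III path whose two endpoints $a,a'$ are both problematic: there one must secure the comparison for \emph{both} endpoints simultaneously, even though a single critical event forming the shared interior $M$-edge a priori delivers only one of the two inequalities, so the timing analysis must be pushed further (comparing the critical times of $a,a',b,c$) to rule out the bad orderings. Once the cycle and degree-zero cases (already handled in the proof of Theorem~\ref{thm:mincost:detUB}) are combined with this path bound, Observation~\ref{obs:detUB} follows and the theorem is complete.
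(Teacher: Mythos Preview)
Your potential-function framework $\phi_i=\tfrac32 v_i^*-v_i$ is a clean reformulation, and the slack bounds for interior and $M$-endpoint vertices are correct. The gap is in the localization step: the pairwise inequality $\phi_a+\phi_b\ge 0$ that you use to absorb a problematic endpoint into its $M^*$-neighbor can genuinely fail, and the comparison $\theta_{bc}\le\theta_{ab}$ on which it rests cannot be secured by a timing argument in general. First, even under your stated proviso ``$a$ available when $(b,c)$ was formed,'' the algorithm only tells you $\theta_{xc}\le\theta_{xj}$ for the \emph{critical} vertex $x\in\{b,c\}$ and available $j$; if $x=c$ this yields $\theta_{cb}\le\theta_{ca}$, not $\theta_{bc}\le\theta_{ab}$. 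Second, $a$ need not be available at all: take $d=3$, arrival times $t_c=1,\,t_b=2,\,t_a=4$, with $w_a=w_b=1$, $w_c=\tfrac12$, $w(a,b)=w(b,c)=1$ (so $\theta_{ab}=\tfrac12$, $\theta_{bc}=\tfrac23$, and Assumptions~\ref{assumption:min:split}--\ref{assumption:min:sublinear} hold). Then $c$ becomes critical at time $3$ and matches $b$, while $a$ arrives only at time $4$ and goes unmatched; one gets $M=\{(b,c)\}$, $M^*=\{(a,b)\}$, $\theta_{bc}>\theta_{ab}$, and $\phi_a+\phi_b=-\tfrac14+\tfrac1{12}=-\tfrac16<0$. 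The full path sum is still nonnegative here (thanks to $\phi_c$), but the slack lives at the \emph{far} $M$-endpoint, not at $b$, so your vertex-disjoint pairing scheme does not capture it, and for longer paths the interior slack $\tfrac1{12}w_i$ is too small to bridge such a deficit.

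The paper sidesteps this by not localizing to two vertices. It inducts on path length, peeling the first $M^*$-edge together with the adjacent $M$-edge and proving the three-vertex inequality
\[
\frac{w_0+\theta_{12}w_1+(\theta_{12}-1)w_2}{w(0,1)}\;\le\;\tfrac32,
\]
which needs only $\theta_{12}\le\theta=\tfrac23$ and monotonicity ($w(0,1)\ge\max(w_0,w_1)$ and $w(1,2)\ge w_1$), \emph{not} $\theta_{12}\le\theta_{01}$. The comparison $\theta_{bc}\le\theta_{ab}$ is invoked only in the length-$3$ Type~2 base case $a\!-\!b\!-\!c\!-\!d$, where it is established by showing that the first vertex among all four to become critical must be $b$ or $c$ (since $a,d$ unmatched in $M$ with $(a,b),(c,d)\in E_\theta$ forces $b$ or $c$ to already be matched when $a$ or $d$ turns critical); one then gets the inequality on a single side, and the other side is handled via monotonicity alone. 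So the fix is not sharper timing analysis but a coarser decomposition: carry one extra vertex (the $(\theta_{12}-1)w_2$ term) through the local bound.
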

\noindent See Section \ref{pf:lem:mincost:pathbound_master} for a proof of Lemma \ref{lem:mincost:pathbound_master}. Once we establish this lemma, we can establish Observation \ref{obs:detUB} for maximal paths using the following observation.

\begin{observation}\label{obs:w_and_v}
If $P$ is a maximal path, $\sum_{i \in V(P)} v_i = w(M ; P)$ and $\sum_{i \in V(P)} v_i^* = w(M^* ; P)$. 
\end{observation}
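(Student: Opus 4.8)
The plan is to prove each of the two identities by expanding the vertex-score sum over $V(P)$ and matching it term-by-term with the restricted cost. I would write out only the first identity, $\sum_{i \in V(P)} v_i = w(M;P)$, since the second follows verbatim after replacing $M, U, v$ by $M^*, U^*, v^*$ throughout.

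The crucial structural fact I would establish first is that, because $P$ is a connected component of $(V, M \cup M^*, w)$, every $M$-edge incident to a vertex of $V(P)$ already lies in $P$. Indeed, if $(i,j) \in M$ with $i \in V(P)$, then $(i,j) \in M \cup M^*$, so it belongs to the same connected component as $i$, namely $P$; hence $(i,j) \in M \cap P$. Consequently, a vertex $i \in V(P)$ is matched by the restricted matching $M \cap P$ in the subgraph $(V(P), P, w)$ if and only if it is matched by $M$ globally. Equivalently, the set of vertices of $V(P)$ left unmatched by $M \cap P$ is exactly $V(P) \cap U$.

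With this in hand, I would split the sum according to the definition of $v$. The vertices of $V(P)$ matched by $M$ occur in pairs $(i,j) \in M \cap P$, and for each such pair the scores combine as $v_i + v_j = \theta_{ij} w_i + \theta_{ij} w_j = \theta_{ij}(w_i + w_j) = w(i,j)$; every remaining vertex of $V(P)$ lies in $V(P) \cap U$ and contributes $v_i = w_i$. This yields $\sum_{i \in V(P)} v_i = \sum_{(i,j) \in M \cap P} w(i,j) + \sum_{i \in V(P) \cap U} w_i$. By the definition of restricted cost, the right-hand side is precisely $w(M;P)$, because the vertices of the subgraph $(V(P), P, w)$ left unmatched by $M \cap P$ are exactly $V(P) \cap U$ by the previous paragraph.

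There is no deep difficulty here; the single point requiring care, and the step I would flag as most easily mishandled, is the bookkeeping that a vertex of $V(P)$ is unmatched by $M \cap P$ in the subgraph if and only if it is globally unmatched by $M$. This equivalence is exactly what makes the unmatched-vertex terms of $\sum_i v_i$ and of $w(M;P)$ coincide, and it rests entirely on the connected-component property of $P$: without it a vertex could in principle be matched by an $M$-edge leaving $P$, breaking the correspondence and invalidating the count.
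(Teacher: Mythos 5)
Your proof is correct and follows essentially the same route as the paper's: both rest on showing that a vertex of $V(P)$ is matched by $M \cap P$ in the subgraph $(V(P),P,w)$ if and only if it is matched by $M$ globally, after which the two sums agree term by term. The only difference is in how this key fact is justified: you invoke the connected-component property of $P$ (no edge of $M \cup M^*$ incident to $V(P)$ can leave $P$), while the paper derives the same fact directly from maximality, observing that an $M$-edge leaving $V(P)$ at a degree-one vertex would extend $P$ to a strictly longer path --- two phrasings of the same structural observation, since in the decomposition of $M \cup M^*$ the maximal paths are exactly the path-type connected components.
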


\begin{proof}[proof of Observation \ref{obs:w_and_v}]
Note that $\sum_{i \in V(P)} v_i \neq w(M ; P)$ only if there is some vertex $i \in V(P)$ so that either
\begin{enumerate}
    \item $i$ is matched under $M \cap P$ in $(V(P),P,w)$ but is not matched under $M$ in $(V,E,w)$, or
    \item $i$ is not matched under $M \cap P$ in $(V(P),P,w)$ but is matched under $M$ in $(V,E,w)$.
\end{enumerate}
The first situation can never happen because $M \cap P \subset M$, so every vertex matched by $M \cap P$ is also matched by $M$. If the second situation happens, this means that the degree of $i$ is 1 in $(V(P),P,w)$ and there exists some $i' \in V \setminus V(P)$ with $(i,i')\in M$. But this implies that $P \cup \bigbrace{(i,i')}$ is a path in $M \cup M^*$, which contradicts the assumption that $P$ is maximal. The proof for $M^*$ is analogous.  
\end{proof}

\noindent Using Observation \ref{obs:w_and_v}, we can prove Observation \ref{obs:detUB} for maximal paths $P$ via
\begin{align*}
    \sum_{i \in V(P)} v_i = w(M ; P) \leq \frac{3}{2} w(M^*; P) = \sum_{i \in V(P)} v_i^*. 
\end{align*}
This completes the proof for Theorem \ref{thm:mincost:detUB}. 

\subsection{Proof of Lemma \ref{lem:mincost:pathbound_master}}\label{pf:lem:mincost:pathbound_master}

We partition the set of maximal paths in $(V, M \cup M^*,w)$ into four categories, and treat each of the categories separately. 
\begin{enumerate}
    \item Length 1 paths. 
    \item Type 1 paths.
    \item Even length paths.
    \item Type 2 paths of length at least 3. 
\end{enumerate}

\noindent Where type 1 and type 2 paths are defined as follows: 
\begin{definition}[Path Types]
A path in $(V, M \cup M^*,w)$ is type 1 if it has an odd length, and contains more edges from $M$ than from $M^*$. A path is type 2 if it has odd length and contains more edges from $M^*$ than from $M$. 
\end{definition}

Proving Lemma \ref{lem:mincost:pathbound_master} for paths in categories 1 and 2 are easy, so we will do this directly. For Categories 2 and 3, we will prove the lemma by induction on the path length. 

\subsubsection{Maximal paths of length 1}\label{pf:thm:mincost:detUB:len1}
Consider a path of length 1 in $M \cup M^*$, i.e. an edge $e = (i,j) \in E$. There are two cases to consider: 1) $e \in M$ and 2) $e \in M^* \setminus M$. \\

\noindent \textit{Case 1:} First note that either $(i,j) \in M^*$, or both $i,j$ are unmatched under $M^*$. To see this, if $(i,j') \in M^*$ for some $j\neq j'$, then the path $\bigbrace{(j,i)}$ would not be maximal since in particular the path $\bigbrace{(j,i), (i,j')} \subset M \cup M^*$. From this we can conclude that that $v_i^* \geq \theta_{ij} w_i, v_j^* \geq \theta_{ij} w_j^*$. Finally,
\begin{align*}
    v_i + v_j = w(i,j) = \theta_{ij} w_i + \theta_{ij} w_j \leq v_i^* + v_j^*. 
\end{align*}

\noindent \textit{Case 2:} Now we consider the case where $e \in M^* \setminus M$. We argue by contradiction that $e \not \in E_\theta$. Suppose $e \in E_\theta$. We can immediately say that neither $i$ nor $j$ matched under $M$, because if any of the vertices of $e$ are matched under $M$, then this path would have more than $1$ edge. Since none of the vertices of $e$ are matched in $M$, then $M \cup \bigbrace{e}$ is a matching in $(V,E_\theta)$. However, this contradicts the fact that $M$ is a maximal matching in $(V,E_\theta)$. Therefore if $e \in M^*$, it must be the case that $e \not\in E_\theta$. From this we see that:
\begin{align*}
    v_i + v_j \leq w_i + w_j = \frac{w(i,j)}{\theta_{ij}} = \frac{v_i^* + v_j^*}{\theta_{ij}} \leq \frac{v_i^* + v_j^*}{\theta} = \frac{3}{2} (v_i^* + v_j^*). 
\end{align*}
Where $\theta_{ij} \geq \theta$ is because $e \not\in E_\theta$, and the last equality is due to $\theta = \frac{2}{3}$. 

\subsubsection{Type 1 Paths}\label{pf:thm:mincost:detUB:type1}

Let $P$ be a type 1 path on the vertices $\bigbrace{1,2,...,2\ell}$ where $\bigbrace{(2k+1,2k+2)}_{k = 0}^{\ell-1} \in M$ and $\bigbrace{(2k,2k+1)}_{k=1}^{\ell-1} \in M^*$. Since $P$ contains more edges from $M$ than $M^*$, the set of vertices in $V(P)$ matched by $M^* \cap P$ is a subset of vertices matched by $M \cap P$. Concretely, note that $M$ matches all vertices $\bigbrace{1,2,...,2\ell}$, whereas $M^*$ matches all of the vertices except $\bigbrace{1, 2\ell}$. We thus see that
\begin{align*}
    \sum_{k=1}^{2\ell} v_{k} &= v_1 + v_{2\ell} + \sum_{k=2}^{2\ell - 1} v_k \\
    &\leq v_1^* + v_{2\ell}^* + \sum_{k=2}^{2\ell - 1}  \theta w_k \\
    &\leq v_1^* + v_{2\ell}^* + \sum_{k=2}^{2\ell - 1}  2\theta v_k^* \\
    &\leq 2\theta \sum_{k=1}^{2\ell} v_{k}^*. 
\end{align*}
We conclude by noting that $\theta = \frac{2}{3}$ implies that $2\theta = \frac{4}{3} \leq \frac{3}{2}$. 

\subsubsection{Even length paths: Base case}\label{pf:thm:mincost:detUB:even_base}
We argue by induction on the length of even paths. The base case is when the path $P = \bigbrace{(a,b),(b,c)}$ has 2 edges where $(a,b) \in M, (b,c) \in M^*$. We will use the following fact:

\begin{fact}[Fractions and convex combinations]\label{fact:frac_shift}
If $x > y>0$ and $z > 0$, then $\frac{x+z}{y+z} \leq \frac{x}{y}$.
\end{fact}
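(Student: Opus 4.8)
The plan is to prove this purely algebraic fact by clearing denominators, which is legitimate precisely because both $y$ and $y+z$ are strictly positive. The inequality $\frac{x+z}{y+z} \leq \frac{x}{y}$ is equivalent, after multiplying both sides by the positive quantity $y(y+z)$, to $y(x+z) \leq x(y+z)$. Expanding each side gives $xy + yz \leq xy + xz$, which simplifies to $yz \leq xz$; and since $z > 0$ this is in turn equivalent to $y \leq x$, which holds by hypothesis. I would carry out these four reductions in exactly this order and then read them backwards to recover the stated claim.

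An alternative route, which explains the name of the fact, is to recognize $\frac{x+z}{y+z}$ as a mediant, hence a convex combination of $\frac{x}{y}$ and $1$. Concretely,
\begin{align*}
\frac{x+z}{y+z} = \frac{y}{y+z} \cdot \frac{x}{y} + \frac{z}{y+z} \cdot 1,
\end{align*}
where the coefficients $\frac{y}{y+z}$ and $\frac{z}{y+z}$ are nonnegative and sum to $1$. Because $x > y > 0$ forces $\frac{x}{y} > 1$, this convex combination of $\frac{x}{y}$ and the smaller value $1$ is bounded above by $\frac{x}{y}$, which is exactly the assertion.

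Since the statement is elementary, there is no genuine obstacle here; the only point demanding care is the positivity bookkeeping, namely checking that every quantity by which we multiply or divide (the denominators $y$ and $y+z$, and the factor $z$) is strictly positive, so that no inequality direction is flipped. Each route collapses to a single line once that bookkeeping is in place. I would present the cross-multiplication argument for brevity while mentioning the convex-combination viewpoint, as the latter clarifies why the fact is the natural tool for controlling the mediant-type ratios that surface in the inductive proof of Lemma~\ref{lem:mincost:pathbound_master}.
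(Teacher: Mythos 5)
Your primary argument is correct and is essentially the paper's own proof: the paper establishes the fact by exactly the same cross-multiplication chain ($xz > yz \implies xy + xz > xy + yz \implies x(y+z) > y(x+z)$, then divide by $y(y+z) > 0$), just written forward from the hypotheses rather than as a backward chain of equivalences. Your convex-combination (mediant) viewpoint is a valid and illuminating bonus, but the core of the proposal coincides with the paper's argument.
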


\begin{proof}[Proof of Fact \ref{fact:frac_shift}]
This follows from the following algebra:
\begin{align*}
    x > y>0, z > 0 \implies xz &> yz \\
    \implies xy+xz &> xy+yz \\
    \implies x(y+z) &> y(x+z) \\
    \implies \frac{x}{y} &> \frac{x+z}{y+z}.
\end{align*}
\end{proof}

\noindent With this, observe that
\begin{align}\label{eqn:evenpath_base}
    \frac{v_a + v_b + v_c}{v_a^* + v_b^* + v_c^*} &= \frac{w(a,b) + w_c}{w_a + w(b,c)} \leq \frac{w(a,b) + w_c}{w_a + \max(w_b,w_c)}
\end{align}
If the numerator in \eqref{eqn:evenpath_base} is smaller than the denominator, we are done. Therefore from this point onward assume that the numerator is larger. There are two cases. If $w_c \leq w_b$, then we apply this bound to the numerator. If $w_b < w_c$, then we apply Fact \ref{fact:frac_shift} with $z = w_c - w_b$. In either case, we get
\begin{align*}
    \frac{v_a + v_b + v_c}{v_a^* + v_b^* + v_c^*} &\leq \frac{w(a,b) + w_b}{w_a + w_b} \\
    &= \frac{\theta_{ab} w_a + \theta_{ab} w_b + w_b}{w_a + w_b} \\
    &= 1 + \frac{(\theta_{ab}-1) w_a + \theta_{ab}w_b}{w_a+w_b} \\
    &\overset{(a)}{\leq} 1 + \frac{(\theta_{ab}-1) \bigpar{ \frac{1-\theta_{ab}}{\theta_{ab}} } w_b + \theta_{ab}w_b}{\bigpar{ \frac{1-\theta_{ab}}{\theta_{ab}} } w_b+w_b} \\
    &= 1 + \frac{ -(1-\theta_{ab})^2 w_b + \theta_{ab}^2 w_b}{w_b} \\
    &= 1 + \theta_{ab}^2 - \bigpar{1 - 2\theta_{ab} + \theta_{ab}^2 } \\
    &= 2\theta_{ab} \leq 2 \theta = \frac{4}{3} \leq \frac{3}{2}. 
\end{align*}
We get $(a)$ because the numerator is a decreasing function of $w_a$ and the denominator is a increasing function of $w_a$, which means the bound is maximized when $w_a$ is minimized. Furthermore, $\theta_{ab} w_a + \theta_{ab} w_b = w(a,b) \geq w_b$ implies that $w_a \geq \bigpar{ \frac{1-\theta_{ab}}{\theta_{ab}} } w_b$. This establishes the base case for even length paths. 

\subsubsection{Induction Step on the Path Length}\label{pf:thm:mincost:detUB:induction}

Since we already bounded the cost contribution of type 1 paths in Section \ref{pf:thm:mincost:detUB:type1}, in this section we focus on paths that are not type 1. In particular, we focus on paths that have at least as many edges from $M^*$ as they do from $M$. Note that any such path has the following property: Either the first edge in the path or the last edge in the path belong to $M^*$. In this section, we will without loss of generality assume that the first edge of the path belongs to $M^*$. If the first edge belongs to $M$, then it must be the case that the last edge belongs to $M^*$, and we can just reverse the ordering of the edges. \\

\noindent Now we begin the induction step. Assume that for all paths $P$ of length $n$ whose vertices are $\bigbrace{0,1,...,n}$ and whose first edge belongs to $M^*$, we have
\begin{align*}
    %\sum_{i = 1}^n v_i \leq \frac{3}{2} \sum_{i=1}^n v_i^*. 
    w(M;P) &\leq \frac{3}{2} w(M^*;P). 
\end{align*}
Now let $P'$ be a length $n+2$ path whose vertices are $\bigbrace{0,1,2,...,n+1,n+2}$ and whose first edge belongs to $M^*$. Define $P$ to be the path containing the edges $\bigbrace{(i,i+1)}_{i=2}^{n+1}$. Note that the first edge of $P$ belongs to $M^*$, and the length of $P$ is $n$. We can write
\begin{align*}
    w(M;P') = \sum_{i=0}^{n+2} v_i &= v_0 + v_1 + (v_2-w_2) + w_2 + \sum_{i=3}^{n+2} v_i \\
    &= v_0 + v_1 + \underbrace{(v_2-w_2)}_{\text{term 1}} + w(M;P)
\end{align*}
Where term $1$ is because of the following. Vertex $2$ is unmatched in $P \cap M$, so it contributes a cost of $w_2$. However, $(1,2) \in (P' \setminus P) \cap M$ reduces the cost contribution of vertex $2$ from $w_2$ to $v_2$. Similarly, we can write 
\begin{align*}
    w(M^*;P') &= v_0^* + v_1^* + \sum_{i=2}^{n+2} v_i^* \\
    &= v_0^* + v_1^* + w(M^*;P). 
\end{align*}
Since $P$ is a length $n$ path whose first edge belongs to $M^*$, we can use the induction hypothesis to conclude that
\begin{align*}
    w(M;P) \leq \frac{3}{2} w(M^*;P).
\end{align*}
Next, we will show that
\begin{align}\label{eqn:pathlen_induction}
    \frac{v_0 + v_1 + (v_2-w_2)}{v_0^* + v_1^*} \leq \frac{3}{2}.
\end{align}
Once we show \eqref{eqn:pathlen_induction}, we can conclude the proof as follows:
\begin{align*}
    w(M;P') &= v_0 + v_1 + (v_2-w_2) + w(M; P) \\
    &\leq \frac{3}{2} \bigpar{ v_0^* + v_1^* } + \frac{3}{2} w(M^*;P) \\
    &= \frac{3}{2} w(M^*;P')
\end{align*}
Where the inequality is obtained by applying \eqref{eqn:pathlen_induction} and the induction hypothesis on $P$. Therefore, all that remains is to prove \eqref{eqn:pathlen_induction}. Observe that

\begin{align*}
    \frac{v_0 + v_1 + (v_2-w_2)}{v_0^* + v_1^*} = \frac{w_0 + \theta_{12} w_1 + (\theta_{12}-1) w_2}{\theta_{01} w_0 + \theta_{01} w_1} \leq \frac{w_0 + \theta_{12} w_1 + (\theta_{12}-1) w_2}{\max \bigpar{w_0,w_1}}.
\end{align*}
We now use an approach similar to what was done in Section \ref{pf:thm:mincost:detUB:even_base}. If the numerator is smaller than the denominator, then we are done. If it is not, then there are two cases. If $w_0 \leq w_1$, we apply this bound to the numerator. If $w_1 < w_0$, then we apply Fact \ref{fact:frac_shift} with $z=w_0-w_1$. In both cases, we get
\begin{align*}
    \frac{v_0 + v_1 + (v_2-w_2)}{v_0^* + v_1^*} &\leq \frac{w_1 + \theta_{12} w_1 + (\theta_{12}-1) w_2}{w_1} \\
    &\overset{(a)}{\leq} \frac{w_1 + \theta_{12} w_1 + (\theta_{12}-1) \bigpar{ \frac{1-\theta_{12}}{\theta_{12}} } w_1 }{w_1} \\
    &= \frac{\theta_{12} + \theta_{12}^2 - (1-\theta_{12})^2}{\theta_{12}} = \frac{\theta_{12} + \theta_{12}^2 - (1-2\theta_{12} + \theta_{12}^2)}{\theta_{12}} \\
    &= \frac{3\theta_{12}-1}{\theta_{12}} \\
    &\overset{(b)}{\leq} \frac{3\theta-1}{\theta} = \frac{3}{2}.
\end{align*}
Where $(a)$ is because the numerator is a decreasing function of $w_2$, so the bound is maximized when $w_2$ is minimized. Next, $\theta_{12} w_1 + \theta_{12}w_2 = w(1,2) \geq w_1$ implies that $w_2 \geq \bigpar{ \frac{1-\theta_{12}}{\theta_{12}} } w_1$. Finally, $(b)$ is because $\theta_{12} \leq \theta$ since $(1,2) \in M \subset E_\theta$, and the function $\frac{3x-1}{x}$ is increasing on $[\frac{1}{2} ,\frac{2}{3}]$. 

\subsubsection{Base case for type 2 paths of length at least 3}

Our base case is a path $P = \bigbrace{(a,b),(b,c),(c,d)}$ where $(a,b),(c,d) \in M^*$ and $(b,c) \in M$. There are two cases we need to consider:

\begin{enumerate}
    \item At most one of $(a,b)$ or $(c,d)$ is in $E_\theta$.
    \item $(a,b),(c,d) \in E_\theta$.
\end{enumerate}

\begin{remark}
One may be tempted to use the result for type 2 paths of length 1 from Section \ref{pf:thm:mincost:detUB:len1} together with the induction step in Section \ref{pf:thm:mincost:detUB:induction} to establish a result for all type 2 paths. This, however, will not work. In section \ref{pf:thm:mincost:detUB:len1} when discussing type 2 paths of length 1, we used the fact that the edge $e \in M^*$ cannot be in $E_\theta$. Therefore applying the result from Section \ref{pf:thm:mincost:detUB:len1} together with the induction step in Section \ref{pf:thm:mincost:detUB:induction} will only prove the result for type 2 paths that contain at least one edge from $E \setminus E_\theta$. There can in general be type 2 paths whose edges are all contained in $E_\theta$. Case 2 of this section addresses such paths. Once we address case 2, then we can apply the induction argument from Section \ref{pf:thm:mincost:detUB:induction} to establish the desired result for all type 2 paths. 
\end{remark}

\noindent We now prove the desired result for each of the two cases. This section, together with the induction argument from Section \ref{pf:thm:mincost:detUB:induction} establishes the desired result for all type 2 paths. \\

\noindent \textit{Case 1:} In this case, either $(a,b) \not\in E_\theta$ or $(c,d) \not\in E_\theta$. Without loss of generality assume that $(a,b) \not\in E_\theta$ (otherwise simply relabel the vertices). Since $(a,b) \not\in E_\theta$, we have $\frac{w(a,b)}{w_a + w_b} > \theta$. We then observe
\begin{align*}
    v_a + v_b \leq w_a + w_b \leq \frac{1}{\theta} w(i,j) \leq \frac{v_a^* + v_j^*}{\theta} = \frac{3}{2}(v_a^* + v_j^*). 
\end{align*}
We now have a path $P_0 = \bigbrace{(a,b)}$ where $w(M;P_0)$ is no more than $\frac{3}{2} w(M^*;P_0)$. Applying the induction argument from Section \ref{pf:thm:mincost:detUB:induction}, we can conclude that 
\begin{align*}
    \frac{w(M;P)}{w(M^*;P)} &\leq \frac{3}{2}.
\end{align*}
This completes the proof of Case 1. \\

\noindent \textit{Case 2:} In this case we have a path $P = \bigbrace{(a,b),(b,c),(c,d)}$ where $(a,b),(c,d) \in M^* \cap E_\theta$ and $(b,c) \in M$. We first show that the first vertex among $\bigbrace{a,b,c,d}$ to become critical must be either $b$ or $c$. Note that if $a$ were the first vertex among $\bigbrace{a,b,c,d}$ to become critical, then in particular $a$ becomes critical before either $b,c$ becomes critical. Since $(b,c)\in M$, no vertex that becomes critical before $b$ and $c$ choose to match with either of them. Therefore, by the time $a$ becomes critical, both $b,c$ are available, meaning that $a$ should be matched, since in particular it can match to $b$. However, since $a$ is unmatched in $M$, it cannot be the case that $a$ is the first vertex among $\bigbrace{a,b,c,d}$ to become critical. The same argument concludes that $d$ cannot be the first to become critical. This means that the first among these vertices to become critical is either $b$ or $c$. Thus without loss of generality assume that $b$ is the first vertex to become critical. When $b$ becomes critical, both $a,c$ are available. Since $b$ chooses to match with $c$, we can conclude that $\theta_{bc} \leq \theta_{ab}$. \\

\noindent We now observe that
\begin{align*}
    \frac{w(M;P)}{w(M^*;P)} &= \frac{w_a + w(b,c) + w_d}{w(a,b) + w(c,d)} \\
    &= \frac{w_a + \theta_{bc} w_b + \theta_{bc} w_c + w_d}{\theta_{ab} w_a + \theta_{ab} w_b + \theta_{cd} w_c + \theta_{cd} w_d} \\
    &\leq \frac{w_a + \theta_{bc} w_b + \theta_{bc} w_c + w_d}{\theta_{ab} w_a + \theta_{ab} w_b + \max(w_c,w_d)}
\end{align*}
If the numerator is smaller than the denominator, then we are done. So from now on we assume that the numerator is larger. There are two cases to consider. If $w_c > w_d$, then we can bound the numerator by $w_a + \theta_{bc} w_b + \theta_{bc} w_c + w_c$. If $w_d > w_c$, then applying Fact \ref{fact:frac_shift} with $z = w_d - w_c$, we get
\begin{align*}
    \frac{w_a + \theta_{bc} w_b + \theta_{bc} w_c + w_d}{\theta_{ab} w_a + \theta_{ab} w_b + \max(w_c,w_d)} = \frac{w_a + \theta_{bc} w_b + \theta_{bc} w_c + w_c + (w_d-w_c)}{\theta_{ab} w_a + \theta_{ab} w_b + w_c + (w_d - w_c)} \leq \frac{w_a + \theta_{bc} w_b + \theta_{bc} w_c + w_c}{\theta_{ab} w_a + \theta_{ab} w_b + w_c}.
\end{align*}
So in either case, we have 
\begin{align*}
    \frac{w(M;P)}{w(M^*;P)} &\leq \frac{w_a + \theta_{bc} w_b + \theta_{bc} w_c + w_c}{\theta_{ab} w_a + \theta_{ab} w_b + w_c} 
\end{align*}
There are two cases to consider. If $w_a \geq w_c$, then
\begin{align*}
    \frac{w(M;P)}{w(M^*;P)} &\overset{(a)}{\leq} \frac{w_a + \theta_{ab} w_b + \theta_{ab} w_c + w_c}{\theta_{ab} w_a + \theta_{ab} w_b + w_c}\\
    &= 1 + \frac{(1-\theta_{ab}) w_a + \theta_{ab} w_c}{\theta_{ab} w_a + \theta_{ab} w_b + w_c} \\
    &\leq 1 + \frac{(1-\theta_{ab}) w_a + \theta_{ab} w_c}{w_a + w_c} \\
    &= 1 + (1-\theta_{ab}) \frac{w_a}{w_a + w_c} + \theta_{ab} \frac{w_c}{w_a+w_c}\\
    &\leq 1 + \frac{1-\theta_{ab}}{2} + \frac{\theta_{ab}}{2} = \frac{3}{2}.
\end{align*}
Where $(a)$ is due to $\theta_{bc} \leq \theta_{ab}$. The other case is that $w_c > w_a$. In this case, we have
\begin{align*}
    \frac{w(M;P)}{w(M^*;P)} &= \frac{w_a + \theta_{bc} w_b + \theta_{bc} w_c + w_c}{\theta_{ab} w_a + \theta_{ab} w_b + w_c} \overset{(a)}{\leq} \frac{w_a + \theta_{ab} w_b + \theta_{bc} w_c + w_c}{\theta_{ab} w_a + \theta_{ab} w_b + w_c}\\
    &= 1 + \frac{(1-\theta_{ab}) w_a + \theta_{bc} w_c}{\theta_{ab} w_a + \theta_{ab} w_b + w_c} \\
    &= 1 + \underbrace{\frac{(1-\theta_{ab}) w_a}{\theta_{ab} w_a} \frac{\theta_{ab} w_a}{\theta_{ab} w_a + \theta_{ab} w_b + w_c} + \frac{\theta_{bc} w_c}{\theta_{ab} w_b + w_c} \frac{\theta_{ab} w_b + w_c}{\theta_{ab} w_a + \theta_{ab} w_b + w_c}}_{\text{term 1}}
\end{align*}
Where $(a)$ is obtained by bounding $\theta_{bc} w_b \leq \theta_{ab} w_b$. We will show that term 1 is maximized when $w_b$ takes its smallest possible value. To this end, note that term 1 is a convex combination of $\frac{1-\theta_{ab}}{\theta_{ab}}$ and $\frac{\theta_{ab} w_c}{\theta_{ab} w_b + w_c}$. Since $w_b \geq \frac{1-\theta_{bc}}{\theta_{bc}} w_c$ and $\theta_{bc} \leq \theta_{ab}$, we have
\begin{align}\label{eqn:len3base_smallterm}
    \frac{\theta_{bc} w_c}{\theta_{ab} w_b + w_c} \leq \frac{\theta_{bc} w_c}{\theta_{bc} w_b + w_c} \leq \frac{\theta_{bc} w_c}{\theta_{bc} \frac{1-\theta_{bc}}{\theta_{bc}} w_c + w_c} = \frac{\theta_{bc}}{2-\theta_{bc}} \leq \frac{\theta_{ab}}{2-\theta_{ab}}. %\leq \frac{1-\theta_{ab}}{\theta_{ab}}.
\end{align}
Next, we note that since $\theta_{ab} \leq \theta = \frac{2}{3}$, we have $\frac{\theta_{ab}}{2-\theta_{ab}} \leq \frac{1-\theta_{ab}}{\theta_{ab}}$. So $\frac{1-\theta_{ab}}{\theta_{ab}}$ is larger than $\frac{\theta_{ab} w_c}{\theta_{ab} w_b + w_c}$ for any valid values\footnote{By this we simply mean $\theta_{bc}\leq \theta_{ab} \leq \theta$.} of $w_a,w_b,w_c$. Since $\frac{\theta_{ab} w_a}{\theta_{ab} w_a + \theta_{ab} w_b + w_c}$ and $\frac{\theta_{ab} w_c}{\theta_{ab} w_b + w_c}$ are decreasing functions of $w_b$, we see that for any valid values of $w_a,w_c$, term 1 is maximized when $w_b$ takes its smallest possible value, that is to say $w_b = \max \bigpar{ \frac{1-\theta_{ab}}{\theta_{ab}} w_a, \frac{1-\theta_{bc}}{\theta_{bc}} w_c }$. From this we get:
\begin{align*}
    \frac{w(M;P)}{w(M^*;P)} &=1 + \frac{(1-\theta_{ab}) w_a}{\theta_{ab} w_a} \frac{\theta_{ab} w_a}{\theta_{ab} w_a + \theta_{ab} w_b + w_c} + \frac{\theta_{bc} w_c}{\theta_{ab} w_b + w_c} \frac{\theta_{ab} w_b + w_c}{\theta_{ab} w_a + \theta_{ab} w_b + w_c}\\
    &\overset{(a)}{\leq}1 + \frac{(1-\theta_{ab}) w_a}{\theta_{ab} w_a} \frac{\theta_{ab} w_a}{\theta_{ab} w_a + \theta_{ab} w_b + w_c} + \frac{\theta_{ab}}{2-\theta_{ab}} \frac{\theta_{ab} w_b + w_c}{\theta_{ab} w_a + \theta_{ab} w_b + w_c}\\
    &\overset{(b)}{\leq} 1+\frac{(1-\theta_{ab})}{\theta_{ab}} \frac{\theta_{ab} w_a}{\theta_{ab} w_a + \theta_{ab} \bigpar{ \frac{1-\theta_{ab}}{\theta_{ab}} }w_a + w_c} + \frac{\theta_{ab}}{2-\theta_{ab}} \bigpar{1-\frac{\theta_{ab} w_a}{\theta_{ab} w_a + \theta_{ab} \bigpar{ \frac{1-\theta_{ab}}{\theta_{ab}} }w_a + w_c}} \\
    &= 1+\frac{(1-\theta_{ab})}{\theta_{ab}} \frac{\theta_{ab} w_a}{w_a + w_c} + \frac{\theta_{ab}}{2-\theta_{ab}} \bigpar{1-\frac{\theta_{ab} w_a}{w_a + w_c}} \\
    &\overset{(c)}{\leq} 1+\frac{(1-\theta_{ab})}{\theta_{ab}} \frac{\theta_{ab} w_a}{w_a + w_a} + \frac{\theta_{ab}}{2-\theta_{ab}} \bigpar{ 1 - \frac{\theta_{ab}w_a}{w_a + w_a} } \\
    &= 1 + \frac{1-\theta_{ab}}{2} + \frac{\theta_{ab}}{2-\theta_{ab}} \bigpar{ \frac{2 - \theta_{ab}}{2} } \\
    &= 1 + \frac{1-\theta_{ab}}{2} + \frac{\theta_{ab}}{2} = \frac{3}{2}. 
\end{align*}
We get $(a)$ from the chain of inequalities in \eqref{eqn:len3base_smallterm}. We get $(b)$ and $(c)$ by increasing the weight of $\frac{1-\theta_{ab}}{\theta_{ab}}$ in the convex combination and reduce the weight of $\frac{\theta_{ab}}{2-\theta_{ab}}$ by the same amount. This leads to an upper bound since $\frac{1-\theta_{ab}}{\theta_{ab}} \geq \frac{\theta_{ab}}{2-\theta_{ab}}$. This completes the base case for case 2.

%\subsection{Proof of Observation \ref{obs:costmin:k>2:setcov_equiv}}\label{pf:obs:costmin:k>2:setcov_equiv}

%\subsection{Proof of Lemma \ref{lem:costmin:k>2:polyLB}}\label{pf:lem:costmin:k>2:polyLB}

\subsection{Proof of Theorem \ref{thm:mincost:hypergraph:UB}}\label{pf:thm:mincost:hypergraph:UB}

The proof proceeds in the following steps. First, we establish an equivalence between the offline version of the problem \offlinemin{} and the weighted set cover problem through Observation \ref{obs:costmin:k>2:setcov_equiv}. Using this equivalence, we use existing results from set cover to prove that it is NP-hard to approximate \onlinemin{} better than $\log k - O(\log \log k)$. This is done in Lemma \ref{lem:costmin:k>2:polyLB}. We conclude by presenting a randomized batching scheme that can convert any algorithm for weighted set cover into an algorithm for \onlinemin{} with a $2-\frac{1}{d}$ overhead in the competitive ratio. This allows us to establish both the optimal competitive ratio and optimal polynomial-time competitive ratio up to a factor of $2-\frac{1}{d}$. 

We will use \texttt{SC}$(X,\cS)$ to refer to the set cover problem with universe $X$ and a collection of sets $\cS$. We use \texttt{WSC}$(X,\cS,w)$ to denote the weighted version of the problem where $w:\cS \rightarrow \mathbb{R}_+$. We use $k$-\texttt{SC} and $k$-\texttt{WSC} to denote instances where $\abs{S} \leq k$ for all $S \in \cS$. See Appendix~\ref{sec:prelims:setcover} for further details and existing results for set cover.
%\end{definition}

\begin{observation}[Equivalence between \offlinemin{} and Weighted Set Cover]\label{obs:costmin:k>2:setcov_equiv}
Due to Assumption \ref{assumption:min:split}, \offlinemin{} on an instance $H = (V,E,w)$ is equivalent to $k$-\texttt{WSC}$(V,E,w)$
\end{observation}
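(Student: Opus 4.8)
The plan is to establish the equivalence by exhibiting cost-nonincreasing conversions in both directions between feasible solutions of \offlinemin{} on $H=(V,E,w)$ and feasible covers of $k$-\texttt{WSC}$(V,E,w)$, and then concluding that the two optimal values coincide. I read ``equivalent'' as meaning that the optimal objectives are equal and that any feasible solution to either problem can be converted in polynomial time into a feasible solution to the other of no greater cost; this is precisely what is needed so that the approximation and hardness results for set cover transfer to \offlinemin{} (and hence, via batching, to \onlinemin{}).

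First I would handle the direction from matchings to set covers. Given any matching $M$ for \offlinemin{} with unmatched vertex set $U$, form the collection $\mathcal{E} := M \cup \bigbrace{ \bigbrace{i} : i \in U }$. Since the model assumes $\bigbrace{i} \in E$ for every $i \in V$, each singleton is a legitimate set, so $\mathcal{E} \subseteq E$; moreover $\mathcal{E}$ covers $V$ because $M$ covers $V \setminus U$ while the singletons cover $U$. Because the cost of serving $i$ individually is exactly $w(\bigbrace{i}) = w_i$, the set-cover cost $\sum_{S \in \mathcal{E}} w(S)$ equals $\sum_{e \in M} w(e) + \sum_{i \in U} w_i = w(M)$. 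Hence every matching yields a cover of identical cost, so the optimal set-cover value is at most the optimal matching value.

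The reverse direction is where Assumption \ref{assumption:min:split} does the real work, and I expect it to be the main obstacle: a cover may use overlapping sets, whereas a matching must consist of pairwise disjoint hyperedges. Starting from any feasible cover $\mathcal{E}$, I would disjointify it by assigning each $i \in V$ to exactly one set $S \in \mathcal{E}$ containing it, and then replacing each $S$ by the subset $S' \subseteq S$ of elements assigned to it. By monotonicity, $S' \in E$ and $w(S') \leq w(S)$; the nonempty $S'$ are pairwise disjoint and thus form a matching $M$ of $H$ that in fact covers all of $V$ (so its unmatched set is empty). Its cost satisfies $w(M) = \sum w(S') \leq \sum_{S \in \mathcal{E}} w(S)$, so the optimal matching value is at most the optimal set-cover value.

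Combining the two inequalities gives equality of the optimal values, and since both conversions are manifestly polynomial-time and never increase cost, an $\alpha$-approximation for one problem yields an $\alpha$-approximation for the other. I would close by emphasizing that Assumption \ref{assumption:min:split} is exactly what guarantees the trimmed subsets remain valid hyperedges of no larger weight; without it the disjointification step could produce sets outside $E$ or of larger cost, breaking the reduction. I would also note that Assumption \ref{assumption:min:sublinear} is not required for this particular equivalence.
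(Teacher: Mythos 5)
Your proposal is correct and follows essentially the same route as the paper: matchings become covers by adding singletons for unmatched vertices, and covers become matchings by disjointifying via Assumption \ref{assumption:min:split} (the paper's version fixes an ordering $S_1,\dots,S_m$ and takes $e_i := S_i \setminus \cup_{j<i} S_j$, which is just a particular instance of your element-assignment scheme). Your closing remarks---that monotonicity is exactly what makes the trimming step legal and that subadditivity is not needed---are accurate and consistent with how the paper uses the observation.
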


\begin{proof}
First we will show that from every matching $M$ for \onlinemin{}, we can construct a cover with the same cost for $k$-\texttt{WSC}$(V,E, w)$. Given any matching $M$ in $H$, let $U$ be the set of vertices that are unmatched under $M$. Defining $\cU := \cup_{i \in U} \bigbrace{i}$, note that $\cE := M \cup \cU$ covers $V$ and 
\begin{align*}
    \sum_{S \in \cE} w(E) &= \sum_{S \in M} w(S) + \sum_{S \in \cU} w(S) = \sum_{e \in M} w(e) + \sum_{i \in U} w(\bigbrace{i}). 
\end{align*}

Next, from every cover $\cE$ of $V$, we  construct a matching in $H$ whose cost is at most the cost of $\cE$ for $k$-\texttt{WSC}$(V,E, w)$. Given a cover $\cE := \bigbrace{S_1,S_2,...,S_m}$, consider the edges $e_i := S_i \setminus \bigpar{ \cup_{j=1}^{i-1} S_j }$ and the set $M := \bigbrace{e_i}_{i=1}^m$. By construction, $\cup_{i=1}^m S_i = \cup_{i=1}^m e_i$, so in particular $M$ is also a cover of $V$. Furthermore, since $e_i \subset S_i$, by Assumption \ref{assumption:min:split}, we have $w(e_i) \leq w(S_i)$, so the cost of $M$ is no greater than the cost of $\cE$. Finally, note that by construction, $e_i \cap e_j = \emptyset$ if $i \neq j$. Therefore $M$ is a matching in $H$ whose cost for \offlinemin{} is no more than the cost of $\cE$ for \texttt{$k$-WSC}. \qed 
\end{proof}

This result allows us to leverage hardness results from set cover \cite{Trevisan01} to prove hardness results for \onlinemin{}. 

\begin{lemma}[Computational Complexity of \onlinemin{}]\label{lem:costmin:k>2:polyLB}
Under Assumption \ref{assumption:min:split}, it is NP-hard to approximate \onlinemin{} within $\log k - O(\log \log k)$. 
\end{lemma}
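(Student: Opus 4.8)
The plan is to reduce from the rank-bounded set cover problem $k$-\texttt{SC}, for which \cite{Trevisan01} proves that approximating within $\log k - O(\log \log k)$ is NP-hard, and to carry this hardness over to \onlinemin{} through the offline problem. The key observation is that \offlinemin{} is nothing more than \onlinemin{} restricted to instances with $d > n$ (Definition \ref{def:offline_cost_min}), so any polynomial-time approximation algorithm for \onlinemin{} solves \offlinemin{} at the same ratio; it therefore suffices to prove the claimed hardness for \offlinemin{}. For the latter I would use the equivalence of Observation \ref{obs:costmin:k>2:setcov_equiv}, which identifies \offlinemin{} on $H = (V,E,w)$ with $k$-\texttt{WSC}$(V,E,w)$, as the bridge between the two problems.

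Concretely, given an instance $k$-\texttt{SC}$(X,\cS)$ with $\abs{S} \leq k$ for all $S \in \cS$, I would construct the shareability hypergraph $H = (V,E,w)$ by setting $V := X$, taking $E$ to be the downward closure $E := \bigbrace{ e \neq \emptyset : e \subseteq S \text{ for some } S \in \cS }$, assigning unit weight $w(e) := 1$ to every $e \in E$, and choosing any $d > n$. I would first check that $H$ is an admissible instance: its rank is at most $k$ since every edge sits inside some $S$ with $\abs{S} \leq k$; every singleton $\bigbrace{i}$ belongs to $E$ because each element of $X$ lies in some set, which guarantees feasibility of \offlinemin{}; the diameter constraint is vacuous as $d > n$; and Assumptions \ref{assumption:min:split} and \ref{assumption:min:sublinear} hold trivially because $E$ is downward closed and all weights equal $1$.

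The substantive step, and the one I expect to require the most care, is to verify that this reduction is approximation preserving, i.e. that populating $E$ with all the subset-edges demanded by Assumption \ref{assumption:min:split} does not make the instance strictly easier to approximate. By Observation \ref{obs:costmin:k>2:setcov_equiv}, \offlinemin{} on $H$ is equivalent to $k$-\texttt{WSC}$(V,E,w)$, whose optimum is the least number of edges of $E$ needed to cover $X$ since all weights are $1$. Because every $e \in E$ is contained in some member of $\cS$, any cover of $X$ by edges of $E$ can be converted into a cover by sets of $\cS$ of no larger cardinality by replacing each chosen edge with a set containing it; conversely $\cS \subseteq E$. Hence the optimum of $k$-\texttt{WSC}$(V,E,w)$ coincides with the optimum $\texttt{OPT}$ of $k$-\texttt{SC}$(X,\cS)$, and the same replacement turns any feasible cover of cost $c$ into a $k$-\texttt{SC} cover of cardinality at most $c$.

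Chaining these maps finishes the argument. A hypothetical polynomial-time $\alpha$-approximation for \onlinemin{}, run on $H$ viewed as an instance with $d > n$, returns a matching $M$ with $w(M) \leq \alpha \, \texttt{OPT}$; Observation \ref{obs:costmin:k>2:setcov_equiv} turns $M$ into a cover of $(V,E,w)$ of cost at most $\alpha\,\texttt{OPT}$, and the replacement step turns that into a $k$-\texttt{SC} cover of cardinality at most $\alpha \, \texttt{OPT}$. Thus $\alpha < \log k - O(\log \log k)$ would give a polynomial-time approximation of $k$-\texttt{SC} beating the Trevisan threshold, forcing $P = NP$. The only genuinely delicate point is the monotone unit-weight choice in the construction: it is exactly what prevents the extra subset-edges from creating cheap covers with no counterpart among the original sets, since replacing a set by a proper subset can only cover fewer elements at the same unit cost and therefore never helps an optimal solution.
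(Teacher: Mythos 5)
Your proposal is correct and takes essentially the same route as the paper's proof: both reduce from Trevisan's hardness for $k$-\texttt{SC} by passing to the downward closure with unit weights, arguing via the replacement step (each closure set $S$ sits inside some original $T_S \in \cS$, so closure covers map back to $\cS$-covers of no larger cardinality) that the closure does not change the optimum, and then transferring the hardness through Observation \ref{obs:costmin:k>2:setcov_equiv} to \offlinemin{} and hence to \onlinemin{} as its $d > n$ special case. Your write-up is somewhat more explicit about instance admissibility and about mapping back approximate (not merely optimal) covers, but the underlying argument is the paper's.
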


\begin{proof}
Observation~\ref{obs:costmin:k>2:setcov_equiv} shows an equivalence between \offlinemin{} and $k$-\texttt{WSC}$(X,\cS,w)$ with the additional condition that $\cS$ is downward closed, i.e. $S \in \cS$ implies that $S' \in \cS$ for any subset $S'$ of $S$, and $w(S') \leq w(S)$. This additional condition is due to Assumption \ref{assumption:min:split}. Let $\overline{\cS}$ denote the downward closure of a collection of sets $\cS$. We will show in every instance of $k$-\texttt{SC}, replacing $\cS$ with $\overline{\cS}$ does not change the optimal solution. This proves that all hardness results for $k$-\texttt{SC} also hold for \offlinemin{}. Since \offlinemin{} is a special case of \onlinemin{}, this allows us to inherit hardness results from $k$-\texttt{SC} to \onlinemin{}. 

Now we prove that $k$-\texttt{SC}$(X,\cS)$ and $k$-\texttt{SC}$(X,\overline{\cS})$ have the same optimal solution. Let $\cE$ be any minimum cardinality set cover for $k$-\texttt{SC}$(X,\overline{\cS})$. For any $S \in \overline{\cS} \setminus \cS$, there exists some $T_S \in \cS$ with $S \subset T_S$. Therefore, by replacing every $S \in \cE$ with the corresponding $T_S$, we obtain a set cover for $k$-\texttt{SC}$(X,\cS)$. Thus $k$-\texttt{SC}$(X,\overline{\cS})$ cannot have a strictly smaller optimal solution than $k$-\texttt{SC}$(X,\cS)$. Hence requiring $\cS$ to be downward closed does not make $k$-\texttt{SC} easier. Trevisan proved in \cite{Trevisan01} that it is NP-hard to approximate $k$-\texttt{SC} better than $\log k - O(\log \log k)$. Therefore, by the above reasoning, it is also NP-hard to approximate \onlinemin{} better than $\log k - O (\log \log k)$. \qed %For any instance $(X,\cS)$ of $k$-\texttt{SC}, we have $\cup_{S \in \cS} S = X$, meaning that for every $x \in X$, there exists some $S_x \in \cS$ so that $x \in S_x$. Now consider the instance when we add in all singletons $(X, (\cup_{x \in X} \bigbrace{x}) \cup \cS )$, and let $\cE$ be a minimum cardinality set cover for this problem. If we replace $\bigbrace{x}$ with $S_x$ for each singleton set that appears in $\cE$, then we have a solution to $k$-\texttt{SC}$(X,\cS)$ whose cardinality is no larger than $\abs{\cE}$. Hence adding singleton sets does not change the minimum set cover cardinality. Trevisan proved in \cite{Trevisan01} that it is NP-hard to approximate $k$-\texttt{SC} better than $\log k - O(\log \log k)$. Therefore, by the above reasoning, it is also NP-hard to approximate \onlinemin{} better than $\log k - O (\log \log k)$. \qed 
\end{proof}

We now leverage the relation to set cover to design polynomial time algorithms for \onlinemin{}. In Section \ref{sec:maxcost:algs} we presented \texttt{Randomized-Batching}$(\cdot,\cA)$ for \onlinemax{}, which is described by Algorithm \ref{alg:randombatch}. We will re-purpose this algorithm for \onlinemin{}. Recall that in \texttt{Randomized-Batching}, every $d$ timesteps we use a \offlinemax{} algorithm $\cA$ to compute a matching of the most recently arrived $d$ vertices. We now prove the main result of Theorem \ref{thm:mincost:hypergraph:UB}: \\

\begin{algorithm} 
\caption{\texttt{Weighted-Greedy}$(H)$}\label{alg:setcov_w_greedy}
\textbf{Input:} Shareability Graph $H = (V,E,w)$\;
\textbf{Output:} Matching $M$\;
$M \leftarrow \emptyset$\;
$U \leftarrow V$\;
\For{$e \in E$}{
$\theta_e \leftarrow \frac{w(e)}{\abs{e}}$
%$\theta_e \leftarrow \frac{w(e)}{\sum_{i \in e} w( \bigbrace{i} )}$
}
\While{$U \neq \emptyset$}{
$e^* \leftarrow \arg\min_{e \subset U} \theta_e$\;
$M \leftarrow M \cup e^*$\;
$U \leftarrow U \setminus e^*$
}
\textbf{Return} $M$
\end{algorithm}

%In this section, we will analyze the variant of \texttt{Randomized-Batching} where $\cA$ is an algorithm for $k$-\texttt{WSC} instead of an algorithm for \offlinemax{}. We now prove the following result:

\noindent \textbf{Main Result:} Let $\cA$ be an algorithm for $k$-\texttt{WSC} with competitive ratio $\rho(\cA)$. Then \texttt{Randomized-Batching}$(\cdot, \cA)$ is $\bigpar{2 - \frac{1}{d}} \rho(\cA)$-competitive for \onlinemin{}.

%%%%%%%%%%%%%%%%%%%%%%%%%%%%%%%%%%%%%%%%%%%%%%%%%%%%%%%%%%%%%%%%%%%%%%%%%%%%%%%%%%%%%%%%%%%%%%%%%%%%%%%%%%%%%%%%%%%%%%%%%%%%%%

\begin{proof}
Let $H = (V,E,w)$ be a shareability graph. For a given value of $z$, recall the definition the batches $\bigbrace{W_{z,i}}_{i=0}^{n/d}$ as $W_{z,i} := \bigpar{ V_{z,i}, E_{z,i}, w }$ where
\begin{align*}
    V_{z,i} &= \bigbrace{ z + i*d + j }_{j=0}^{d-1} \\
    E_{z,i} &= \bigbrace{e \in E : e \subset V_{z,i}}. 
\end{align*}
Let $M^*$ be a minimum weight matching in $H$. Define $\widetilde{M}_z = \cup_{i=1}^{n/d} \widetilde{M}_{z,i}$ where

\begin{align*}
   \widetilde{M}_{z,i} := \bigbrace{ e \cap V_{z,i} : e \in M^*}. 
\end{align*}

\begin{observation}
Note that since $H$ is a shareability hypergraph, for any $e \in E$, $\sum_{i=0}^{n/d} \mathds{1}[e \cap V_{z,i} \neq \emptyset] \leq 2$, i.e. every edge $e \in E$ intersects at most two windows.
\end{observation}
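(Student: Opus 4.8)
The plan is to read the claim as a purely combinatorial statement about how an interval of small diameter can overlap a family of disjoint length-$d$ blocks. For a fixed shift $z$, the windows $\{V_{z,i}\}_i$ are pairwise disjoint and each $V_{z,i}$ consists of exactly the $d$ consecutive integer timesteps in $\{z+id,\dots,z+(i+1)d-1\}$; together they tile $\mathbb{Z}$ with period $d$. Thus $\sum_{i=0}^{n/d}\mathds{1}[e\cap V_{z,i}\neq\emptyset]$ is precisely the number of distinct windows that $e$ meets, and the goal is to show this count never exceeds $2$. The only structural fact I will invoke is the shareability-hypergraph property $\text{diam}(e)<d$ from Definition~\ref{def:shareability}, i.e. any two vertices of $e$ differ by at most $d-1$.

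First I would argue by contradiction: suppose $e$ meets three distinct windows $V_{z,i_1},V_{z,i_2},V_{z,i_3}$ with $i_1<i_2<i_3$, so that $i_3\geq i_1+2$. Choose a vertex $a\in e\cap V_{z,i_1}$ and a vertex $b\in e\cap V_{z,i_3}$. By the definition of the windows, $a\leq z+(i_1+1)d-1$ and $b\geq z+i_3 d\geq z+(i_1+2)d$.

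Next I would combine these endpoint bounds. Subtracting gives $b-a\geq (z+(i_1+2)d)-(z+(i_1+1)d-1)=d+1$. Since $a,b\in e$, we have $b-a\leq \max_{u,v\in e}|u-v|=\text{diam}(e)$, so $\text{diam}(e)\geq d+1>d$, contradicting $\text{diam}(e)<d$. Hence $e$ can meet at most two windows, which is exactly the claimed bound.

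The argument is short and there is no genuinely hard step; the only point requiring care is the bookkeeping of the integer endpoints — in particular verifying that three distinct windows force the index gap $i_3-i_1\geq 2$, and that the resulting separation $d+1$ strictly exceeds the maximal diameter $d-1$ permitted by the shareability condition. It is worth flagging that this intersection bound is logically distinct from Observation~\ref{obs:indicatorsum}: there the shift $z$ varies and one counts windows \emph{containing} $e$, whereas here $z$ is fixed and one counts windows that $e$ merely \emph{meets}, so the two estimates should not be conflated when they are combined in the main proof.
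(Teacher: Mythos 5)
Your proof is correct and matches the paper's argument in all essentials: both proceed by contradiction, note that meeting three windows forces the extreme window indices to differ by at least two, and then pick vertices in the extreme windows to get a separation of at least $d+1$, contradicting $\text{diam}(e) < d$. Your bookkeeping with $i_1 < i_2 < i_3$ is in fact slightly cleaner than the paper's min/max-index formulation, but it is the same proof.
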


\begin{proof}
We can show this easily by contradiction. Suppose there exists $e\in E$ that intersects more than $2$ windows. Then define
\begin{align*}
    j &= \max \bigbrace{ k : e \cap V_{z,k} \neq \emptyset} \\
    i &= \min \bigbrace{ k : e \cap V_{z,k} \neq \emptyset}.
\end{align*}
It must be the case that $i + 1 < j$. If this is not the case, i.e. if $j=i+1$, then $e$ intersects $V_{u,k}$ only if $k \in \bigbrace{i,i+1}$, and thus it does not intersect more than $2$ windows. Therefore, we know that $i+1 < j$. Finally, note that for any $a \in V_{z,i}, b \in V_{z,j}$, we have $b > a + d$. This implies that $\text{diam}(e) > d$, which is a contradiction to the fact that $\text{diam}(e) < d$ promised by $H$ being a shareability hypergraph. \qed
\end{proof}

Now consider the matching produced by \texttt{Randomized-Batching}$(\cdot, \cA)$ given $Z = z$. This procedure computes a matching $M$ as $\cup_{i=1}^{n/d} M_{z,i}$, where $M_{z,i}$ is a set cover of $W_{z,i}$ chosen by $\cA$. Recall from Observation~\ref{obs:costmin:k>2:setcov_equiv} that under Assumption \ref{assumption:min:split}, any set cover can be converted into a matching with the same cost. We thus see that
\begin{align*}
\mathbb{E} \bigbra{w(M) | Z = z} &= \sum_{i=0}^{n/d} \mathbb{E} \bigbra{w(M_{z,i}) | Z = z} \\
&\overset{(a)}{\leq} \sum_{i=0}^{n/d} \rho(\cA) w(\widetilde{M}_{z,i})  \\
&= \rho(\cA) \sum_{i=0}^{n/d} \sum_{e \in M^*} w(e \cap V_{z,i}) \\
&\leq \rho(\cA) \sum_{e \in M^*} w(e) \sum_{i=0}^{n/d} \mathds{1}[e \cap V_{z,i} \neq \emptyset].
\end{align*}
Where $(a)$ is due to the following. Since $\widetilde{M}_{z,i}$ is a matching in $W_{z,i}$, from Observation~\ref{obs:costmin:k>2:setcov_equiv} we know $w(M_{z,i})$ is at least as large as the minimum weight of a set cover of $(V_{z,i}, E_{z,i},w)$. Since $M_{z,i}$ is the output of a $\rho(\cA)$-competitive $k$-\texttt{WSC} algorithm, we have $w(M_{z,i}) \leq \rho(\cA) w(\widetilde{M}_{z,i})$. We now take expectation over $Z$ to bound $\mathbb{E}[w(M)]$:
\begin{align*}
    \mathbb{E} \bigbra{w(M)} &= \mathbb{E}_Z \bigbra{ \mathbb{E} \bigbra{w(M) | Z = z} } \\
    &\leq \mathbb{E}_Z \bigbra{ \rho(\cA) \sum_{e \in M^*} w(e) \sum_{i=0}^{n/d} \mathds{1}[e \cap V_{Z,i} \neq \emptyset] } \\
    &= \rho(\cA) \sum_{e \in M^*} w(e) \mathbb{E}_Z \bigbra{\sum_{i=0}^{n/d} \mathds{1}[e \cap V_{Z,i} \neq \emptyset] }
\end{align*}
For any edge $e \in E$, let $i$ denote its earliest vertex. Note that when $Z = i \mod d$, $\sum_{i=0}^{n/d} \mathds{1}[e \cap V_{z,i} \neq \emptyset] = 1$. Therefore, $\mathbb{P} \bigpar{\sum_{i=0}^{n/d} \mathds{1}[e \cap V_{z,i} \neq \emptyset] = 1} \geq \frac{1}{d}$. Since we already showed that $\sum_{i=0}^{n/d} \mathds{1}[e \cap V_{z,i} \neq \emptyset]$ is always either $1$ or $2$, this means that the probability it is $2$ is at most $1-\frac{1}{d}$. Applying this gives:
\begin{align*}
    \mathbb{E} \bigbra{w(M)} &\leq \rho(\cA) \sum_{e \in M^*} w(e) \mathbb{E}_Z \bigbra{\sum_{i=0}^{n/d} \mathds{1}[e \cap V_{z,i} \neq \emptyset] } \\
    &\leq \rho(\cA) \sum_{e \in M^*} w(e) \bigbra{ 2 \bigpar{1- \frac{1}{d}} +1 \frac{1}{d} } \\
    &=  \bigpar{2 - \frac{1}{d}} \sum_{e \in M^*} w(e) =  \bigpar{2 - \frac{1}{d}} \rho(\cA) w(M^*).
\end{align*}
This completes the proof. \qed 
\end{proof}

%Unlike the algorithms presented in the $k=2$ case, \texttt{Randomized-Batching} requires homogeneous deadlines, i.e. that all users are willing to wait $d$ timesteps to be matched. 

We conclude with several remarks. If $\cA$ computes a minimum weight set cover, i.e. $\rho(\cA) = 1$, then $\texttt{Randomized-Batching}(\cdot, \cA)$ is $(2- \frac{1}{d})$-competitive by Theorem \ref{thm:mincost:hypergraph:UB}. However, it is well known that $k$-\texttt{WSC} is an NP-hard problem, so there is likely no polynomial-time algorithm which can find minimum weight set covers.  \\ 
\indent The \texttt{Weighted-Greedy} algorithm shown in Algorithm \ref{alg:setcov_w_greedy} runs in polynomial-time and is $(1+\log k)$-competitive for $k$-\texttt{WSC}. This is optimal up to lower order terms, given the $\log k - O(\log \log k)$ lower bound of Trevisan \cite{Trevisan01} for $k$-\texttt{SC}, which is a special case of $k$-\texttt{WSC}. The following result is obtained from Theorem \ref{thm:mincost:hypergraph:UB} with $\cA = \texttt{Weighted-Greedy}$. Running Algorithm \ref{alg:randombatch} with \texttt{Weighted-Greedy} as a subroutine gives a $\bigpar{2-\frac{1}{d}}(1 + \log k)$-competitive algorithm for \onlinemin{}. Lemma \ref{lem:costmin:k>2:polyLB} provides a complementary lower bound, establishing the optimal in polynomial-time competitive ratio for \onlinemin{} to be $O(\log k)$. 

%\texttt{Weighted-Greedy} is similar, but not quite the same as the weighted greedy algorithm shown to be $(1+\log k)$-competitive in \cite{Chvatal79}. Nevertheless, a similar proof can be used to establish the same competitive ratio for \texttt{Weighted-Greedy}. 

%\begin{lemma}[A $O(\log k)$-Competitive algorithm for \onlinemin{}]
%\texttt{Randomized-} \\\texttt{Batching}$(\cdot, \cA)$ with $\cA =$\texttt{Weighted-Greedy} is $\bigpar{2-\frac{1}{d}}(1 + \log k)$-competitive for \onlinemin{}. 
%\end{lemma}

\fi % appendix condition
\end{document}